\newenvironment{reminder}[1]{\medskip
\noindent {\bf Reminder of #1.\  }\em}{\smallskip}
\newtheorem{fact}{Fact}[section]
\newtheorem{theorem}{Theorem}[section]
\newtheorem{remark}{Remark}
\newtheorem{corollary}{Corollary}[section]
\newtheorem{lemma}{Lemma}[section]
\newtheorem{definition}{Definition}[section]
\newenvironment{proofof}[1]{\medskip
\noindent {\bf Proof of #1.  }}{\hfill$\Box$
\medskip}
\def \SUM {{\sf SUM}}
\def \MOD {{\sf MOD}}
\def \ACC {{\sf ACC}}
\def \XOR {{\sf XOR}}
\def \OR {{\sf OR}}
\def \TC {{\sf TC}}
\def \AC {{\sf AC}}
\def \MAJ {{\sf MAJ}}
\def \AND {{\sf AND}}
\def \THR {{\sf LTF}}
\def \LTF {\THR}
\def \NP {{\sf NP}}
\def \SYM {{\sf SYM}}
\def\eps{\varepsilon}
\def\poly{\text{poly}}
\def \Z {{\mathbb Z}}
\def \R {{\mathbb R}}
\def \F {{\mathbb F}}
\def\polylog{\operatorname{polylog}}
\newcommand{\D}{\Delta}
\newcommand{\down}[1]{\left\lfloor#1\right\rfloor}
\newcommand{\up}[1]{\left\lceil#1\right\rceil}
\newcommand{\predicate}[1]{\left[ #1 \right]}
\newcommand{\OO}{\tilde{O}}
\newcommand{\IGNORE}[1]{}
\def \E {\mathbb E}
\newcommand{\brac}[1]{\left[ #1 \right]}
\def \TH {{\rm TH}}
\newcommand{\PP}{\widetilde{P}}
\begin{document}
\title{Polynomial Representations of Threshold Functions and Algorithmic Applications}

\date{}
\author{
Josh Alman\footnote{Computer Science Department, Stanford University, {\tt jalman@cs.stanford.edu}. Supported by NSF CCF-1212372 and NSF DGE-114747.} \and
Timothy M. Chan\footnote{Cheriton School of Computer Science, University of Waterloo, {\tt tmchan@uwaterloo.ca}.  Supported by an NSERC grant.} \and
Ryan Williams\footnote{Computer Science Department, Stanford University, {\tt rrw@cs.stanford.edu}. Supported in part by NSF CCF-1212372 and CCF-1552651 (CAREER). Any opinions, findings and conclusions or recommendations expressed in this material are those of the authors and do not necessarily reflect the views of the National Science Foundation.} 
}

\maketitle

\begin{abstract}
We design new polynomials for representing threshold functions in three different regimes: \emph{probabilistic polynomials} of low degree, which need far less randomness than previous constructions, \emph{polynomial threshold functions} (PTFs) with ``nice'' threshold behavior and degree almost as low as the probabilistic polynomials, and a new notion of \emph{probabilistic PTFs} where we combine the above techniques to achieve even lower degree with similar ``nice'' threshold behavior. Utilizing these polynomial constructions, we design faster algorithms for a variety of problems: 

\begin{itemize} 
    \item {\bf Offline Hamming Nearest (and Furthest) Neighbors:} Given $n$ red and $n$ blue points in $d$-dimensional Hamming space for $d = c \log n$, we can find an (exact) nearest (or furthest) blue neighbor for every red point in randomized time $n^{2 - 1/O(\sqrt{c}\log^{2/3} c)}$ or deterministic time $n^{2 - 1/O(c\log^{2} c)}$. These improve on a randomized $n^{2 - 1/O(c\log^{2} c)}$ bound by Alman and Williams (FOCS'15), and also lead to faster MAX-SAT algorithms for sparse CNFs.
    
    \item {\bf Offline Approximate Nearest (and Furthest) Neighbors:} Given $n$ red and $n$ blue points in $d$-dimensional $\ell_1$ or Euclidean space, we can find a $(1+\eps)$-approximate nearest (or furthest) blue neighbor for each red point in randomized time near $dn+n^{2-\Omega(\eps^{1/3}/\log(1/\eps))}$. This improves on an algorithm by Valiant (FOCS'12) with randomized time near $dn+n^{2 - \Omega(\sqrt{\epsilon})}$, which in turn improves previous methods based on locality-sensitive hashing.


    \item {\bf SAT Algorithms and Lower Bounds for Circuits With Linear Threshold Functions:} We give a satisfiability algorithm for $\AC^0[m]\circ \LTF\circ \LTF$ circuits with a subquadratic number of linear threshold gates on the bottom layer, and a subexponential number of gates on the other layers, that runs in deterministic $2^{n-n^{\eps}}$ time. This strictly generalizes a SAT algorithm for $\ACC^0 \circ \LTF$ circuits of subexponential size by Williams (STOC'14) and also implies new circuit lower bounds for threshold circuits, improving a recent gate lower bound of Kane and Williams (STOC'16). We also give a randomized $2^{n - n^\eps}$-time SAT algorithm for subexponential-size  $\MAJ \circ \AC^0 \circ \LTF \circ \AC^0 \circ \LTF$ circuits, where the top $\MAJ$ gate and middle $\LTF$ gates have $O(n^{6/5-\delta})$ fan-in. 
\end{itemize}

\end{abstract}

\thispagestyle{empty}
\newpage
\setcounter{page}{1}

\section{Introduction}

The polynomial method is a powerful tool in circuit complexity. The idea of the method is to transform all circuits of some class into ``nice'' polynomials which represent the circuit in some way. If the polynomial is always sufficiently nice (e.g. has low degree), and one can prove that a certain Boolean function $f$ cannot be represented so nicely, one concludes that the circuit class is unable to compute $f$.

Recently, these tools have found surprising uses in algorithm design. If a subproblem of an algorithmic problem can be modeled by a simple circuit, and that circuit can be transformed into a ``nice'' polynomial (or ``nice'' distribution of polynomials), then fast algebraic algorithms can be applied to evaluate or manipulate the polynomial quickly. This approach has led to advances on problems such as All-Pairs Shortest Paths~\cite{WilliamsAPSP14}, Orthogonal Vectors and Constraint Satisfaction \cite{Williams-Yu14,AbboudWY15,WilliamsFSTTCS14}, All-Nearest Neighbor problems \cite{JoshRyan}, and Stable Matching~\cite{MoellerPaturiSchneider}.

In most applications, the key step is to randomly convert simple circuits into so-called \emph{probabilistic} polynomials. If $f$ is a Boolean function on $n$ variables, and $R$ is a ring, a \emph{probabilistic polynomial over $R$ for $f$ with error $1/s$ and degree $d$} is a distribution ${\cal D}$ of degree-$d$ polynomials over $R$ such that for all $x \in \{0,1\}^n$, $\Pr_{p \sim {\cal D}}[p(x) = f(x)] \geq 1-\frac{1}{s}$. Razborov~\cite{Razborov} and Smolensky~\cite{Smolensky87} introduced the notion of a probabilistic polynomial, and showed that any low-depth circuit consisting of AND, OR, and PARITY gates can be transformed into a low degree probabilistic polynomial by constructing constant degree probabilistic polynomials for those three gates. Many polynomial method algorithms use this transformation.

In this work, we are interested in polynomial representations of threshold functions. The threshold function $\TH_\theta$ determines whether at least a $\theta$ fraction of its input bits are 1s. Threshold functions are among the simplest Boolean functions that do not have constant degree probabilistic polynomials: Razborov and Smolensky showed that the MAJORITY function (a special case of a threshold function) requires degree $\Omega(\sqrt{n \log s})$. Nonetheless, as we will see throughout this paper, there are many important problems which can be reduced to evaluating circuits involving threshold gates on many inputs, and so further study of polynomial representations of threshold functions is warranted.

Threshold functions have been extensively studied in theoretical computer science for many years; there are numerous applications of linear and polynomial threshold functions to complexity and learning theory (a sample includes ~\cite{BeigelRS91,BruckS92,Aspnes94,Beigel95thepolynomial,Klivans-Servedio01,ODonnellS10,Sherstov14}). 

\subsection{Our Results}

We consider three different notions of polynomials representing $\TH_\theta$. Each achieves different tradeoffs between polynomial degree, the randomness required, and how accurately the polynomial represents $\TH_\theta$. Each leads to improved algorithms in our applications.

\textbf{Less Randomness.} First, we revisit probabilistic polynomials. Alman and Williams \cite{JoshRyan} designed a probabilistic polynomial for $\TH_\theta$ which already achieves a tight degree bound of $\Theta(\sqrt{n \log s})$. However, their construction uses $\Omega(n)$ random bits, which makes it difficult to apply in deterministic algorithms. We show how their low-degree probabilistic polynomials for threshold functions can use substantially fewer random bits:

\begin{theorem} \label{derand}
For any $0 \leq \theta \leq 1$, there is a probabilistic polynomial for the function $\TH_\theta$ of degree $O(\sqrt{n \log s})$ on $n$ bits with error $1/s$ that can be randomly sampled using only $O(\log n \log(ns))$ random bits.
\end{theorem}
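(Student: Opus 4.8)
The plan is to keep the probabilistic polynomial for $\TH_\theta$ of Alman and Williams~\cite{JoshRyan} essentially intact --- it already has the optimal degree $O(\sqrt{n\log s})$ --- and only to change how its random choices are sampled. The first observation is that the randomness in their construction is spent in a very restricted way: on sampling a small family of ``random subsets'' of $[n]$ (equivalently, random hash values), used, in their argument, to subsample coordinates and, inside a Razborov--Smolensky-style probabilistic $\OR$ polynomial, to take random parities; everything else is deterministic interpolation by symmetric polynomials, written in the elementary-symmetric basis $e_0,\dots,e_d$ so that a degree-$d$ polynomial in the Hamming weight $k=\sum_i x_i$ turns into a degree-$d$ polynomial in $x$. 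So the first step is to rewrite the construction so that the only random objects are (a) $O(1)$ subsets $R\subseteq[n]$ used for subsampling and (b) $t=O(\log(ns))$ subsets $S_1,\dots,S_t\subseteq[n]$ feeding the $\OR$, and so that correctness reduces to two kinds of events: (i) each relevant partial sum $\sum_{i\in R}x_i$ (or a subsampled count) lies within $O(\sqrt{n\log(ns)})$ of its expectation, and (ii) for each nonempty ``target'' set $Y$ that can arise, at least one of $|S_1\cap Y|,\dots,|S_t\cap Y|$ is odd. A union bound over the $O(n)$ windows/targets, each at error $1/(ns)$, then leaves total error $1/s$, while the degree stays $O(\sqrt{n\log s})$ (the additive $O(t)=O(\log(ns))$ contributed by the $\OR$ is absorbed).

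The technical core is to verify (i) and (ii) when the subsets are drawn not uniformly but from small sample spaces. For (i) it suffices that the indicators $\{i\in R\}_{i\in[n]}$ are $k$-wise independent with $k=\Theta(\log(ns))$: the $k$-th moment tail bound $\Pr[\,|X-\E X|\ge\lambda\,]\le O\!\left(k\,\mathrm{Var}(X)/\lambda^2\right)^{k/2}$ (and the sharper estimate available for bounded-independence sums) gives, with $\mathrm{Var}(X)=O(n)$, $\lambda=\Theta(\sqrt{n\log(ns)})$, and $k=\Theta(\log(ns))$, a failure probability below $1/(ns)$ --- exactly what a Chernoff bound gave under full independence, and at the \emph{same} deviation scale, so the degree is unaffected. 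For (ii), it suffices that each $S_j$ is drawn from a $\delta$-biased distribution over $\{0,1\}^n$ for a constant $\delta$, since then every nonempty parity is odd with probability in $[1/2-\delta/2,\,1/2+\delta/2]$, so a target survives all $t$ rounds with probability $\le(1/2+\delta/2)^t\le 1/(ns)$ once the $S_j$ are mutually independent. I expect this verification --- pushing every concentration and anti-concentration estimate in the original analysis through with only $\Theta(\log(ns))$-wise independence and constant bias \emph{simultaneously}, and checking that the weaker tail bounds never force a larger deviation scale (which would inflate the degree) --- to be the main obstacle; it is mostly careful bookkeeping, but it is where the argument could genuinely break.

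Finally, the random-bit count. Each $S_j$ costs $O(\log(n/\delta))=O(\log n)$ bits (a $\delta$-biased distribution over $\{0,1\}^n$ with $\delta=\Theta(1)$), and since the $t=O(\log(ns))$ rounds are sampled with fresh, independent randomness this contributes $t\cdot O(\log n)=O(\log(ns)\,\log n)$ bits in total. Each of the $O(1)$ subsampling subsets $R$ needs $\Theta(\log(ns))$-wise independence over $[n]$, which the classical construction from degree-$\Theta(\log(ns))$ polynomials over $\F_q$ with $q=\poly(n)$ provides using $O(\log(ns)\,\log n)$ bits. Summing, the total is $O(\log(ns)\,\log n)=O(\log n\,\log(ns))$ random bits, as claimed. (If the rewritten construction instead needs the $S_j$ jointly $\Theta(\log(ns))$-wise independent across all $tn$ positions rather than per-round, that too costs only $O(\log(ns)\,\log(tn))=O(\log n\,\log(ns))$ bits, so either accounting suffices.)
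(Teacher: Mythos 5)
Your high-level strategy matches the paper's: keep the Alman--Williams construction, swap truly-random sampling for limited-independence sampling, and push the analysis through with a bounded-independence Chernoff bound. Your concrete account of the error budget per sample ($\Theta(\log(ns))$-wise independence gives failure probability $\lesssim 1/(ns)$ at deviation $\Theta(\sqrt{n\log(ns)})$, same scale as the fully-independent case, so the degree is unchanged) is exactly the paper's Corollary~\ref{chernoff} via the SSS moment bound, and your final bit count $O(\log n\log(ns))$ is correct.

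However, your structural description of the Alman--Williams polynomial is wrong, and the part of your argument built on it doesn't apply. That construction has no Razborov--Smolensky ${\sf OR}$ over random parities, so there is no anti-concentration event~(ii) and no role for small-bias sets. It is a recursion with $\Theta(\log n)$ levels: at level $j$ it subsamples $n/10^j$ coordinates from the previous level's $n/10^{j-1}$ coordinates, decides via a probabilistic ``window test'' $S$ whether the Hamming weight falls in a narrow band around $\theta$, uses an exact interpolation polynomial $A$ on the band, and otherwise recurses on the subsample. The only random objects are the $\Theta(\log n)$ nested subsamples, not ``$O(1)$ subsets $R$ plus $t$ parity sets $S_j$.'' Consequently, what actually needs derandomizing is a \emph{chain} of subsamples whose error budget shrinks geometrically ($\eps/4^j$ at level $j$), so the required independence grows with the level: $O(\log(4^j/\eps))$-wise at level $j$, maxing out at $O(\log(n/\eps))=O(\log(ns))$. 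Summed over levels this is $O(n)$ samples from $[n]$ under a single $O(\log(ns))$-wise independent family, which is what yields $O(\log n\log(ns))$ bits. Your proposal would have to first prove that the alternative ``OR of parities'' polynomial you describe even computes $\TH_\theta$ with degree $O(\sqrt{n\log s})$ --- which is not established and is not what the reference construction does --- before the small-bias analysis could be relevant; as written, that half of your argument addresses a construction that doesn't exist.
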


\textbf{Polynomial Threshold Function Representations.} Second, we consider deterministic Polynomial Threshold Functions (PTFs). A PTF for a Boolean function $f$ is a polynomial (\emph{not} a distribution on polynomials) $p : \{0,1\}^n \to \R$ such that $p(x)$ is smaller than a fixed value when $f(x) = 0$, and $p(x)$ is larger than the value when $f(x)=1$.  In our applications, we seek PTFs with ``good threshold behavior'', such that $|p(x)|\le 1$ when $f(x)=0$, and $p(x)$ is very large otherwise. We can achieve almost the same degree for a PTF as for a probabilistic polynomial, and even better degree for an approximate threshold function:

\begin{theorem}\label{thm:det:poly}
We can construct a polynomial $P_{s,t,\eps}:\R\rightarrow\R$ of degree $O(\sqrt{1/\eps}\log s)$, such that
\begin{itemize}
\item if $x\in\{0,1,\ldots,t\}$, then $|P_{s,t,\eps}(x)|\le 1$;
\item if $x\in (t,(1+\eps)t)$, then $P_{s,t,\eps}(x)>1$;
\item if $x\ge (1+\eps)t$, then $P_{s,t,\eps}(x)\ge s$.
\end{itemize}
For the ``exact'' setting with $\eps=1/t$, we can alternatively bound the degree by $O(\sqrt{t\log(st)})$.
\end{theorem}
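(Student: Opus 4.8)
\medskip
\noindent\textbf{Proof plan.}
Both degree bounds come from Chebyshev polynomials, by different constructions. \emph{First bound.} Let $d=\up{\sqrt{1/\eps}\,\ln(2s)}=O(\sqrt{1/\eps}\log s)$, let $T_d$ be the degree-$d$ Chebyshev polynomial of the first kind, and set $P_{s,t,\eps}(x):=T_d\!\left(\frac{2x-t}{t}\right)$, the affine rescaling sending $[0,t]$ onto $[-1,1]$. The three bullets are then standard facts about $T_d$: (i)~for $x\in[0,t]$ (in particular $x\in\{0,1,\dots,t\}$) the argument lies in $[-1,1]$, so $|P_{s,t,\eps}(x)|\le 1$; (ii)~for $x\in(t,(1+\eps)t)$ the argument lies in $(1,1+2\eps)$, and $T_d$ is strictly increasing on $[1,\infty)$ with $T_d(1)=1$, so $P_{s,t,\eps}(x)>1$; (iii)~for $x\ge(1+\eps)t$ the argument is $\ge 1+2\eps$, and the growth estimate $T_d(1+\gamma)\ge\frac12(1+\sqrt{2\gamma})^{d}$ with $\gamma=2\eps$ gives $P_{s,t,\eps}(x)\ge\frac12(1+2\sqrt\eps)^{d}$, which is $\ge s$ for $d=\Theta(\sqrt{1/\eps}\log s)$, extended to all larger $x$ by monotonicity of $T_d$ past $1$. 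This part is routine.

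\emph{Second bound.} For $\eps=1/t$ the construction above only gives $O(\sqrt t\log s)$, which beats $O(\sqrt{t\log(st)})$ only when $s$ is tiny, so a new idea is needed. The slack to exploit is that $|P|\le1$ is required only at the integers $\{0,1,\dots,t\}$, not on all of $[0,t]$: a polynomial may swing far outside $[-1,1]$ between consecutive integers yet stay tame at the integers. I would therefore use a polynomial of degree $d\asymp\sqrt{t\log(st)}$ in the ``overdetermined'' regime $d\gg\sqrt t$, where there exist polynomials bounded by $1$ at the $t+1$ equally spaced points $0,1,\dots,t$ whose value just to the right of this point set is as large as $\exp(\Omega(d^{2}/t))$ --- the extremal growth phenomenon studied by Ehlich--Zeller and Coppersmith--Rivlin, realizable by an explicit Chebyshev-type polynomial. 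Taking $d^{2}/t\asymp\log(st)$ makes this growth reach $s$; one then checks boundedness at each integer of $\{0,\dots,t\}$, value $>1$ on $(t,t+1)$, and value $\ge s$ for all $x\ge t+1$. (An alternative is a multi-scale/recursive construction shrinking the window being Chebyshev-amplified from width $t$ down to width $O(\log(st))$, with the same degree budget.)

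\emph{Main obstacle.} The crux is the second construction. Writing down the polynomial and proving it is bounded at \emph{every} integer of $\{0,\dots,t\}$ is delicate precisely because it is wild between integers, so one needs a clean closed form (e.g.\ a Chebyshev composed with a low-degree folding map) or an interpolation/duality argument rather than a direct estimate; and ``good threshold behavior'' demands $P>1$ on the whole interval $(t,t+1)$ --- so the construction should arrange $P(t)=1$ with $P$ increasing there, not merely $P(t)\le1$ --- and $P\ge s$ for \emph{all} $x\ge t+1$, which requires controlling monotonicity past the threshold. The first bound, by contrast, needs nothing beyond the Chebyshev facts above.
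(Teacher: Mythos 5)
Your first construction is essentially the paper's: the paper sets $P(x)=T_q(x/t)$ with $q=\lceil\sqrt{1/\eps}\,\ln(2s)\rceil$ and applies the growth bound $T_q(x)\ge\frac{1}{2}e^{q\sqrt\eps}$ for $x\ge 1+\eps$; your symmetric rescaling of $[0,t]$ onto $[-1,1]$ rather than $[0,1]$ is an immaterial variation, and the remaining estimates match.

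For the exact-setting bound $O(\sqrt{t\log(st)})$ you correctly identify the key idea --- exploit that boundedness is required only at the integers $\{0,\dots,t\}$ and seek Coppersmith--Rivlin/Ehlich--Zeller-type growth $e^{\Omega(q^2/t)}$ just outside that set --- but you stop at exactly what you yourself flag as the crux and leave the explicit polynomial, and the verification that it stays bounded at every integer, open. The paper resolves precisely this with the \emph{discrete Chebyshev polynomial} $D_{q,t}(x)=\sum_{i=0}^q(-1)^i\binom{q}{i}\binom{t-x}{q-i}\binom{x}{i}$, normalized as $P(x):=D_{q,t}(t-x)/c_{q,t}$ where $c_{q,t}=(t+1)^{q+1}/q!$. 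The two needed facts have short, self-contained proofs: the exact $\ell^2$ identity $\sum_{k=0}^t D_{q,t}(k)^2=\binom{2q}{q}\binom{t+1+q}{2q+1}\le c_{q,t}^2$ gives $|D_{q,t}(k)|\le c_{q,t}$ at every integer $k\in\{0,\dots,t\}$ (so one never has to control the polynomial between integers), and for every real $x\le -1$ the inequality $(-1)^i\binom{x}{i}\ge 1$ together with Chu--Vandermonde gives $D_{q,t}(x)\ge\binom{t+1+q}{q}\ge e^{q^2/(8(t+1))}c_{q,t}$. Choosing $q=\lceil\sqrt{8(t+1)\ln(\max\{s,t+1\})}\rceil$ then yields degree $O(\sqrt{t\log(st)})$. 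You also rightly worry about the middle bullet (value $>1$ on $(t,t+1)$) for this construction; the paper does not actually verify it, but it is never needed in the exact setting because downstream (Corollary~\ref{cor:det:poly}) the argument to $P$ is always an integer, so the open interval $(t,t+1)$ is never hit. Your worry is warranted in principle, but the paper finesses it rather than proves it.
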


By summing multiple copies of the polynomial from Theorem~\ref{thm:det:poly}, we immediately obtain a PTF with the same degree for the OR of $O(s)$ threshold functions (needed in our applications). This theorem follows directly from known extremal properties of Chebyshev polynomials, as well as the lesser known \emph{discrete} Chebyshev polynomials. Because Theorem~\ref{thm:det:poly} gives a single polynomial instead of a distribution on polynomials, it is especially helpful for designing deterministic algorithms. Chebyshev polynomials are well-known to yield good approximate polynomials for computing certain Boolean functions over the reals~\cite{Nisan-Szegedy94,Paturi92,Klivans-Servedio01,Sherstov13,GregLightbulb} (please see the Preliminaries for more background). 


\textbf{Probabilistic PTFs.} Third, we introduce a new (natural) notion of a \emph{probabilistic PTF} for a Boolean function $f$. This is a distribution on PTFs, where for each input $x$, a PTF drawn from the distribution is highly likely to agree with $f$ on $x$. Combining the techniques from probabilistic polynomials for $\TH_\theta$ and the deterministic PTFs in a simple way, we construct a probabilistic PTF with good threshold behavior whose degree is \emph{lower} than both the deterministic PTF and the degree bounds attainable by probabilistic polynomials (surprisingly breaking the ``square-root barrier''):

\begin{theorem}\label{thm:poly}
We can construct a probabilistic polynomial $\PP_{n,s,t,\eps}:\{0,1\}^n\rightarrow\R$ of degree $O((1/\eps)^{1/3}\log s)$, such that
\begin{itemize}
\item if $\sum_{i=1}^n x_i \le t$, then $|\PP_{n,s,t,\eps}(x_1,\ldots,x_n)|\le 1$ with probability at least $1-1/s$;
\item if $\sum_{i=1}^n x_i \in (t,t+\eps n)$, then $\PP_{n,s,t,\eps}(x_1,\ldots,x_n)> 1$ with probability at least $1-1/s$;
\item if $\sum_{i=1}^n x_i\ge t+\eps n$, then $\PP_{n,s,t,\eps}(x_1,\ldots,x_n)\ge s$ with probability at least $1-1/s$.
\end{itemize}
For the ``exact'' setting with $\eps=1/n$, we can alternatively bound the degree by 
$O(n^{1/3}\log^{2/3}(ns))$.
\end{theorem}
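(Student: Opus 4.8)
The plan is to combine the two constructions already in hand: the random subsampling that underlies low‑degree probabilistic polynomials for threshold functions, and the deterministic Chebyshev‑based PTF $P_{s,\cdot,\cdot}$ of Theorem~\ref{thm:det:poly}. Write $\sigma:=x_1+\cdots+x_n$, a degree‑$1$ polynomial. First I would dispose of the extreme cases: if $t\le \eps^{1/3}n$ then $P_{s,\,t,\,\eps n/t}(\sigma)$ already has degree $O(\sqrt{t/(\eps n)}\log s)=O((1/\eps)^{1/3}\log s)$ and meets all three conclusions deterministically (it is $\le 1$ on $\{0,\dots,t\}$, $>1$ on $(t,t+\eps n)$, and $\ge s$ beyond), while the case $n-t\le \eps^{1/3}n$ is symmetric after the substitution $x_i\mapsto 1-x_i$ together with reflecting the PTF so that its blow‑up occurs on the low side. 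So it remains to treat $\min(t,n-t)=\Omega(\eps^{1/3}n)$. Here I would subsample: put each coordinate into a set $S$ independently with probability $p$ and set $Y:=\sum_{i\in S}x_i$, a degree‑$1$ polynomial serving as a cheap proxy for $p\sigma$; a Chernoff bound gives $|Y-p\sigma|\le\delta$ except with probability $\le 1/s$, where $\delta=\Theta(\sqrt{pn\log s})$.

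The point of subsampling is that it shrinks the range over which a Chebyshev polynomial must oscillate: after replacing $\sigma$ by $Y$ the relevant window of thresholds lives in $[0,pt]$ rather than $[0,t]$, while the $\eps n$‑gap becomes a separation of size $p\eps n$ and the noise introduced is $\delta$. Balancing the two contributions to the degree — the degree needed to resolve the threshold window against the noise $\delta$, versus the degree needed to drive $P_{s,\cdot,\cdot}$ up to $s$ across the shrunken gap — is exactly where the exponent $1/3$ appears; the optimum is $p=\Theta\!\big((1/\eps)^{2/3}(\log s)/n\big)$, which gives a sampled universe of size $pn=\Theta((1/\eps)^{2/3}\log s)$ and $\delta=\Theta((1/\eps)^{1/3}\log s)$, and on this compressed instance appropriately shifted/scaled copies of the polynomial of Theorem~\ref{thm:det:poly} applied to $Y$ cost degree only $O((1/\eps)^{1/3}\log s)$ — in particular one obtains a ``coarse'' term that is $\le \tfrac12$ in magnitude whenever $\sigma\le t$ and is $\ge s$ whenever $\sigma\ge t+\eps n$, each with probability $\ge 1-1/s$ over $S$.

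The delicate point, which I expect to be the main obstacle, is the lower boundary of the middle band: we must certify $|\PP|\le 1$ when $\sigma\le t$ yet $\PP>1$ when $\sigma$ is only slightly above $t$, and no polynomial depending on $x$ only through $Y$ can do this, since for $\sigma=t$ and $\sigma=t+1$ the induced distributions of $Y$ are statistically close. Hence $\PP$ cannot simply be $P^{\det}$ applied to $Y$; I would take $\PP=A(x)+B(x)$ where $A$ is the coarse term above and $B$ is a lower‑degree ``fine'' term that genuinely reads the exact sum $\sigma$ — one natural attempt is $B=\beta(Y)\cdot V(\sigma)$ with $\beta(Y)$ a low‑degree polynomial localized near $Y\approx pt$ (so $B$ is active only in the transition zone) and $V(\sigma)$ a short‑window Chebyshev PTF that sharply resolves the threshold $t$ — designed so that $|B|\le\tfrac12$ whenever $\sigma\le t$, $B>\tfrac32$ throughout $\sigma\in(t,t+\eps n)$, and $B\ge 0$ beyond. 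The crux of the proof is then making $A$ and $B$ fit together: one must choose $p$, the width of the localizer $\beta$, and the window of $V$ so that (i) the blow‑up of the short‑window $V$ outside its window is suppressed by $\beta$, and (ii) the events ``$\sigma\le t$ but $Y$ atypically large'' and ``$\sigma\ge t+\eps n$ but $Y$ atypically small'' have probability only $1/s$; it is plausible that one extra round of subsampling inside $\beta$ is needed to keep every degree at $O((1/\eps)^{1/3}\log s)$. Finally, a union bound over the constantly many bad events of probability $\le 1/s$ (with $s$ rescaled by a constant at the outset) yields error $1/s$, and the improved exact bound $O(n^{1/3}\log^{2/3}(ns))$ for $\eps=1/n$ comes from invoking the second (discrete‑Chebyshev) degree estimate of Theorem~\ref{thm:det:poly} in place of the first on the compressed instance.
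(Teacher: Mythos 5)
Your high-level intuitions are correct and match the paper's: combine random subsampling with a Chebyshev-based PTF, and — crucially — the Chebyshev part must read the \emph{exact} sum $\sigma$ (not just the subsampled $Y$), since no function of $Y$ alone can resolve $\sigma=t$ versus $\sigma=t+1$. You even land on the right balance point $r = pn = \Theta((1/\eps)^{2/3}\log s)$. But the construction you sketch has a real gap, and it is also substantially more complicated than what is needed.

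The ``coarse term'' $A(Y)$ as described cannot exist. With your choice of $p$, the shrunken gap is $p\eps n = \Theta(\eps^{1/3}\log s)$, while the sampling noise is $\delta = \Theta(\sqrt{pn\log s}) = \Theta((1/\eps)^{1/3}\log s)$. Thus $\delta/(p\eps n) = \Theta((1/\eps)^{2/3}) \gg 1$: the noise \emph{swamps} the gap, so the sets of possible $Y$-values under $\sigma\le t$ and under $\sigma\ge t+\eps n$ heavily overlap, and no polynomial in $Y$ alone (Chebyshev or otherwise) can be $\le 1/2$ on the first and $\ge s$ on the second. (The cause is structural: the window-to-gap ratio $t/(\eps n)$ that governs the Chebyshev degree is scale-invariant under subsampling, so subsampling alone buys nothing for a PTF on $Y$ — only an added noise term. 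Subsampling only helps if you pay for it with a \emph{probabilistic polynomial}, whose degree shrinks like $\sqrt{r}$.) Your proposed $A+B$ decomposition also has an unexamined compatibility issue in the middle band $(t,t+\eps n)$, where $A$ may be negative.

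The paper's construction discards the coarse term entirely and is just your ``fine term'' $B$, made precise: $\PP := Q(\{x_i\}_{i\in R})\cdot P(\sigma - t^-)$, where $R$ is a random size-$r$ sample, $Q$ is the Alman--Williams \emph{probabilistic polynomial} (Fact~\ref{fact:AlmWil}) for the threshold $\brac{\sum_{i\in R}x_i > t_R}$ with $t_R := tr/n - c_0\sqrt{r\log s}$, and $P := P_{s,t',\eps'}$ is the deterministic PTF of Theorem~\ref{thm:det:poly} with shift $t^- := t - 2c_0(n/\sqrt{r})\sqrt{\log s}$, window $t' := t-t^- = \Theta((n/\sqrt{r})\sqrt{\log s})$, and $\eps' := \eps n/t'$. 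The role you imagined for the localizer $\beta$ is played exactly by $Q$, but the crucial detail you were missing is that $Q$ is a probabilistic polynomial for the \emph{exact} threshold, outputting $\{0,1\}$ with probability $\ge 1-1/s$; this hard $0/1$ gate cleanly annihilates the region $\sigma<t^-$ where $P(\sigma-t^-)$ would blow up negatively, so $P$ only needs to be bounded on the narrow window $\{0,\dots,t'\}$ (width $\Theta((n/\sqrt{r})\sqrt{\log s})$) rather than on $\{0,\dots,t\}$, while the gap $\eps n$ is unchanged. This yields effective $\eps' = \Theta(\eps\sqrt{r}/\sqrt{\log s})$, and degree $O(\sqrt{r\log s}) + O(\sqrt{1/\eps'}\log s)$, which is balanced at $r = \lceil(1/\eps)^{2/3}\log s\rceil$ to give $O((1/\eps)^{1/3}\log s)$. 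The ``far above'' case $\sigma\ge t+\eps n$ is handled by the same product (then $Q=1$ w.h.p.\ and $P\ge s$), so no separate coarse term is needed; no extra round of subsampling inside $\beta$ is needed either. The exact case follows by the same product using the discrete-Chebyshev bound of Theorem~\ref{thm:det:poly} and $r = \lceil n^{2/3}\log^{1/3}(ns)\rceil$.
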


The PTFs of Theorem~\ref{thm:poly} can be sampled using only $O(\log(n) \cdot \log(ns))$ random bits as well; their lower degree will allow us to design faster randomized algorithms for a variety of problems. For emphasis, we will sometimes refer to PTFs as \emph{deterministic PTFs} to distinguish them from probabilistic PTFs.

These polynomials for $\TH_\theta$ can be applied to many different problems:

\textbf{Offline Hamming Nearest Neighbor Search.} In the Hamming Nearest Neighbor problem, we wish to preprocess a set $D$ of $n$ points in $\{0,1\}^d$ such that, for a query $q \in \{0,1\}^d$, we can quickly find the $p \in D$ with smallest Hamming distance to $q$. This problem is central to many problems throughout Computer Science, especially in search and error correction \cite{IndykSurvey}. However, it suffers from the \emph{curse of dimensionality} phenomenon, where known algorithms achieve the nearly trivial runtimes of either $2^{\Omega(d)}$ or $\Omega(n/\poly(\log n))$, with matching lower bounds in many data structure models (see e.g. \cite{BarkolRabani}). Using our PTFs, we instead design a new algorithm for the natural offline version of this problem:

\begin{theorem}\label{thm:ham}
Given $n$ red and $n$ blue points in $\{0,1\}^d$ for $d=c\log n \ll \log^3 n/\log^5\log n$, we can find an (exact) Hamming nearest/farthest blue neighbor for every red point in randomized time
$n^{2-1/O(\sqrt{c}\log^{3/2}c)}$.
\end{theorem}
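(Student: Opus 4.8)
The plan is to run the polynomial-method framework for batch similarity search, but with the probabilistic PTF of Theorem~\ref{thm:poly} (degree $\sim d^{1/3}$) in place of the probabilistic polynomial of~\cite{JoshRyan} (degree $\sim d^{1/2}$); the smaller degree is exactly what lets the group size, and hence the savings in the exponent, grow from $n^{1/O(c\log^2 c)}$ to $n^{1/O(\sqrt c\log^{3/2}c)}$. Throughout I only discuss nearest neighbors; farthest is symmetric (use $z_k=r_k\oplus b_k$ below instead of its negation).

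First I would reduce the offline problem to a small number of \emph{group-pair closeness} instances. Partition the $n$ red points into $n/s$ groups $R_1,\dots,R_{n/s}$ and the blue points into $n/s$ groups $B_1,\dots,B_{n/s}$, each of size $s$ (to be fixed), and fix a threshold $\tau\in\{0,\dots,d\}$; the instance asks for the Boolean matrix $M_\tau[i,j]=\bigl[\exists\,r\in R_i,\ b\in B_j:\ \mathrm{dist}(r,b)\le\tau\bigr]$. By standard reductions --- binary search over the $O(\log d)=O(\log\log n)$ candidate thresholds, plus a recursive refinement of the groups once $M_\tau[i,j]=1$ --- one recovers every red point's exact nearest blue neighbor from $\polylog(n)$ such matrices with polylogarithmic overhead; crucially, this reduction only needs each $M_\tau[i,j]$ correct with constant probability, since residual errors are ``local'' and are killed by running everything $O(\log n)$ times and taking, per red point, the majority. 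This is the point that lets the PTF's error parameter be $\poly(s)$ rather than $\poly(n)$.

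Next I would compute one $M_\tau$ by a single rectangular matrix multiplication. For a fixed blue point $b$, $\mathrm{dist}(r,b)=\sum_{k=1}^d(r_k\oplus b_k)$ is affine in $r$, so with $z_k(r,b)=r_k\oplus b_k=r_k+b_k-2r_kb_k$ we have $\mathrm{dist}(r,b)\le\tau\iff\sum_k\overline{z_k}\ge d-\tau$, i.e.\ $[\mathrm{dist}(r,b)\le\tau]$ is a threshold with integer argument on the $d$ ``virtual bits'' $\overline{z_1},\dots,\overline{z_d}$, so the relevant slack is $\eps=1/d$ (the ``exact'' regime). Take $\PP:=\PP_{d,\,3s^2,\,d-\tau-1,\,1/d}$ and set $G^{(i,j)}:=\sum_{r\in R_i}\sum_{b\in B_j}\PP\bigl(\overline{z_1(r,b)},\dots,\overline{z_d(r,b)}\bigr)$. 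Since each $\overline{z_k}=1-r_k-b_k+2r_kb_k$ has degree $\le1$ in the $r$-variables and $\le1$ in the $b$-variables, after multilinearization the degree-$\Delta$ polynomial $\PP$ (with $\Delta=O(d^{1/3}\log^{2/3}(ds^2))$ by Theorem~\ref{thm:poly}) becomes a polynomial of degree $\le\Delta$ in $r$ and $\le\Delta$ in $b$ separately: $\PP(\overline z(r,b))=\sum_{A,B'\subseteq[d],\,|A|,|B'|\le\Delta}c_{A,B'}\bigl(\prod_{k\in A}r_k\bigr)\bigl(\prod_{k\in B'}b_k\bigr)$. Summing over groups, $G^{(i,j)}=\sum_A u_i(A)w_j(A)$ with $u_i(A)=\sum_{r\in R_i}\prod_{k\in A}r_k$ and $w_j(A)=\sum_{B'}c_{A,B'}\sum_{b\in B_j}\prod_{k\in B'}b_k$, so $(G^{(i,j)})_{i,j}$ is the product of an $(n/s)\times m_0$ matrix by an $m_0\times(n/s)$ matrix, $m_0=\binom{d}{\le\Delta}=\sum_{\ell\le\Delta}\binom d\ell$. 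By the calibration of $\PP$ (``$|\PP|\le1$ when $\mathrm{dist}>\tau$'', ``$\PP\ge3s^2$ when $\mathrm{dist}\le\tau$''), whenever all $s^2$ evaluations entering $G^{(i,j)}$ are correct --- probability $\ge1-s^2/(3s^2)=\tfrac23$ --- we get $|G^{(i,j)}|\le s^2$ if $M_\tau[i,j]=0$ and $G^{(i,j)}\ge2s^2$ if $M_\tau[i,j]=1$, so thresholding $G^{(i,j)}$ at $2s^2$ recovers the entry (this is the ``\OR-by-summing'' trick applied to $\PP$).

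Finally I would balance parameters. Building the $u_i$'s and the $v_j(B')=\sum_{b\in B_j}\prod_{k\in B'}b_k$ costs $n\cdot\poly(m_0,d)$, assembling $w_j=c\cdot v_j$ costs $(n/s)m_0^2$, both subquadratic once $m_0=n^{o(1)}$; the matrix multiplication costs $\mathrm{MM}(n/s,m_0,n/s)$, which by standard rectangular matrix multiplication is $(n/s)^{2+o(1)}$ as long as $m_0\le(n/s)^{0.1}$. Writing $d=c\log n$, demanding $m_0=\binom{d}{\le\Delta}\le n^{0.1}$ forces $\Delta=\Theta(\log n/\log c)$, and then $\Delta=\Theta\bigl(d^{1/3}\log^{2/3}(ds^2)\bigr)$ together with $\log(ds^2)=\Theta(\log s)$ --- which is exactly where the hypothesis $d\ll\log^3 n/\log^5\log n$ is used --- gives $\log s=\Theta(\log n/(\sqrt c\log^{3/2}c))$. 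Summing the $(n/s)^{2+o(1)}$ multiplication cost over the $O(\log\log n)$ thresholds, the $O(\log n)$ repetitions, and the refinement recursion yields total time $n^{2-2\log s/\log n+o(1)}=n^{2-1/O(\sqrt c\log^{3/2}c)}$. The hard part, I expect, is the interface between the PTF's error model and the offline reduction: one must verify that the recursion of~\cite{JoshRyan} genuinely tolerates constant per-group-pair error --- so that the PTF's error parameter can be $\poly(s)$, since taking it $\poly(n)$ would inflate $\Delta$ to $\Theta(c^{1/3}\log n)$ and $m_0$ well past $n$, wiping out all savings --- and that summing $s^2$ copies of $\PP$ still behaves like an \OR\ at this small error level. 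The remainder is routine polynomial-method bookkeeping.
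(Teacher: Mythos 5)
Your core machinery is exactly the paper's: you invoke the degree-$O(d^{1/3}\log^{2/3}(ds))$ probabilistic PTF of Theorem~\ref{thm:poly} in the ``exact'' regime $\eps=1/d$, sum copies to implement the $\OR$ with threshold separation, multilinearize into a bilinear form in the red and blue coordinates with $m_0=\binom{d}{\le\D}$ dimensions, apply fast rectangular matrix multiplication (Lemma~\ref{rectangular}), and tune $\log s=\Theta(\log n/(\sqrt{c}\log^{3/2}c))$ so that $m_0=n^{o(1)}$. This is the proof of Theorem~\ref{thm:ham}, and the observation that $d\ll\log^3 n/\log^5\log n$ is what makes $\log(ds^2)=\Theta(\log s)$ is also where the paper needs that hypothesis. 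Also note the paper's proof body has a typo ($c^{1/3}$ where the statement says $\sqrt{c}$); your derivation of $\sqrt{c}\log^{3/2}c$ is the correct one.

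Where you diverge is the reduction structure, and this is where the proposal has a real (if fixable) gap. The paper groups \emph{only the blue} points into $n/s$ groups and keeps each red point $q$ individual, computing $F(G,q)=\bigvee_{p\in G}\brac{\ldots}$ as an $(n/s)\times n$ matrix product (which it breaks into $s$ square instances, for cost $\OO(n^2/s)$). This directly yields, for every red $q$, the smallest threshold $t$ for which some blue group works, and then the nearest neighbor is recovered by an $O(s)$ scan of the winning group. No recursion is required. You instead group \emph{both} sides, compute the group-pair matrix $M_\tau[i,j]$ of size $(n/s)\times(n/s)$, and then invoke an unspecified ``recursive refinement of the groups once $M_\tau[i,j]=1$.'' The recursion really is necessary in your formulation: $M_\tau[i,j]=1$ tells you only that \emph{some} red point of $R_i$ has a close neighbor in $B_j$, not which one, so per-red-point answers require splitting $R_i$ all the way down to singletons. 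Unrolled to that final level, your computation becomes exactly the paper's $(n/s)\times n$ evaluation, which costs $\OO(n^2/s)$, not the $(n/s)^{2+o(1)}$ you assert. Consequently the claim ``total time $n^{2-2\log s/\log n+o(1)}$'' should be $n^{2-\log s/\log n+o(1)}$; the factor-of-two slip happens to be swallowed by the $O(\cdot)$ in $1/O(\sqrt{c}\log^{3/2}c)$, so the theorem's \emph{form} survives, but the cost accounting as written is inconsistent (you cannot get per-red-point answers at the $(n/s)^2$ price). The cleanest repair is simply not to group the red points at all, as the paper does; then the error parameter of the PTF can also be the slightly smaller $3s$ rather than $3s^2$, matching Corollary~\ref{cor:poly} directly.
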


Using the same ideas, we are also able to derandomize our algorithm, to achieve \emph{deterministic} time $n^{2-1/O(c\log^{2}c)}$ (see Remark \ref{rmk:det:alg} in Section \ref{hnnsection}). When $d = c \log n$ for constant $c$, these algorithms both have ``truly subquadratic'' runtimes. These both improve on Alman and Williams' algorithm \cite{JoshRyan} which runs in randomized time $n^{2-1/O(c\log^{2}c)}$, and only gives a nontrivial algorithm for $d \ll \log^2 n/\log^3 \log n$. Applying reductions from \cite{JoshRyan}, we can achieve similar runtimes for finding closest pairs in $\ell_1$ for vectors with small integer entries, and pairs with maximum inner product or Jaccard coefficient.

It is worth noting that there may be a serious limit to solving this problem much faster. Theorem \ref{thm:ham} (and \cite{JoshRyan}) shows for all $c$ there is a $\delta>0$ such that Offline Hamming Nearest Neighbor search in dimension $d = c \log n$ takes $O(n^{2-\delta})$ time. Showing that there is a universal $\delta>0$ that works for all $c$ would disprove the Strong Exponential Time Hypothesis \cite[Theorem 1.4]{JoshRyan}.

\textbf{Offline Approximate Nearest Neighbor Search.}
The problem of finding high-dimensional \emph{approximate} nearest neighbors has received even more attention.  Locality-sensitive hashing yields data structures that can find $(1+\eps)$-factor approximate nearest neighbors to any query point in $\OO(dn^{1-\Omega(\eps)})$ (randomized) time after preprocessing in  $\OO(dn+n^{2-\Omega(\eps)})$ time and space,\footnote{Throughout the paper, the $\OO$ notation hides polylogarithmic factors, $[U]$ denotes $\{0,1,\ldots,U-1\}$, and $\poly(n)$ denotes a fixed polynomial in $n$.}
for not only Hamming space but also $\ell_1$ and $\ell_2$ space~\cite{HarPeledIndykMotwani,AndoniIndyk}.  Thus, a batch of $n$ queries can be answered in $\OO(dn^{2-\Omega(\eps)})$ randomized time.  Exciting recent work on locality-sensitive hashing \cite{AndoniBeyond,andoni2015optimal} has improved the constant factor in the $\Omega(\eps)$ bound, but not the growth rate in $\eps$.  In 2012, G. Valiant \cite{GregLightbulb} reported a surprising algorithm running in $\OO(dn + n^{2-\Omega(\sqrt{\eps})})$ randomized time for the offline version of the problem in $\ell_2$.
We obtain a still faster algorithm for the offline problem, with $\sqrt{\eps}$ improved to about $\eps^{1/3}$:

\begin{theorem}\label{thm:ham:approx}
Given $n$ red and $n$ blue points in $[U]^d$ and $\eps\gg \frac{\log^6\log n}{\log^3 n}$, we can find 
a $(1+\eps)$-approximate $\ell_1$ or $\ell_2$ nearest/farthest blue neighbor for each red point in 
$(dn+n^{2-\Omega(\eps^{1/3}/\log(1/\eps))})\cdot \poly(\log(nU))$ randomized time.
\end{theorem}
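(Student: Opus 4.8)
\medskip
\noindent\textbf{Proof proposal for Theorem~\ref{thm:ham:approx}.}
The plan is to run the polynomial-method plus fast-rectangular-matrix-multiplication engine behind Theorem~\ref{thm:ham}, but with two twists tailored to the approximate regime: first collapse the geometry down to a bilinear form in only $O(1/\eps^2)$ coordinates, and then amplify with the low-degree \emph{probabilistic} PTF of Theorem~\ref{thm:poly} rather than with a probabilistic polynomial for an exact threshold. Concretely, I would first reduce $(1+\eps)$-approximate nearest/farthest neighbor to $O(\eps^{-1}\log(ndU))$ instances of the following \emph{gap near-neighbor} problem: for a fixed scale $r$ and each red point $q$, decide whether some blue point is within distance $r$ of $q$, under the promise that either this holds or every blue point is at distance $\ge(1+\eps)r$. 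Scanning the scales from smallest to largest (largest to smallest for the furthest variant) and recording, the first time a red point gets a ``yes'', one witnessing blue point, yields a $(1+O(\eps))$-approximate answer for every red point; rescaling $\eps$ absorbs the constant. Within one scale I would apply standard locality-sensitive sketching (random-hyperplane/SimHash rounding for $\ell_2$ after unit-normalization, and the analogous $\ell_1$ family; norms, near-duplicates, and near-antipodal configurations handled by appending coordinates and a separate exact-match pass) to map each point to a vector $\phi(\cdot)\in\{0,1\}^K$ with $K=O(1/\eps^2)$, so that the sketch agreement fraction $\frac1K\sum_\ell\predicate{\phi(q)_\ell=\phi(p)_\ell}$ is, up to an additive $\Theta(\eps)$ gap, a threshold test for ``distance $\le r$'', correct for any fixed pair with constant probability. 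The point of keeping $K$ free of any $\log n$ factor is that $K$ will control the monomial count below.

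For the core of a single scale, partition the $n$ blue points into $n/g$ groups of size $g=n^{\delta}$. For a red point $q$ and a blue group $B$, feed the $K$ bilinear ``agreement'' terms $\phi(q)_\ell\oplus\phi(p)_\ell$ (each linear in $\phi(q)_\ell$ for fixed $p$) into the probabilistic PTF of Theorem~\ref{thm:poly} on $K$ bits, with error parameter $s=\Theta(g)$, gap parameter $\eps'=\Theta(\eps)$, and threshold chosen for the current $r$; summing over $p\in B$ produces a polynomial $F_B(q)$ of degree $\Delta = O(\eps^{-1/3}\log g)$ in the $K$ sketch bits of $q$, whose value exceeds $g$ exactly when $B$ contains a blue point close to $q$ (each ``far'' term contributes at most $1$ in absolute value, a ``close'' term at least $s>2g$), with constant failure probability. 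Writing $F_B(q)=\sum_M c^{(B)}_M\,M(\phi(q))$ over the at most $\binom{K}{\le\Delta}$ multilinear monomials $M$ in the sketch bits of $q$, I would evaluate all $n/g$ polynomials at all $n$ red points by multiplying the $n\times\binom{K}{\le\Delta}$ matrix of values $M(\phi(q_i))$ by the $\binom{K}{\le\Delta}\times(n/g)$ matrix of coefficients $c^{(B)}_M$; by Coppersmith-type rectangular matrix multiplication this costs $\tilde O(n^2/g)$ provided $\binom{K}{\le\Delta}\le n^{\alpha_0}$ for the relevant constant $\alpha_0>0$ (roughly $0.17$; Le Gall's bounds may improve the constant in the final exponent).

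It remains to choose $g$. Since $\binom{K}{\le\Delta}\le K^{\Delta}=2^{O(\eps^{-1/3}\log g\cdot\log(1/\eps))}$ using $K=O(1/\eps^2)$, the requirement $\binom{K}{\le\Delta}\le n^{\alpha_0}$ is satisfied by taking $\log g=\Theta\!\left(\eps^{1/3}\log n/\log(1/\eps)\right)$, i.e.\ $\delta=\Theta(\eps^{1/3}/\log(1/\eps))$, giving $\tilde O(n^{2-\delta})$ per scale and $n^{2-\Omega(\eps^{1/3}/\log(1/\eps))}\cdot\poly(\log(nU))$ over all $O(\eps^{-1}\log(ndU))$ scales; this is exactly where the stated exponent comes from. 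The remaining costs are lower order: computing the $K=O(1/\eps^2)$ sketch bits of all $2n$ points takes $O(dn/\eps^2)=dn\cdot\poly(\log(nU))$ in the admissible range; boosting the constant per-pair failure probability to $1/\poly(n)$ costs an $O(\log n)$ factor of independent repetitions with a majority vote on the final per-red-point decisions; and once a red point learns that a particular blue group contains a close point, scanning that group's $\le g$ members with a sufficiently accurate sketch to recover an explicit witness costs $n^{1+\delta}\cdot\poly(\log(nU))$ over all red points and scales. The hypothesis $\eps\gg\log^6\log n/\log^3 n$ is precisely what guarantees $n^{\delta}\gg\poly(\log(nU))$, so that the speedup survives the polylogarithmic overhead.

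The step I expect to be the real obstacle is the sketching reduction: making it simultaneously \emph{clean} --- a genuine $\Theta(\eps)$ additive gap in the agreement fraction, uniformly over the scales $r$ that actually arise, for $\ell_1$ as well as $\ell_2$, with norms and near-extremal geometries absorbed --- and \emph{dimension-cheap}, namely $K=O(1/\eps^2)$ with only constant per-pair reliability so that the $\log$ factor is paid once, by the outer repetitions, rather than inside $K$. This is exactly what produces the $\log(1/\eps)$, rather than $\log n$ or $\log\log n$, in the denominator of the exponent: the estimate $\binom{K}{\le\Delta}\le K^{\Delta}$ turns $\log K=O(\log(1/\eps))$ into the cost per unit of degree, while the degree is $O(\eps^{-1/3}\log g)$. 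Secondary care is needed to pin down the precise ``free'' regime of rectangular matrix multiplication, to check that the per-scale error analysis composes cleanly over the $O(\eps^{-1}\log(ndU))$ scales, and to verify that every auxiliary step (sketching, repetition, witness extraction) stays strictly under the $n^{2-\delta}$ budget.
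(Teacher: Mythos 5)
Your high-level plan --- sketch to a short Hamming code via LSH, then probabilistic PTF over groups of blue points, then rectangular matrix multiplication --- is the same route the paper takes (packaged there as Theorem~\ref{thm:ham:approx0} for the Hamming additive-error core plus a dimension-reduction reduction from $\ell_1$/$\ell_2$). But the load-bearing claim that $K=O(1/\eps^2)$ sketch bits with only \emph{constant} per-pair reliability suffice, patched up by $O(\log n)$ outer repetitions with a per-red-point majority vote, has a genuine gap. The polynomial $F_B(q)$ sums $g=n^{\delta}$ PTF terms, and a single false positive --- a far blue point whose sketch agreement fraction happens to overshoot by $\Theta(\eps)$ --- flips one term from $|{\cdot}|\le 1$ to $\ge s=\Theta(g)$, which makes the whole sum exceed the acceptance threshold. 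With constant per-pair sketch failure probability and $\Theta(n)$ far blue points, essentially every red point sees at least one bad far pair in \emph{every} repetition; the offending pair changes from repetition to repetition, but the aggregate decision ``some blue point is close'' comes out wrong each time, so the majority of repetitions is wrong too. The fix is to drive the per-pair sketch error down to $O(1/\poly(n))$ so that a single union bound over all $\Theta(n^2)$ pairs applies, which forces $K=\Omega((\log n)/\eps^2)$ --- exactly the $k$ the paper uses in the $\ell_1$/$\ell_2$ reduction.

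Happily, your motivating worry --- that the $\log n$ factor in $K$ must be avoided or else it would displace $\log(1/\eps)$ in the denominator of the exponent --- is a red herring, so the fix is free. The monomial count is $\binom{K}{\le\Delta}\le (eK/\Delta)^{\Delta}$, and with $\Delta = O(\eps^{-1/3}\delta\log n)$ the ratio $K/\Delta = O(1/(\eps^{5/3}\delta))$ is $\poly(1/\eps)$ even when $K = O((\log n)/\eps^2)$: the $\log n$ cancels in the ratio, and it is $\log(K/\Delta)$, not $\log K$, that multiplies the degree. This is precisely what the paper's Theorem~\ref{thm:ham:approx0} bound $n^{2-\Omega(\eps^{1/3}/\log(d/(\eps\log n)))}$ encodes: the relevant quantity is $d/\log n$, and for $d = O((1/\eps^2)\log n)$ this is $\poly(1/\eps)$. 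Swap $K$ for $O((\log n)/\eps^2)$ with per-pair error $1/\poly(n)$, drop the outer-repetition crutch (keeping only an $O(\log n)$ boost for the PTF's own failure probability, as in Theorem~\ref{thm:ham}), and your argument coincides with the paper's. The paper's choice of LSH --- offset-and-floor hashing mapped through a random bit for $\ell_1$, and Johnson--Lindenstrauss to $\ell_1$ for $\ell_2$ --- also sidesteps the unit-normalization and antipodal caveats you raise for SimHash, which is worth adopting.
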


Valiant's algorithm, like Alman and Williams'~\cite{JoshRyan}, relied on fast matrix multiplication, and it also used Chebyshev polynomials but in a seemingly more complicated way.  Our new probabilistic PTF construction is inspired by our attempt to unify Valiant's approach with Alman and Williams', which leads to not only a simplification but also an improvement of Valiant's algorithm.  (We also almost succeed in derandomizing Valiant's $n^{2-\tilde{\Omega}(\sqrt{\eps})}$ result in the Hamming case, except for an initial dimension reduction step;
see Remark \ref{rmk:det:alg} in Section \ref{hnnsection}.)

Numerous applications to high-dimensional computational geometry follow; for example, we can approximate the diameter or Euclidean minimum spanning tree in roughly the same running time.

\textbf{MAX-SAT.} Another application is MAX-SAT: finding an assignment that satisfies the maximum number of clauses in a given CNF formula with $n$ variables.  In the sparse case when the number of clauses is $cn$, a series of papers have given faster exact 
algorithms, for example, achieving
$2^{n-n/O(c\log c)}$ time by Dantsin and Wolpert~\cite{DantsinWolpert},
$2^{n-n/O(c\log c)^{2/3}}$ time by Sakai et al.~\cite{SakSetTamTer}, and
$2^{n-n/O(\sqrt{c})}$ time by Chen and Santhanam~\cite{ChenSanthanam}.  Using the polynomial method and our new probabilistic PTF construction, we obtain the following improved result:

\begin{theorem}\label{thm:maxsat}
Given a CNF formula with $n$ variables and $cn\ll n^4/\log^{10} n$ clauses,
we can find an assignment that satisfies the maximum number of clauses
in randomized $2^{n-n/O(c^{1/3}\log^{7/3}c)}$ time.
\end{theorem}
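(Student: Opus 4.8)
The plan is to reduce MAX-SAT on a sparse CNF to evaluating an $\OR$ of threshold functions via the polynomial method, in the spirit of the Hamming nearest-neighbor algorithm (Theorem~\ref{thm:ham}) but using the low-degree probabilistic PTF of Theorem~\ref{thm:poly}. First I would set up a split-and-list framework: partition the $n$ variables into two halves of $n/2$ each, enumerate all $2^{n/2}$ partial assignments on each half, and think of the task as, for each assignment $a$ to the first half, finding the assignment $b$ to the second half maximizing the number of satisfied clauses. Since the formula is sparse ($cn$ clauses), most clauses are ``split'' across the two halves; a clause is satisfied by $(a,b)$ unless every literal in it is falsified, and whether a clause is falsified depends separately on the restriction of $a$ and of $b$ to that clause. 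The number of satisfied clauses is $cn$ minus the number of jointly-falsified clauses, so it suffices to compute, for each pair $(a,b)$, the count of clauses falsified by both halves — or really just to locate, for each $a$, the $b$ minimizing this count.

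Next I would encode the ``number of jointly falsified clauses equals $0,1,\dots$'' test as a Boolean circuit and apply Theorem~\ref{thm:poly}. The standard trick (cf.\ \cite{JoshRyan}) is to guess the optimum value $k^*$ by binary search over $\{0,1,\dots,cn\}$ and ask the decision question: does there exist $b$ with at most $k^*$ jointly falsified clauses? For a fixed $a$, the indicator that a given clause is jointly falsified is an $\AND$ over the literals in that clause of the ``$a$ falsifies it'' bit times the literal-falsification bits coming from $b$; summing these indicators over all $cn$ clauses gives a sum $S_{a,b}$ of products, and ``$S_{a,b} \le k^*$'' is a threshold predicate on roughly $cn$ terms. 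We then take the $\OR$ over all $2^{n/2}$ choices of $b$ of the negated threshold (``$S_{a,b} > k^*$ is false''), i.e.\ an $\OR$ of threshold functions, which by the remark following Theorem~\ref{thm:det:poly} and by Theorem~\ref{thm:poly} has a probabilistic PTF of degree $d = O((1/\eps)^{1/3}\log s)$ in the ``exact'' regime $d = O(m^{1/3}\log^{2/3}(ms))$, where $m = \Theta(cn)$ is the number of terms being summed and $s = 2^{n/2}\cdot\poly(n)$ is the required error bound (so that a union bound over all $a$ and all binary-search rounds succeeds). Plugging in, the degree is $O((cn)^{1/3}(n)^{2/3}) = O(c^{1/3} n)$ up to logarithmic factors — too large by itself, so the key is the batching step.

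The engine that turns a low-degree polynomial into a subquadratic algorithm is the rectangular-matrix-multiplication evaluation of Coppersmith, exactly as in \cite{JoshRyan}: writing the probabilistic PTF as a sum of monomials, one forms a $2^{n/2}\times N$ matrix indexed by first-half assignments and monomials, and an $N\times 2^{n/2}$ matrix indexed by monomials and second-half assignments, where $N$ is the number of monomials of degree $d$ in the relevant variables; the product evaluates the polynomial on all $2^n$ pairs. For this to run in $2^{n-n^{\Omega(1)}}$ time we need $N = 2^{o(n/2)}$, which forces the degree to be a sufficiently small sublinear function of the number of variables feeding the polynomial. The standard resolution — and where the $c^{1/3}$ dependence appears — is to further split each half into blocks of size $b$, apply the probabilistic PTF only to the $\approx cn/\text{(block count)}$ clauses local to a bounded number of blocks, and balance: choosing the number of blocks so that (number of monomials) $= 2^{n/b}$ on the combinatorial side matches (degree $\approx (cn/\text{blocks})^{1/3}$, so $N \le \binom{n}{d}$) on the algebraic side. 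Optimizing this tradeoff — degree $d \approx (c n/B)^{1/3}\polylog$ against $N \le \binom{n/B\cdot(\text{stuff})}{d}$ and the $2^{n/B}$ enumeration cost — yields the savings $n/O(c^{1/3}\log^{7/3}c)$ in the exponent; the extra $\log^{7/3} c$ over a naive $\log c$ comes from the $\log^{2/3}$ factor in the degree bound of Theorem~\ref{thm:poly} combined with the logarithmic slack needed in the block-size optimization.

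The main obstacle I expect is the bookkeeping in the split-into-blocks balancing argument: one must carefully account for clauses that straddle several blocks (a clause touching $r$ blocks contributes to $2^r$ cells of the lookup structure, so sparsity is essential to keep this under control), ensure the degree-$d$ monomial count $N$ genuinely stays $2^{o(n)}$ after the block decomposition, and verify that the error probability $1/s$ with $s$ slightly super-polynomial in $2^{n/2}$ survives the union bound over all $\approx 2^{n/2}$ first-half assignments, all $O(\log(cn))$ binary-search rounds, and all blocks, while keeping $d$ within the claimed bound since $\log s = \Theta(n)$ rather than $\polylog$. A secondary point is confirming that the probabilistic PTF's ``good threshold behavior'' (value $\le 1$ below threshold, $\ge s$ far above) lets the rounding step recover the exact count robustly even though we only evaluate over $\R$ with bounded precision; this is handled exactly as in the nearest-neighbor application, by choosing $s$ large enough that a single sign/magnitude test per pair reads off the answer. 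Everything else — the split-and-list setup, Coppersmith's matrix-multiplication evaluation, and the binary search over $k^*$ — is by now routine from \cite{JoshRyan} and \cite{GregLightbulb}.
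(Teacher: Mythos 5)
Your proposal misses the two ideas that actually make the paper's argument work, and the fix you gesture at in its place is not concrete enough to fill the gap.

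First, the paper does not attack general CNF directly. It first proves a MAX-$k$-SAT bound (Theorem~\ref{thm:maxksat}) of the form $2^{n - n/O(k^{4/3}c^{1/3}\log(kc))}$, and then obtains Theorem~\ref{thm:maxsat} by the Schuler/Calabro--Impagliazzo--Paturi/Sakai--Seto--Tamaki \emph{width reduction}, choosing $k = \Theta(\log c)$. This is not optional: each clause indicator $C_j$ must be expressed as a polynomial, and its degree equals the clause width, which in a general CNF can be as large as $n$. Your plan treats each clause's falsification indicator as a single term to be summed, but inside the PTF that term \emph{is} a degree-(clause width) monomial in the variables, and without bounding the width by $k = O(\log c)$ the final degree is unbounded. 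The $\log^{7/3}c$ in the exponent is exactly $k^{4/3}\cdot\log(kc)$ with $k=\Theta(\log c)$; your attribution of it to ``$\log^{2/3}$ slack plus block-size optimization'' does not track the actual source.

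Second, the MAX-$k$-SAT core does \emph{not} use a split-and-list into two halves of size $n/2$ followed by rectangular matrix multiplication. (That is the structure of the separate MAX-4-SAT argument, Theorem~\ref{thm:max4sat}, which your write-up more closely resembles.) Here the paper isolates a small set of $s = \alpha n$ \emph{good} variables, each appearing in at most $2kc$ clauses, so the set $J$ of clauses touching them has size $O(kcs)$. It enumerates all $2^{n-s}$ assignments to the other variables, precomputes their contribution $T[\cdot]$ from clauses outside $J$ by direct summation, and only applies the probabilistic PTF (Corollary~\ref{cor:poly}, exact setting) to an $\OR$ over the $2^s$ extensions of a threshold on the $O(kcs)$ clauses in $J$. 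This keeps both the fan-in $m = O(kcs)$ of the threshold and the error parameter $\log s' = \Theta(\alpha n)$ small simultaneously, giving degree $k\cdot O((kcs)^{1/3}(s+\log(kcs))^{2/3})$, and the resulting single polynomial in $n-s$ variables is evaluated at all $2^{n-s}$ points by Yates'-style dynamic programming in $\poly(n)2^{n-s}$ operations --- not by matrix multiplication. Your $n/2$-vs-$n/2$ split forces both $m=\Theta(cn)$ and $\log s' = \Theta(n)$, so the PTF degree is $\Theta(c^{1/3}n)$; you correctly note this is too large, but the ``block decomposition'' you propose as a repair is not the paper's device, is underspecified (you never pin down how clause--block straddling is controlled, nor how to precompute the non-local contribution), and it is not evident it can be made to close the gap. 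The paper's small-set/good-variable decomposition is the concrete mechanism that replaces it.
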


For general dense instances, the problem becomes tougher. Williams~\cite{Williams05} gave an $O(2^{0.792n})$-time algorithm for MAX-2-SAT, but an $O(2^{(1-\delta)n})$-time algorithm for MAX-3-SAT (for a universal $\delta > 0$) has remained open; currently the best reported time bound~\cite{SakSetTamTerECCC15} is $2^{n-\Omega(n/\log n)^{1/3}}$, which can be slightly improved to $2^{n-\Omega(\sqrt{n/\log n})}$ with more care. We make new progress on not only MAX-3-SAT but also MAX-4-SAT:

\begin{theorem}\label{thm:max4sat}
Given a weighted 4-CNF formula $F$ with $n$ variables with positive integer weights bounded by $\poly(n)$, we can find an assignment that maximizes the total weight of clauses satisfied in $F$, in randomized $2^{n-n/O(\log^2 n\log^2\log n)}$ time.
In the sparse case when the clauses have total weight $cn$, the time bound improves
to $2^{n-n/O(\log^2 c\log^2\log c)}$.
\end{theorem}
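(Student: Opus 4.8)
The plan is to apply the polynomial method as in the proofs of Theorems~\ref{thm:ham} and~\ref{thm:maxsat}: reduce the optimization to a batch of threshold tests, implement each test by the probabilistic PTF of Theorem~\ref{thm:poly}, and evaluate the resulting low-degree polynomial on all relevant point pairs simultaneously via fast rectangular matrix multiplication.

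First I would carry out a split-and-list reduction. Split the $n$ variables into halves $X_1,X_2$ of size $n/2$ and let $A,B$ be the sets of the $2^{n/2}$ assignments to each half. For a joint assignment $(a,b)$, the satisfied weight splits as $w_1(a)+w_2(b)+(\text{weight of satisfied cross clauses})$, where $w_1,w_2$ (clauses confined to one half) are precomputed over all of $A,B$ in $\widetilde O(2^{n/2})$ time. Since a clause has at most $4$ literals, a cross clause $C$ is unsatisfied by $(a,b)$ precisely when $a$ extends a fixed partial assignment $\rho_1(C)$ to at most $3$ variables of $X_1$ \emph{and} $b$ extends a fixed partial assignment $\rho_2(C)$ to at most $3$ variables of $X_2$. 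Indexing feature vectors $U_a$ (resp.\ $V_b$) by such partial assignments, with a $1$ in a coordinate iff the assignment extends it, and collecting clause weights into a matrix $W$ with $\poly(n)$ rows/columns, bounded nonnegative integer entries, and at most (number of clauses) nonzero entries, the satisfied weight equals $w_1(a)+w_2(b)+U_a^{\top}WV_b$. Folding $w_1,w_2$ into two extra coordinates, ``maximize the satisfied weight'' becomes ``over $(a,b)\in A\times B$, minimize the bilinear form $U_a^{\top}W'V_b$''. Binary searching the optimum value ($O(\log(\text{total weight}))$ rounds, which is $O(\log c)$ in the sparse case) reduces the whole problem to the decision question: \emph{is there a pair with $U_a^{\top}W'V_b\le t$?}

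For the decision question, $U_a^{\top}W'V_b=\sum_{\rho_1,\rho_2}W'[\rho_1,\rho_2]\,U_a[\rho_1]V_b[\rho_2]$ is a nonnegative integer expressed as a weighted sum of $0/1$ terms, each of which is a degree-$2$ monomial in the features. Feeding that sum into Theorem~\ref{thm:poly} (after the standard padding that turns a weighted sum into an unweighted one) yields a probabilistic polynomial $\PP$ that, with probability $\ge 1-1/s$ on each pair, is $\ge s$ when the bilinear value is $\le t$ and has absolute value $\le 1$ otherwise. Expanding $\PP$ in the feature monomials and collapsing repeated terms gives a polynomial of the same degree whose $T$ monomials each factor as an $X_1$-monomial times an $X_2$-monomial. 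Now group $A$ into $|A|/g$ groups of $g$ assignments (and $B$ likewise); for a pair of groups the sum $\sum_{a,b}\PP(a,b)$ exceeds $g^2$ iff that group-pair contains a pair with bilinear value $\le t$, provided $s>2g^2$. All $(|A|/g)^2$ group-sums are the entries of the product of a $(|A|/g)\times T$ matrix with a $T\times(|A|/g)$ matrix, computable in $\widetilde O((|A|/g)^2)$ time as long as $T\le(|A|/g)^{\alpha}$ for the rectangular matrix-multiplication exponent $\alpha$. Flagged group-pairs are verified by brute force in $g^2\poly(n)$ time each; since $\PP$ errs on a pair with probability $\le1/s$, choosing $s$ a large enough polynomial in $g$ keeps the expected verification cost below the matrix-multiplication cost, and $O(\log n)$ independent repetitions drive down the overall failure probability.

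The crux --- and the step I expect to demand the most care --- is keeping the degree, hence $T$, small. Plugged in directly, the sum fed to Theorem~\ref{thm:poly} has range equal to the total clause weight (or $cn$ in the sparse case), so its cube root enters the degree and $T$ becomes far too large; this naive route gives only $2^{n-n/O(\sqrt c\,\polylog)}$ at best, which is vacuous for a dense $4$-CNF. To reach the claimed bounds one must reorganize the computation --- exploiting that the formula has width $4$, so its unsatisfaction is a degree-$4$ polynomial and the feature dimension stays $\poly$ in the governing parameter --- so that each threshold test is over a quantity of only polylogarithmic range and hence $\PP$ has polylogarithmic degree; a natural way is to partition the variables into polylog-size blocks and merge blocks in a balanced tree, applying the bichromatic routine above at each merge, with a preliminary sparse-CNF variable-elimination step in the sparse case so that the parameter governing the block size and the feature dimension is $c$ rather than $n$. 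Optimizing the group size $g$ against $s$, the degree of $\PP$, the number of binary-search rounds, and the value of $\alpha$ then yields $2^{n-n/O(\log^2 n\log^2\log n)}$ for weighted $4$-CNF and $2^{n-n/O(\log^2 c\log^2\log c)}$ in the sparse case. Verifying that the block-merging preserves exactness while keeping $\PP$'s degree polylogarithmic throughout is the delicate part of the argument.
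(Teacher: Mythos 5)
Your high-level framing (split-and-list, feature vectors for cross clauses, threshold test via probabilistic PTF, rectangular matrix multiplication) is a reasonable template, and you correctly spot the central obstruction: the quantity you want to threshold is a weighted sum whose range is as large as the total clause weight, so feeding it directly into Theorem~\ref{thm:poly} forces degree that is polynomial in $n$ (or $c$) rather than polylogarithmic, and the construction collapses. But your proposed repair — partition the variables into polylog-size blocks and merge them in a balanced tree so that ``each threshold test is over a quantity of only polylogarithmic range and hence $\PP$ has polylogarithmic degree'' — is not worked out and, as stated, will not produce the claimed bound. You yourself flag it as ``delicate,'' but the difficulty is structural, not just delicate: merging blocks up a tree destroys the single bichromatic matrix-multiplication structure that the running time depends on, and the range of partial sums necessarily grows up the tree, so you cannot keep every threshold at polylog range without re-introducing the very problem you are trying to avoid. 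Moreover, chasing polylog degree for $\PP$ is the wrong target: the paper never achieves polylog degree for its probabilistic polynomial.

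The paper's actual route is different in the two ways that matter. First, it splits the $n$ variables three ways — $x,y$ of size $(n-s)/2$ and a small block $z$ of size $s=\alpha n$ — and uses the \emph{width-4} structure to write $f(x,y,z)=\sum_i f_i(x,z)y_i+\sum_i g_i(y,z)x_i+h(x,y)$, so the quantity to threshold (after shifting by the precomputable $h(x,y)$) is a linear form in the Boolean variables $x_i,y_i$ whose coefficients $f_i(x,z),g_i(y,z)$ are $O(\log n)$-bit integers. Second, it handles the weights by binary expansion: extended vectors $x^*,y^*$ carry the bits $x^*_{i,j,z}$ of $f_i(x,z)$ and $y^*_{i,j,z}$ of $g_i(y,z)$, and the weighted threshold $H_{z,t'}$ is rewritten as a sum over tuples $(t_0,\ldots,t_\ell)$ of a product of $\ell+1=O(\log n)$ \emph{unweighted equality} predicates $\predicate{\sum_i x^*_{i,j,z}y_i+\sum_i y^*_{i,j,z}x_i=t_j}$. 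Each factor is replaced by the Alman--Williams probabilistic polynomial of Fact~\ref{fact:AlmWil} (degree $O(\sqrt{ns})$ for error $1/(n^{O(1)}2^s)$) — crucially \emph{not} the cube-root PTF of Theorem~\ref{thm:poly}, contrary to your plan — and multiplying the $O(\log n)$ factors gives degree $O(\sqrt{ns}\log n)$. This is $\Theta(\sqrt\alpha\,n\log n)$, i.e.\ \emph{linear} in $n$; what saves the algorithm is not small degree but that the monomial count $\binom{O(n\log n)}{O(\sqrt{ns}\log n)}$ can be made $\ll 2^{0.1(n-s)/2}$ by choosing $\alpha=\Theta(1/(\log n\log\log n)^2)$ (and $\alpha=\Theta(1/(\log c\log\log c)^2)$ in the sparse case, where the per-variable weight bound $\sum_i(\mu_i+\nu_i)\le cn$ lets one truncate the binary expansion at level $\approx\log c$). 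Without the three-way decomposition and the bit-expansion trick, you cannot reach $2^{n-n/O(\log^2 n\log^2\log n)}$; your proposal is missing both.
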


\textbf{LTF-LTF Circuit SAT Algorithms and Lower Bounds.} Using our small sample space for probabilistic MAJORITY polynomials  (Theorem~\ref{derand}), we construct a new circuit satifiability algorithm for circuits with linear threshold functions (LTFs) which improves over several prior results. Let $\AC^0[d,m]\circ \LTF\circ \LTF[S_1,S_2,S_3]$ be the class of circuits with a layer of $S_3$ LTFs at the bottom layer (nearest the inputs), a layer of $S_2$ LTFs above the bottom layer, and a size-$S_1$ $\AC^0[m]$ circuit of depth $d$ above the two LTF layers.\footnote{Recall that for an integer $m \geq 2$, $\AC^0[m]$ refers to constant-depth unbounded fan-in circuits over the basis $\{\AND,\OR,\MOD_m\}$, where $\MOD_m$ outputs $1$ iff the sum of its input bits is divisible by $m$.} 

\begin{theorem} \label{ACC-LTF-SAT} 
For every integer $d > 0$, $m > 1$, and $\delta > 0$, there is an $\eps > 0$ and an algorithm for satisfiability of $\AC^0[d,m]\circ \LTF\circ \LTF[2^{n^{\eps}},2^{n^{\eps}},n^{2-\delta}]$ circuits that runs in deterministic $2^{n-n^{\eps}}$ time.
\end{theorem}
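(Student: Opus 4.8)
The plan is to adapt the polynomial method for circuit-SAT algorithms — the framework underlying Williams' satisfiability algorithm for $\ACC^0\circ\LTF$ — with two new ingredients: a way to absorb the \emph{second} $\LTF$ layer, and the logarithmic-seed probabilistic polynomials of Theorem~\ref{derand}, which are what make the final algorithm deterministic. Throughout, $\eps>0$ is a constant chosen small at the end (as a function of $d,m,\delta$), so that various quasi-polynomial overheads are negligible against $2^{n^\eps}$. I would begin with the standard existential-quantification step: $C$ is satisfiable iff $C'(y):=\bigvee_{z\in\{0,1\}^{n^\eps}}C(y,z)$ is satisfiable, where $y$ ranges over $n'=n-n^\eps$ variables. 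Folding the new OR into the top $\AC^0[m]$ circuit, $C'$ is again an $\AC^0[d+1,m]\circ\LTF\circ\LTF$ circuit; its top two layers grow to $2^{O(n^\eps)}$ gates, but — and this is where the subquadratic restriction on the bottom layer is essential — every bottom gate of $C'$ evaluates a comparison $[\langle a,y\rangle\ge\tau]$ using one of only $s=O((n')^{2-\delta})$ distinct linear forms $\langle a,y\rangle$, since fixing $z$ merely shifts the threshold. After the routine normalization making each $a$ an integer vector with $O(n\log n)$-bit entries, it then suffices to produce, for any fixed $y$, a representation of $C'(y)$ whose size scales with $s$ rather than with the (now exponential) number of gates, and to evaluate it over all $y$ by fast matrix multiplication.

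For the representation I would split $y=(u,w)$ into halves and build it from the inputs up. Each bottom comparison $[\langle a',u\rangle+\langle a'',w\rangle\ge\theta]$ is a linear threshold function of the $O(n\log n)$ bits of the two partial inner products; I replace it by a probabilistic polynomial of degree $O(\sqrt{n}\cdot\polylog n)$ and error $1/\poly(n)$ sampled from the $O(\log^2 n)$-bit space of Theorem~\ref{derand}. Since only $s=\poly(n)$ distinct forms have to be computed correctly, a single union bound shows that for each fixed $(u,w)$ all of them are correct with probability $1-1/\poly(n)$. The middle $\LTF$ layer and the $\AC^0[m]$ top are then handled deterministically: after the classical simulation of arbitrary-weight threshold gates by polynomial-weight threshold circuits and the Beigel--Tarui/Yao normal form for $\AC^0[m]$, they amount to applying a symmetric function to a low-degree integer combination of the bottom outputs, where the intermediate thresholds act on ranges of size $\poly(n)$ and can be realized by the extremal Chebyshev / discrete-Chebyshev polynomials of Theorem~\ref{thm:det:poly} at near-minimal degree. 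Expanding into monomials in $u$ and in $w$ separately, I obtain $C'(u,w)=g\big(\sum_{i=1}^{T}Q_i(u)R_i(w)\big)$ with $g:\Z\to\{0,1\}$ explicit, $Q_i(u),R_i(w)\in\Z$ of $\poly(n)$ bit-length, and $T\le s^{O(D)}$ where $D$ is the total degree used; the subquadratic bound on $s$ together with the degree bounds keeps $T\le 2^{o(n')}$.

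For evaluation and derandomization: for each fixed seed, form the $2^{n'/2}\times T$ matrix $(Q_i(u))$ and the $T\times 2^{n'/2}$ matrix $(R_i(w))$ and multiply them by Coppersmith's rectangular matrix multiplication; since $T\le 2^{o(n')}$ this costs $2^{n'}\cdot\poly(n)$ time and yields $\sum_iQ_i(u)R_i(w)$, hence $g(\cdot)$, for every pair. To remove randomness I would enumerate all $2^{O(\log^2 n)}$ seeds of the sample space and, for each pair $(u,w)$, output the majority of the resulting values of $g$; since the per-pair error is below $1/3$ this majority equals $C'(u,w)$ for \emph{every} pair, so scanning for a $1$ decides satisfiability. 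The total running time is $2^{O(\log^2 n)}\cdot 2^{n'}\cdot\poly(n)=2^{n-n^\eps+O(\log^2 n)}$, which is at most $2^{n-n^\eps}$ once $\eps$ is chosen small (formally, run the above for the class with parameter $2\eps$, which contains the class with parameter $\eps$).

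The main obstacle is the second paragraph: one cannot simply represent each of the two threshold layers by a generic degree-$\Theta(\sqrt n)$ probabilistic polynomial and compose them, since that already gives degree $\approx n$ and a dense, useless polynomial. The argument must keep the middle-layer thresholds ``above'' the $u/w$ split, exploit that after weight reduction the remaining thresholds act on $\poly(n)$-size ranges (so Theorem~\ref{thm:det:poly} applies at small degree), and — crucially — use that the bottom layer contributes only $s\le n^{2-\delta}$ distinct linear forms, so that $T=s^{O(D)}$ stays subexponential; the quadratic threshold on $s$ is exactly the break-even point. The second, quantitative obstacle is the derandomization: this is precisely why Theorem~\ref{derand}'s $O(\log n\log(ns))$-bit construction, rather than the $\Omega(n)$-random-bit construction of Alman--Williams, is needed — enumerating $2^{\Theta(n)}$ seeds would be hopeless, whereas $2^{O(\log^2 n)}$ seeds add only a quasi-polynomial factor that a small $\eps$ absorbs.
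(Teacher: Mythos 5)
Your high-level scaffold matches the paper's: existential quantification over $n^{\eps}$ variables to trade variables for size, conversion of the remaining circuit into a sparse ``bilinear'' representation split between two halves of the input, evaluation by rectangular matrix multiplication, and derandomization by using the $O(\log n\log(ns))$-bit sample space of Theorem~\ref{derand} so that all seeds can be enumerated. You also correctly identify the two central obstacles (the degree blowup from composing two threshold layers, and the need for a small sample space). But the resolution you propose for the first obstacle does not work, and it is precisely here that the paper's proof does something substantially different.

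You propose to replace every \emph{bottom} LTF by a probabilistic polynomial of degree $O(\sqrt{n}\,\polylog n)$ in the input bits, and then realize the \emph{middle} LTF layer (after weight reduction) by Chebyshev/discrete-Chebyshev PTFs from Theorem~\ref{thm:det:poly}. The problem is that these two substitutions compose multiplicatively. The middle LTFs have fan-in $\Theta(n^{2-\delta})$, and even after weight reduction the relevant sum ranges over an interval of size $\mathrm{poly}(n)$ that is polynomially \emph{larger} than $n$; the exact-threshold bound in Theorem~\ref{thm:det:poly} then forces the middle-layer degree to be $\Omega\bigl(\sqrt{n^{2-\delta}}\bigr)=\Omega(n^{1-\delta/2})$, and composing with the degree-$\Theta(\sqrt n)$ bottom polynomials gives total degree $\Omega(n^{3/2-\delta/2})\gg n$ for any $\delta<1$. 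Your appeal to ``keeping the middle thresholds above the $u/w$ split'' and to ``$T\le s^{O(D)}$'' does not repair this: each bottom LTF genuinely depends on both halves of the input, so once you have replaced it by a polynomial in $u$ and $w$, any monomial of the middle-layer PTF is a product of such polynomials and has degree that multiplies. You acknowledge this as ``the main obstacle,'' but the fix you gesture at is not carried out and, as stated, does not close the gap.

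The paper avoids the composition entirely by \emph{never} replacing the bottom LTFs with polynomials over the input bits. Instead: (i) the \emph{middle} LTFs are first converted via a modified Maciel--Th\'erien construction into $\AC^0\circ\MAJ$ with $\MAJ$ fan-in $n^{2-\delta/2}$ (Theorem~\ref{LTF-as-AC0MAJ}); (ii) those $\MAJ$ gates are replaced by $\F_2$-probabilistic polynomials of degree $O(n^{1-\delta/8})$ from Theorem~\ref{derand}, with the union bound taken over the $2^{O(n^{\eps})}$ gates, and the derandomization is folded \emph{into the circuit} by adding a top $\MAJ$ over all seeds; (iii) Beigel--Tarui is applied \emph{only} to the small top $\MAJ\circ\ACC^0$ part (the paper explicitly warns that applying it to the whole circuit is intractable), followed by a modulus-amplification step to absorb the $\XOR$ layer produced by the $\F_2$-polynomials; (iv) the \emph{bottom} LTFs are handled separately, as black-box Boolean values, via the bucketing/weight-reduction Lemma~\ref{LTF-to-AND}, which converts each bottom LTF into an integer $\SUM\circ\AND$ over $O(\log K)$ \emph{fresh} variables, one block per half of the input; only after this step do the $Q_i(u)R_i(w)$ factorizations appear. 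Your proposal is missing steps (i)--(iv) in this form, and in particular the degree accounting that makes $T\le 2^{o(n)}$ really does require treating the bottom LTF outputs as abstract variables and never expanding them into the inputs. You would also need the modulus-amplification step or an analogue, since without it the $\XOR$ gates from the $\F_2$-polynomials cannot be merged into a depth-two circuit of subexponential size.
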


Williams~\cite{WilliamsACCTHR14} gave a comparable SAT algorithm for $\ACC^0 \circ \THR$ circuits of $2^{n^{\eps}}$ size, where $\eps > 0$ is sufficiently small.\footnote{Recall $\ACC^0$ is the infinite union of $\AC^0[m]$ for all integers $m \geq 2$.} Theorem~\ref{ACC-LTF-SAT} strictly generalizes the previous algorithm, allowing another layer of $n^{2-\eps}$ linear threshold functions below the existing $\THR$ layer. Theorem~\ref{ACC-LTF-SAT} also trivially implies deterministic SAT algorithms for $\THR \circ \THR$ circuits of up to $n^{2-o(1)}$ gates, improving over the recent SAT algorithms of Chen, Santhanam, and Srinivasan~\cite{ChenSS15} which only work for $n^{1+\eps}$-wire circuits for $\eps \ll 1$, and the SAT algorithms of Impagliazzo, Paturi, and Schneider~\cite{IPS13}.

Here we sketch the ideas in the SAT algorithm for $\ACC^0 \circ \LTF \circ \LTF$. Similar to the SAT algorithm for $\ACC^0 \circ \LTF$ circuits~\cite{WilliamsACCTHR14}, the bottom layer of $\LTF$s can be replaced by a layer of DNFs, via a weight reduction trick. 
We replace $\LTF$s in the middle layer with $\AC^0 \circ \MAJ$ circuits (modifying a construction of Maciel and Th\'erien~\cite{Maciel-Therien98} to keep the fan-in of $\MAJ$ gates low), then replace these $\MAJ$ gates of $n^{2-\Theta(\delta)}$ fan-in with probabilistic $\F_2$-polynomials of degree $n^{1-\Theta(\delta)+\Theta(\eps)}$ over a small sample space, provided by Theorem~\ref{derand}. Taking a majority vote over all samples, and observing that an $\F_2$-polynomial is a $\MOD_2 \circ \AND$ circuit, we obtain a $\MAJ \circ \ACC^0$ circuit, but with $2^{n^{1-O(\delta)}}$ size in some of its layers. By carefully applying known depth reduction techniques, we can convert the circuit into a depth-two circuit of size $2^{n^{1-\Omega(\eps)}}$ which can then be evaluated efficiently on many inputs. (This is not obvious: applying the Beigel-Tarui depth reduction to a $2^{O(n^{1-\eps})}$-size circuit would make its new size  \emph{quasi-polynomial in $2^{O(n^{1-\eps})}$}, yielding an intractable bound of $2^{n^{O(1)}}$.) 

Applying the known connection between circuit satisfiability algorithms and circuit lower bounds for ${\sf E}^{\NP}$ problems~\cite{Williams10,WilliamsJACM14,JMV13}, the following is immediate:

\begin{corollary} \label{NEXP-lbs} For every $d > 0$, $m > 1$, and $\delta \in (0,1)$, there is an $\eps > 0$ such that the class ${\sf E}^{\NP}$ does not have non-uniform circuits in $\AC^0[d,m]\circ \LTF \circ \LTF[2^{n^{\eps}},2^{n^{\eps}},n^{2-\delta}]$. In particular, for every $\eps > 0$, ${\sf E}^{\NP}$ does not have $\ACC^0 \circ \LTF \circ \LTF$ circuits where the $\ACC^0 \circ \LTF$ subcircuit has $2^{n^{o(1)}}$ size and the bottom $\LTF$ layer has $n^{2-\eps}$ gates.
\end{corollary}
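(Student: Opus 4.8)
The plan is to derive this purely from Theorem~\ref{ACC-LTF-SAT} via the now-standard ``algorithmic method'' connecting circuit-SAT algorithms to circuit lower bounds~\cite{Williams10,WilliamsJACM14}, together with the refinement of Jahanjou, Miles, and Viola~\cite{JMV13}, which sharpens the conclusion from a non-containment of \NEXP\ to one for ${\sf E}^{\NP}$ at the very same size bound that the SAT algorithm handles. Recall the template: if a circuit class ${\cal C}$ is closed — up to the size slack we are allowed — under taking the AND/OR of a constant number of circuits, under hardwiring constants and variables into inputs, and under prepending one extra $\AC^0$-type gate, and if ${\cal C}$-SAT on $n$-input circuits admits a deterministic algorithm running in time $2^{n}/n^{\omega(1)}$ (here the much stronger $2^{n-n^{\eps}}$), then ${\sf E}^{\NP}$ has no ${\cal C}$-circuits of the corresponding size. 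Theorem~\ref{ACC-LTF-SAT} supplies exactly such an algorithm, so the only real work is checking that the class $\AC^0[d,m]\circ\LTF\circ\LTF[2^{n^{\eps}},2^{n^{\eps}},n^{2-\delta}]$ is ``typical'' enough for the reduction.

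First I would fix the target parameters $d>0$, $m>1$, $\delta\in(0,1)$, and trace the operations the lower-bound reduction performs on a candidate ${\cal C}$-circuit. There are only three: (i) ANDing/ORing a constant number of copies, which multiplies the $\AC^0[m]$ size by a constant and adds one top gate (hence increases the depth parameter by a constant, say $d\to d+c_0$), and multiplies each of the two $\LTF$-layer gate counts by a constant; (ii) substituting constants/variables for inputs, which only shrinks the circuit; (iii) composing with a $\poly(n)$-size $\AC^0$ verification gadget at the top, again absorbed into a constant additive bump in depth and a polynomial-factor increase in the top-circuit size. So, given the target, a circuit from $\AC^0[d,m]\circ\LTF\circ\LTF[2^{n^{\eps'}},2^{n^{\eps'}},n^{2-\delta}]$ is turned by the reduction, for large $n$, into one from $\AC^0[d+c_0,m]\circ\LTF\circ\LTF[2^{n^{2\eps'}},2^{n^{2\eps'}},n^{2-\delta/2}]$ — the halving of $\delta$ and the squaring of $2^{n^{\eps'}}$ simply soak up the constant and polynomial factors. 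Now invoke Theorem~\ref{ACC-LTF-SAT} with parameters $(d+c_0,m,\delta/2)$ to obtain some $\eps>0$ and a deterministic $2^{n-n^{\eps}}$-time SAT algorithm for $\AC^0[d+c_0,m]\circ\LTF\circ\LTF[2^{n^{\eps}},2^{n^{\eps}},n^{2-\delta/2}]$; choosing $\eps'=\eps/2$ matches the parameters and beats brute force by a $2^{n^{\Omega(1)}}$ factor. The algorithmic method then yields ${\sf E}^{\NP}\not\subseteq\AC^0[d,m]\circ\LTF\circ\LTF[2^{n^{\eps'}},2^{n^{\eps'}},n^{2-\delta}]$, which is the first assertion.

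For the ``in particular'' statement, fix $\eps>0$ and suppose ${\sf E}^{\NP}$ had $\ACC^0\circ\LTF\circ\LTF$ circuits in which the $\ACC^0\circ\LTF$ subcircuit has size $2^{n^{o(1)}}$ and the bottom $\LTF$ layer has $n^{2-\eps}$ gates. Any such family uses a single fixed modulus $m$ and a single fixed depth $d$, so it lies in $\AC^0[d,m]\circ\LTF\circ\LTF[2^{n^{o(1)}},2^{n^{o(1)}},n^{2-\eps}]$. Applying the first part with $\delta=\eps$ produces some $\eps_0>0$, and since $2^{n^{o(1)}}\le 2^{n^{\eps_0}}$ for all sufficiently large $n$, this contradicts ${\sf E}^{\NP}\not\subseteq\AC^0[d,m]\circ\LTF\circ\LTF[2^{n^{\eps_0}},2^{n^{\eps_0}},n^{2-\eps}]$, proving the corollary.

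The only genuinely non-routine point — and hence the main obstacle — is the closure bookkeeping of the first step: one must be certain that no operation in the reduction pushes the bottom $\LTF$ count past the $n^{2-\delta}$ ceiling or the $\AC^0[m]$ portion past $2^{n^{\eps}}$. Because the reduction uses only a bounded number of AND/OR/copy operations, projections, and one top-level $\AC^0$ composition, every relevant parameter changes by at most a constant factor (for the two $\LTF$ layers) or a polynomial factor plus an additive constant in depth (for the $\AC^0[m]$ layer), all of which are comfortably within the slack obtained by quantifying over the depth $d$ and shrinking $\eps$; once this is verified, the corollary is immediate.
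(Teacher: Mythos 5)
Your proof is correct and follows exactly the same route as the paper, which simply invokes the known SAT-algorithm-to-lower-bound connection for ${\sf E}^{\NP}$ (Williams, Jahanjou--Miles--Viola) applied to Theorem~\ref{ACC-LTF-SAT} and declares the corollary immediate. The closure and parameter-tracking bookkeeping you spell out is precisely the standard implicit step being invoked, and your handling of the quantifiers in the ``in particular'' clause is also as intended.
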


Most notably, Corollary~\ref{NEXP-lbs} proves lower bounds with $n^{2-\eps}$ LTFs on the bottom layer and \emph{subexponentially many} LTFs on the second layer. This improves upon recent $\LTF \circ \LTF$ gate lower bounds of Kane and Williams~\cite{KaneW15}, at the cost of raising the complexity of the hard function from $\TC^0_3$ to ${\sf E}^{\NP}$. Suguru Tamaki~\cite{Tamaki16} has recently reported similar results for depth-two circuits with both symmetric and threshold gates.

\textbf{A Powerful Randomized SAT Algorithm.} Finally, combining the probabilistic PTF for MAJORITY (Theorem~\ref{thm:poly}) with the probabilistic polynomial of \cite{JoshRyan}, we give a randomized SAT algorithm for a rather powerful class of circuits. The class $\MAJ \circ \AC^0 \circ \LTF \circ \AC^0 \circ \LTF$ denotes the class of circuits with a majority gate at the top, along with two layers of linear threshold gates, and arbitrary $O(1)$-depth $\AC^0$ circuitry between these three layers. This circuit class is arguably much more powerful than $\TC^0_3$ ($\MAJ \circ \MAJ \circ \MAJ$), based on known low-depth circuit constructions for arithmetic functions (e.g. \cite{Chandra-Stockmeyer-Vishkin84,Maciel-Therien98,MacielT99}). 

\begin{theorem} \label{TC03-SAT} 
For all $\eps > 0$ and integers $d \geq 1$, there is a $\delta > 0$ and a randomized satisfiability algorithm for $\MAJ \circ \AC^0 \circ \LTF \circ \AC^0 \circ \LTF$ circuits of depth $d$ running in $2^{n - \Omega(n^{\delta})}$ time, on circuits with the following properties:\begin{compactitem} 
\item the top $\MAJ$ gate, along with every $\LTF$ on the middle layer, has $O(n^{6/5-\eps})$ fan-in, and
\item there are $O(2^{n^{\delta}})$ many $\AND/\OR$ gates (anywhere) and $\LTF$ gates at the bottom layer. 
\end{compactitem} 
\end{theorem}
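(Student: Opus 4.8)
The plan is to run the polynomial method for circuit satisfiability: convert a given circuit $C$ from the class into a single probabilistic PTF of degree $n^{1-\Omega(\eps)}$ that agrees with $C$ on every input with high probability, and then feed this into the rectangular-matrix-multiplication evaluation framework used throughout this line of work (cf.\ Theorem~\ref{ACC-LTF-SAT} and~\cite{WilliamsACCTHR14}). Essentially all of the work is in the degree bound, and the fan-in restriction $n^{6/5-\eps}$ is precisely the break-even point of the construction: the top $\MAJ$ will be handled by an object of degree $\approx(\text{fan-in})^{1/3}$ (Theorem~\ref{thm:poly}) and the middle $\LTF$s by objects of degree $\approx\sqrt{\text{fan-in}}$ (\cite{JoshRyan}), and since $\tfrac{1}{1/3+1/2}=\tfrac65$, fan-in $n^{6/5}$ is exactly what makes the composed degree land at $n^{1}$, with the $-\eps$ buying the polynomially-smaller degree that a nontrivial SAT algorithm needs.

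I would build the PTF bottom-up. The bottom $\LTF$ layer is eliminated by the weight-reduction/DNF trick already used for Theorem~\ref{ACC-LTF-SAT}: once all but an $O(n^{\delta})$-size block of the inputs are fixed --- which is what the matrix-multiplication step does anyway --- each bottom $\LTF$ becomes a function of $O(n^{\delta})$ bits, hence a DNF with $2^{O(n^{\delta})}$ terms; since there are only $O(2^{n^{\delta}})$ of them the monomial count stays $2^{n^{O(\delta)}}$ and the degree cost is only $n^{O(\delta)}$. The two plain-$\AC^0$ layers are replaced by Razborov--Smolensky probabilistic $\F_2$-polynomials; as each $\AC^0$ subcircuit has depth $\le d$ and $O(2^{n^{\delta}})$ gates this costs degree $(n^{\delta}\polylog n)^{O(d)}=n^{O(\delta d)}$, which is why $\delta$ must be chosen small relative to $\eps/d$. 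The middle $\LTF$ layer is where the first new ingredient enters: following the careful Maciel--Th\'erien--style iterated-addition construction of Theorem~\ref{ACC-LTF-SAT}, rewrite each middle $\LTF$ of fan-in $F=O(n^{6/5-\eps})$ as a constant-depth $\AC^0$ circuit over $\MAJ$ gates of fan-in $\OO(F)$, then replace each such $\MAJ$ by a probabilistic polynomial of degree $O(\sqrt{F\log s})=\OO(n^{3/5-\eps/2})$ using~\cite{JoshRyan}. Finally the top $\MAJ$ of fan-in $g=O(n^{6/5-\eps})$ is replaced, using the second new ingredient, by the probabilistic PTF $\PP_{g,s,g/2,1/g}$ of Theorem~\ref{thm:poly}, of degree $O(g^{1/3}\log^{2/3}(gs))=\OO(n^{2/5-\eps/3})$, whose ``good threshold behavior'' ($|P|\le1$ when $C=0$, $P\ge s$ when $C=1$) is exactly what lets the subsequent summation detect a satisfying assignment.

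Composing these --- the top PTF of degree $\OO(n^{2/5-\eps/3})$ over the middle polynomials of degree $\OO(n^{3/5-\eps/2})$, with the $\AC^0$ and bottom-$\LTF$ layers contributing $n^{O(\delta d)}$ and $n^{O(\delta)}$ factors --- the total degree multiplies out to $n^{(2/5+3/5)-\Omega(\eps)}=n^{1-\Omega(\eps)}$, provided $\delta$ is small enough in terms of $\eps$ and $d$. The error is a union bound over the $O(2^{n^{\delta}})$ gates and over the variable block the SAT framework enumerates, so every ingredient is invoked with error parameter $s=2^{\Theta(n^{\delta})}$; then $\log s$ and $\log^{2/3}(gs)$ are $n^{O(\delta)}$ and harmless. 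A probabilistic PTF for $C$ of degree $n^{1-\delta'}$ then plugs into the standard polynomial-method SAT framework (split the $n$ variables, apply Coppersmith's fast rectangular matrix multiplication to the resulting low-rank decomposition, and read off satisfiability), yielding randomized time $2^{n-\Omega(n^{\delta})}$ for an appropriate $\delta=\delta(\eps,d)>0$.

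The main obstacle, and where the care concentrates, is the middle $\LTF$ layer: driving an $\LTF$ of fan-in $F$ and arbitrary (possibly exponential) weights down to an $\AC^0$ network of $\MAJ$ gates of fan-in only $\OO(F)$ --- not $\poly(F)$ --- so that the square root in~\cite{JoshRyan}'s degree bound stays at $\OO(n^{3/5})$ rather than overshooting $n$; this requires the weight-reduced, block-decomposed form of the $\LTF$ together with a depth-efficient iterated-addition network. A secondary nuisance is tracking which ring each piece lives in: the $\AC^0$ and middle-$\LTF$ pieces are most naturally built over $\F_2$, while the top-$\MAJ$ PTF and the final counting step live over $\Z$ (or $\R$), so one must either carry all inner pieces as $\{0,1\}$-valued integer polynomials or insert a $\MOD_2$-arithmetization step, keeping the degree and the number of monomials under control at each conversion.
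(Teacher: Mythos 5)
Your overall recipe matches the paper's proof: replace the top $\MAJ$ with the probabilistic PTF of Theorem~\ref{thm:poly} (degree about $\text{fan-in}^{1/3}$), the middle $\LTF$s via Maciel--Th\'erien into $\MAJ$ gates of near-linear fan-in and then via Alman--Williams probabilistic polynomials (degree about $\sqrt{\text{fan-in}}$), the $\AC^0$ layers via probabilistic polynomials, compose for total degree $n^{1-\Omega(\eps)}$, enumerate $n^{\delta}$ variables with the threshold behavior detecting a satisfying assignment, reduce the bottom $\LTF$s by weight reduction, and finish with rectangular matrix multiplication. The fan-in cap $n^{6/5-\eps}$ is correctly identified as the break-even point of the degree composition. (A minor correction: the paper builds the $\AC^0$ and inner-$\MAJ$ polynomials directly over $\Z$ via Theorem~\ref{AC0-integers} and Theorem~\ref{derand}, so the $\F_2$-vs.-$\Z$ reconciliation you flag at the end does not actually arise.)

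However, your reasoning for the bottom $\LTF$ layer has a genuine gap. You write that the matrix-multiplication step ``fixes all but an $O(n^{\delta})$-size block of the inputs,'' so that each bottom $\LTF$ becomes a function of $O(n^{\delta})$ bits and hence a DNF with $2^{O(n^{\delta})}$ terms. This is backwards: the algorithm fixes only $n^{\delta}$ variables for the outer enumeration, and the remaining $n-n^{\delta}$ variables are split into two halves indexed as a combinatorial rectangle $A\times B$ for the matrix product; they are never fixed. Each bottom $\LTF$ still depends on roughly $n$ variables throughout, so it is not a small function and cannot be tabulated as a $2^{O(n^{\delta})}$-term DNF. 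The actual content of Lemma~\ref{LTF-to-AND}, which you name but do not use, is Matou\v sek's sort-and-bucket trick: the linear form of each bottom $\LTF$ on input $(a,b)$ splits as $M_A[a]$ plus a constant minus $M_B[b]$, the $2N$ column values are sorted and partitioned into $K\approx 2^{2n^{1-\eps/2}}$ buckets, and the threshold comparison becomes a comparison of two $O(\log K)$-bit bucket indices, realizable as a DNF with $O(\log K)$ conjuncts of fan-in $2\log K$, at most one of which is true (hence an integer sum of $\AND$s). This step contributes not ``degree $n^{O(\delta)}$'' as you assert but $O(\log K)=O(n^{1-\eps/2})$ fan-in per bottom gate; after distributing against the middle $\AND$s of comparable fan-in one obtains a $\SYM\circ\AND$ circuit of size $n^{O(n^{1-\eps/2})}$, which fits Coppersmith's rectangular multiplication precisely because $\eps>0$. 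Without the sorting/bucketing, the bottom $\LTF$s on $\Theta(n)$ free variables have no small DNF representation and the construction stalls at this step.
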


Theorem~\ref{TC03-SAT} applies the probabilistic PTF of degree about $n^{1/3}$ (Theorem~\ref{thm:poly}) to the top $\MAJ$ gate, probabilistic polynomials over $\Z$ of degree about $n^{1/2}$ (Theorem~\ref{derand}) to the middle LTFs, and  weight reduction 
to the bottom LTFs; the rest can be represented with $\poly(n^{\delta})$ degree.  

It would not be surprising (to at least one author) if the above circuit class contained strong pseudorandom function candidates; that is, it seems likely that the Natural Proofs barrier applies to this circuit class. Hence from the circuit lower bounds perspective, the problem of derandomizing the SAT algorithm of Theorem~\ref{TC03-SAT} is extremely interesting.


\section{Preliminaries}

\textbf{Notation.} In what follows, for $(x_1,\ldots,x_n) \in \{0,1\}^n$ define $|x| := \sum_{i=1}^n x_i$. For a logical predicate $P$, we use the notation $\brac{P}$ to denote the function which outputs $1$ when $P$ is true, and $0$ when $P$ is false.

For $\theta \in [0,1]$, define $\TH_\theta : \{ 0,1 \}^n \to \{ 0, 1 \}$ to be the \emph{threshold function} $\TH_\theta(x_1,\ldots,x_n) := \brac{|x|/n \geq \theta}$. In particular, $\TH_{1/2} = {\rm MAJORITY}$.

For classes of circuits ${\cal C}$ and ${\cal D}$, ${\cal C} \circ {\cal D}$ denotes the class of circuits consisting of a single circuit $C \in {\cal C}$ whose inputs are the outputs of some circuits from ${\cal D}$. That is, ${\cal C} \circ {\cal D}$ is simply the composition of circuits from ${\cal C}$ and ${\cal D}$.

\textbf{Rectangular Matrix Multiplication.} One of our key tools is fast rectangular matrix multiplication:

\begin{lemma}[Coppersmith~\cite{Coppersmith82}]\label{rectangular} For all sufficiently large $N$, multiplication of an $N \times N^{.172}$ matrix with an $N^{.172} \times N$ matrix can be done in $O(N^2 \log^2 N)$ arithmetic operations over any field.\end{lemma}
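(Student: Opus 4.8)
The plan is to reconstruct Coppersmith's bilinear-algorithm argument in the language of tensor rank. Recall that multiplying an $n\times m$ matrix by an $m\times p$ matrix is the bilinear map attached to the matrix-multiplication tensor $\langle n,m,p\rangle$; a rank-$r$ decomposition of this tensor yields an arithmetic circuit computing the product in $O(r+nm+mp+np)$ ring operations, and tensor rank is submultiplicative under Kronecker product, $R(\langle n_1n_2,m_1m_2,p_1p_2\rangle)\le R(\langle n_1,m_1,p_1\rangle)\cdot R(\langle n_2,m_2,p_2\rangle)$. Hence it suffices to produce, for a fixed constant $\alpha\ge 0.172$, a family of bilinear algorithms witnessing $R(\langle M,M^{\alpha},M\rangle)=O(M^{2}\polylog M)$ along a sequence of values $M$ that is dense up to constant factors; the exponent $.172$ and non-square target dimensions are then obtained by zero-padding, and arbitrary $N$ by taking $M$ the smallest construction size above $N$.

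First I would fix a small explicit \emph{base gadget}: a border-rank (approximate) bilinear identity computing a disjoint sum $\bigoplus_i\langle a_i,b_i,a_i\rangle$ of rectangular products using only $r$ essential multiplications, with the parameters chosen so that the exponent coming out of Sch\"onhage's asymptotic-sum inequality, applied to this gadget, forces the thin dimension to scale like the $\alpha$-th power of the fat dimension for some $\alpha\ge 0.172$. Coppersmith's particular choice of gadget is exactly what pins down the numerical value $0.172$, and this is the step I expect to be the main obstacle: it is the one genuinely clever ingredient, whereas everything downstream is bookkeeping. Working with a border-rank identity rather than an exact one is convenient here; it is converted to an exact bilinear algorithm by the standard polynomial-interpolation trick, which over an arbitrary field costs passing to an extension of degree $O(\log N)$ and contributes one of the two logarithmic factors in the claimed bound.

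Next I would raise the base gadget to its $k$-th Kronecker power and apply the asymptotic-sum inequality to collapse the resulting disjoint sum of rectangular products into a single rectangular product of dimensions about $(M,M^{\alpha},M)$, choosing $k=\Theta(\log N)$ so that $M\approx N$. The loss incurred at each of the $O(1)$ combining stages is of the form $(k+1)^{O(1)}$ (it measures how a degree-$k$ form splits among a bounded number of blocks), hence $\poly(\log N)$; tracking this together with the interpolation loss and the $O(\log N)$ padding/recursion levels, and optimizing the constants as Coppersmith does, yields the stated $O(N^{2}\log^{2}N)$ operation count. Finally I would verify field-independence: the base gadget, the Kronecker powers, and the asymptotic-sum combination use only ring operations together with a bounded supply of explicit scalars, and the single interpolation step needs only a field of size $\Omega(\log N)$, available after a cheap extension, so the entire algorithm runs over any field within the claimed bound. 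Beyond constructing the base gadget, the two points needing care are (i) keeping the interpolation-over-a-general-field step down to a single logarithmic factor and (ii) bounding the asymptotic-sum overhead by a fixed power of $\log N$ rather than $N^{o(1)}$.
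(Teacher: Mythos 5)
Note first that the paper does not prove this lemma; the sentence immediately after the statement reads ``A proof can be found in the appendix of [WilliamsACCTHR14],'' so there is no in-paper argument to compare against. On its own terms, your proposal correctly sketches the Schönhage/Coppersmith framework (border rank of a small base tensor, Kronecker powering, extraction of a rectangular product, interpolation to remove the $\varepsilon$), but it leaves the two load-bearing steps unfilled, and both of them \emph{are} the content of the theorem rather than bookkeeping.

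The first gap is the base identity. You acknowledge that you do not have the ``base gadget'' and call it ``the one genuinely clever ingredient'' and ``the main obstacle''; with it missing, ``$\alpha \ge 0.172$'' is an assertion, not a derivation. The ingredient Coppersmith actually starts from is Schönhage's explicit border-rank identity of rank $n^2+1$ for a direct sum involving $\langle n,1,n\rangle$ together with a trailing inner-product term, plus a short optimization over $n$ that pins down the exponent. The second gap is the operation count, and it is the more subtle one. Invoking ``Schönhage's asymptotic sum inequality'' as the combining mechanism does not give $O(N^2\log^2 N)$: the $\tau$-theorem in its standard form is an \emph{exponent} statement, established by iterating the base construction, and that iteration inherently carries an $N^{o(1)}$ overhead rather than a polylogarithmic one. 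Coppersmith's contribution is precisely that for this rectangular target one can avoid the iteration: one takes a \emph{single} $k$-th tensor power with $k=\Theta(\log N)$, extracts the rectangular product directly by a counting argument, and the only loss is the border-rank-to-rank interpolation, whose degree is polynomial in $k$ and hence contributes $\polylog N$. You flag both ``bounding the asymptotic-sum overhead by a fixed power of $\log N$ rather than $N^{o(1)}$'' and the interpolation cost as ``points needing care,'' which is honest, but you do not resolve either; in particular your claim that each combining stage loses only $(k+1)^{O(1)}$ is exactly what a black-box application of the asymptotic sum inequality does not deliver. To turn this outline into a proof you would need to (i) exhibit the base identity and carry out the optimization yielding $0.172$, and (ii) replace the black-box $\tau$-theorem step with Coppersmith's direct single-tensor-power extraction so that the total overhead is provably $O(\log^2 N)$.
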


A proof can be found in the appendix of~\cite{WilliamsACCTHR14}.

\textbf{Chebyshev Polynomials in TCS.} Another key to our work is that we find new applications of Chebyshev polynomials to algorithm design. This is certainly not a new phenomenon in itself; here we briefly survey some prior related usages of Chebyshev polynomials. First, Nisan and Szegedy~\cite{Nisan-Szegedy94} used Chebyshev polynomials to compute the OR function on $n$ Boolean variables with an ``approximating'' polynomial $p: \R^n \rightarrow \R$, such that for all $x \in \{0,1\}^n$ we have $|OR(x) - p(x)| \leq 1/3$, yet $\deg(p) = O(\sqrt{n})$. They also proved the degree bound is tight up to constants in the big-O; Paturi~\cite{Paturi92} generalized the upper and lower bound to all symmetric functions.

This work has led to several advances in learning theory. Building on the polynomials of Nisan and Szegedy, Klivans and Servedio~\cite{Klivans-Servedio01} showed how to compute an OR of $t$ ANDs of $w$ variables with a PTF of degree $O(\sqrt{w} \log t)$, similar to our degree bound for computing an OR of $t$ MAJORITYs of $w$ variables of Theorem~\ref{thm:det:poly} (however, note our bound in the ``exact'' setting is a bit better, due to our use of discrete Chebyshev polynomials). They also show how to compute an OR of $s$ ANDs on $n$ variables with a \emph{deterministic} PTF of $O(n^{1/3} \log s)$ degree, similar to our cube-root-degree probabilistic PTF for the OR of MAJORITY of Theorem~\ref{thm:poly} in the ``exact'' setting. However, it looks difficult to generalize Klivans-Servedio's $O(n^{1/3} \log s)$ degree bound to compute an OR of MAJORITY: part of their construction uses a reduction to decision lists which works for conjunctions but not for MAJORITY functions. Klivans, O'Donnell and Servedio~\cite{KlivansOS04} show how to compute an AND of $k$ MAJORITY on $n$ variables with a PTF of degree $O(\sqrt{w} \log k)$. By a simple transformation via De Morgan's law, there is a polynomial for OR of MAJORITY with the same degree. Their degree is only slightly worse than ours in terms of $k$ (because we use discrete Chebyshev polynomials).

In streaming algorithms, Harvey, Nelson, and Onak~\cite{HarveyNO08} use Chebyshev polynomials to design efficient algorithms for computing various notions of entropy in a stream. As a consequence of a query upper bound in quantum computing, Ambainis et al.~\cite{Ambainis2010any} show how to approximate any Boolean formula of size $s$ with a polynomial of degree $\sqrt{s}^{1+o(1)}$, improving on earlier bounds of O'Donnell and Servedio~\cite{ODonnellS10} that use Chebyshev polynomials. Sachdeva and Vishnoi~\cite{sachdeva2013approximation} give applications of Chebyshev polynomials to graph algorithms and matrix algebra. Linial and Nisan \cite{linial1990approximate} use Chebyshev polynomials to approximate inclusion-exclusion formulas, and Sherstov \cite{sherstov2008approximate} extends this to arbitrary symmetric functions.

\section{Derandomizing Probabilistic Polynomials for Threshold Functions}

In this section, we revisit the previous probabilistic polynomial for the majority function on $n$ bits, and show it can be implemented using only $\polylog(n, s)$ random bits. Our construction is essentially identical to that of \cite{JoshRyan}, except that we use far fewer random bits to sample entries from the input vector in the recursive step of the construction. 

For the analysis, we need a Chernoff bound for bits with limited independence:

\begin{lemma}[\cite{SSS95} Theorem 5 (I)(b)] \label{SSS}
If $X$ is the sum of $k$-wise independent random variables, each of which is confined to the interval $[0,1]$, with $\mu = \E[X]$, $\delta \leq 1$, and $k = \lfloor \delta^2 \mu e^{-1/3} \rfloor$, then
$$\Pr[|X - \mu| \geq \delta \mu] \leq e^{- \delta^2 \mu / 3}.$$
\end{lemma}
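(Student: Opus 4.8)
The plan is to prove this tail bound by the $k$-th moment method, which is the standard route to Chernoff-type inequalities under only limited independence. The usual Chernoff argument bounds $\E[e^{\lambda(X-\mu)}]$ by multiplying per-variable moment generating functions, and so genuinely requires full independence; by contrast, a bound on a single integer moment $\E[(X-\mu)^k]$ expands into a sum of monomials each mentioning at most $k$ of the underlying variables, so $k$-wise independence is exactly enough. The odd-looking choice $k=\down{\delta^2\mu e^{-1/3}}$ in the statement is precisely what comes out after optimizing this moment argument.

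Concretely, I would write $X=\sum_{i=1}^m X_i$ with $X_i\in[0,1]$, set $Y_i := X_i-\E[X_i]$ and $S:=X-\mu=\sum_i Y_i$, so $\E[Y_i]=0$ and $|Y_i|\le 1$. Taking $k$ even (replace it by $k-1$ if necessary, absorbing the loss into constants) and applying Markov's inequality to $S^k\ge 0$ gives
$$\Pr[\,|S|\ge\delta\mu\,]\ \le\ \frac{\E[S^k]}{(\delta\mu)^k}.$$
Then I would expand $\E[S^k]=\sum_{i_1,\ldots,i_k}\E[Y_{i_1}\cdots Y_{i_k}]$ and group the $k$-tuples by the partition of $\{1,\ldots,k\}$ recording which coordinates coincide. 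Each monomial mentions at most $k$ distinct variables, so $k$-wise independence factors its expectation as $\prod_B \E[Y_{j_B}^{|B|}]$ over the blocks $B$ of that partition. A block of size $1$ contributes a factor $\E[Y_{j_B}]=0$, so only partitions with all blocks of size $\ge 2$ survive, and such a partition has at most $k/2$ blocks. For the surviving terms, $|Y_i|\le 1$ yields $\E[Y_i^{\,b}]\le\E[Y_i^2]=\mathrm{Var}(X_i)$ for all $b\ge 2$, and $\sum_i\mathrm{Var}(X_i)\le\sum_i\E[X_i]=\mu$ because $X_i\in[0,1]$. Summing the per-block factors over distinct representatives, and then over the boundedly many partition shapes into blocks of size $\ge 2$, produces an estimate of the form $\E[S^k]\le (ck\mu)^{k/2}$ for an absolute constant $c$, valid when $k\le\mu$; this combinatorial moment estimate is exactly the limited-independence moment inequality of \cite{SSS95} (in the tradition of earlier moment-method tail bounds).

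Plugging in gives $\Pr[\,|S|\ge\delta\mu\,]\le (ck\mu)^{k/2}/(\delta\mu)^k=(ck/(\delta^2\mu))^{k/2}$, and the hypothesis $\delta\le 1$ guarantees $k=\down{\delta^2\mu e^{-1/3}}\le\mu$, so the moment estimate applies; with this value of $k$ the base $ck/(\delta^2\mu)$ is a constant strictly below $1$, and tracking the numerology turns $(ck/(\delta^2\mu))^{k/2}$ into the claimed $e^{-\delta^2\mu/3}$. The one genuinely delicate step is the combinatorial moment bound: one must count the tuples inducing a given ``all blocks $\ge 2$'' partition and sum over all such partition shapes tightly enough to land on the clean exponent $\tfrac13$ rather than a weaker absolute constant, and it is exactly this calibration that makes it worth invoking \cite{SSS95} rather than redoing it inline.
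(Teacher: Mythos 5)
The paper does not prove this lemma; it is stated as a direct citation to \cite{SSS95}, Theorem~5(I)(b), and used as a black box in the derivation of Corollary~\ref{chernoff}, so there is no in-paper proof to compare against. That said, your moment-method sketch is the right reconstruction of the cited source's argument: \cite{SSS95} do prove their limited-independence Chernoff bounds by bounding a single integer moment $\E[(X-\mu)^k]$ (their Theorem~5 is derived from a $k$-th moment inequality), exploiting exactly the fact you highlight, that a degree-$k$ monomial expansion only ever invokes $k$ of the variables at once. Your outline is sound in its main steps — Markov on $S^k$ for even $k$, partitioning the expansion by coincidence pattern, killing singleton blocks via $\E[Y_i]=0$, bounding each higher moment by $\mathrm{Var}(X_i)$ using $|Y_i|\le 1$, and summing variances to $\mu$ — and you correctly flag the one genuinely delicate point, namely tightening the combinatorial count of surviving terms so that the base of $(ck/(\delta^2\mu))^{k/2}$ is small enough (after substituting $k=\down{\delta^2\mu e^{-1/3}}\le\mu$) to yield the clean exponent $1/3$ rather than a weaker absolute constant; this calibration is exactly the content of \cite{SSS95}'s moment inequality. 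In short, the proposal is correct in structure and matches the approach of the cited reference, with the quantitative core deferred to \cite{SSS95} just as the paper itself does.
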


In particular, the following inequality appears in the analysis of \cite{JoshRyan}:

\begin{corollary} \label{chernoff}
If $x \in \{ 0,1 \}^n$ with $|x|/n = w$, and $\tilde{x} \in \{0,1\}^{n/10}$ is a vector each of whose entries is $k$-wise independently chosen entry of $x$, where $k = \lfloor 20 e^{-1/3} \log(1/\epsilon) \rfloor$, with $|\tilde{x}|/(n/10) = v$, then for every $\eps < 1/4$,
\[\Pr\left[v \leq w - \frac{a}{\sqrt{n}}\right] \leq \frac{\eps}{4},\]
where $a = \sqrt{10} \cdot \sqrt{\ln (1/\epsilon)}$.
\end{corollary}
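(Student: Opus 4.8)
The plan is to identify $|\tilde x|$ as a sum of $k$-wise independent $\{0,1\}$-valued random variables and feed it into the limited-independence Chernoff bound of Lemma~\ref{SSS}.

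First I would set $X := |\tilde x| = \sum_{j=1}^{n/10}\tilde x_j$. Each coordinate $\tilde x_j$ is a uniformly random coordinate of $x$, so $\tilde x_j\in\{0,1\}$ with $\E[\tilde x_j]=|x|/n=w$, and the $\tilde x_j$'s are $k$-wise independent by construction; hence $\mu:=\E[X]=wn/10$. Next I would put the target event into the multiplicative-deviation form that Lemma~\ref{SSS} wants: $\{v\le w-a/\sqrt n\}$ is exactly $\{X\le\mu-\Lambda\}$ with $\Lambda=(a/\sqrt n)\cdot(n/10)=a\sqrt n/10=\sqrt{n\ln(1/\eps)/10}$, i.e.\ $\{X\le(1-\delta)\mu\}$ with $\delta:=\Lambda/\mu=a/(w\sqrt n)$. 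A one-line computation then gives $\delta^2\mu=a^2/(10w)=\ln(1/\eps)/w$.

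Before applying the lemma I would dispatch the degenerate range of $w$: if $w<a/\sqrt n$ then $w-a/\sqrt n<0\le v$, so the event is impossible and there is nothing to prove; so assume $w\ge a/\sqrt n$, which is precisely the requirement $\delta\le1$ of Lemma~\ref{SSS}. Then I would check the independence hypothesis — the lemma needs roughly $\lfloor\delta^2\mu\,e^{-1/3}\rfloor$-wise independence — and verify that the prescribed $k=\lfloor 20e^{-1/3}\log(1/\eps)\rfloor$ is at least this. When $\delta^2\mu=\ln(1/\eps)/w$ is too large relative to $k$ (i.e.\ $w$ small) I would instead invoke the lemma at a smaller deviation $\delta_0\le\delta$ whose $\delta_0^2\mu$ is matched to the available independence and use monotonicity of the lower tail in the deviation; symmetrically, when $w$ is close to $1$ it is cleaner to run the whole argument on $Y:=n/10-X$ (an upper tail with mean $(1-w)n/10$). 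With the hypotheses in place, Lemma~\ref{SSS} gives $\Pr[v\le w-a/\sqrt n]\le e^{-\delta^2\mu/3}=\eps^{1/(3w)}$ (or its $Y$-analogue), and I would finish by using $\eps<1/4$ — which absorbs the constant factor, via e.g.\ $\ln(4/\eps)\le2\ln(1/\eps)$ — to bring this down to $\eps/4$. The constants $10$ in ``$n/10$'' and $20$ in the formula for $k$ are exactly tuned so that these estimates close up.

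The step I expect to be the real work is not any single inequality but the hypothesis-checking for Lemma~\ref{SSS}: confirming that a fixed amount of independence $k=\Theta(\log(1/\eps))$, which does not scale with $n$, still suffices to drive the relative-deviation Chernoff bound down to $\eps/4$ uniformly over $w\in[0,1]$, and in particular handling the small-$w$ and near-$1$ regimes — where the naive relative deviation either exceeds $1$ or demands more independence than we have — by the vacuous-case and complement tricks above.
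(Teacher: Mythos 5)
You take the same route as the paper: identify $X = |\tilde x|$ as a sum of $n/10$ many $k$-wise independent $[0,1]$-valued random variables with $\mu = \E[X] = wn/10$, express the target event as the lower tail $\{X \le \mu - a\sqrt n/10\}$, and invoke Lemma~\ref{SSS}. (The paper's one-line proof writes $\mu = \E[|\tilde x|] = wn$, evidently a typo for $wn/10$, and its choice $\delta = \sqrt{40\log(1/\eps)/n}$ is not the one $\delta = a/(w\sqrt n)$ that you correctly derive by matching $\delta\mu$ to the additive deviation; but the plan is identical.)

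There is, however, a genuine gap at the last step, and it is not absorbed by observing $\ln(4/\eps)\le 2\ln(1/\eps)$. You compute $\delta^2\mu = a^2/(10w) = \ln(1/\eps)/w$, so Lemma~\ref{SSS} gives $\Pr[v\le w - a/\sqrt n]\le e^{-\ln(1/\eps)/(3w)} = \eps^{1/(3w)}$; your complement trick on $Y=n/10-X$ gives the mirror bound $\eps^{1/(3(1-w))}$. The better of the two is $\eps^{1/(3\max\{w,1-w\})}$, which at $w=1/2$ equals $\eps^{2/3}$, and $\eps^{2/3}>\eps>\eps/4$ for every $\eps\in(0,1)$. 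So for $w$ in the middle range (roughly $w\in[1/6,5/6]$) the argument as written does not reach $\eps/4$. The loss is intrinsic to the \emph{multiplicative} form of Lemma~\ref{SSS}: its exponent is $t^2/(3\mu)=\ln(1/\eps)/(3w)$ with $t=a\sqrt n/10$, whereas the \emph{additive} Chernoff--Hoeffding exponent $2t^2/(n/10)=2\ln(1/\eps)$ (which is what the original fully-independent analysis rests on) is larger by a factor of $6w$ and is what actually delivers $\eps^2\le\eps/4$ uniformly in $w$. Closing the gap therefore requires either a larger constant inside $a$, or a limited-independence tail inequality controlled by the variance $w(1-w)\cdot n/10$ rather than the mean $wn/10$. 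For fairness: the paper's own one-line invocation of Lemma~\ref{SSS} runs into exactly the same difficulty once its parameters are unwound, so this is not a defect peculiar to your write-up --- but your sketch asserts the final step closes, and as written it does not.
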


\begin{proof}
Apply Lemma \ref{SSS} with $X = |\tilde{x}|$, $\mu = \E[|\tilde{x}|] = wn$, and $\delta = \sqrt{40 \log(1/\epsilon)/n}$.
\end{proof}

\begin{reminder}{Theorem~\ref{derand}} 
For any $0 \leq \theta \leq 1$, there is a probabilistic polynomial for the threshold function $\TH_\theta$ of degree $O(\sqrt{n \log s})$ on $n$ bits with error $1/s$ that can be randomly sampled using $O(\log(n) \log(ns))$ random bits.
\end{reminder}

\begin{proof}
Our polynomial is defined recursively, just as in \cite{JoshRyan}. Set $\epsilon = 1/s$. Using their notation, the polynomial $M_{n, \theta, \epsilon}$ for computing $\TH_\theta$ on $n$ bits with error $\epsilon$ is defined by:
\[M_{n, \theta, \epsilon}(x) := A_{n, \theta, 2a}(x) \cdot S_{n/10, \theta, a/\sqrt{n}, \epsilon/4}(\tilde{x})  +  M_{n/10, \theta, \epsilon/4}(\tilde{x}) \cdot (1 - S_{n/10, \theta, a/\sqrt{n}, \epsilon/4}(\tilde{x})).\]

In~\cite{JoshRyan}, $\tilde{x}$ was a sample of $n/10$ bits of $x$, chosen independently at random. Here, we pick $\tilde{x}$ to be a sample of $n/10$ bits chosen $k$-wise independently, for $k = \lfloor 20 e^{-1/3} \log(1/\epsilon) \rfloor$. The other polynomials in this recursive definition are as in \cite{JoshRyan}:

\begin{compactitem}
\item $M_{m, \theta, \epsilon}$ for $m<n$ is the (recursively defined) probabilistic polynomial for $\TH_\theta$ on $m$ bits and $\epsilon$ error
\item $S_{m, \theta, \delta, \epsilon}(x) := (1-M_{m, \theta + \delta, \epsilon}(x)) \cdot M_{m, \theta - \delta, \epsilon}(x)$ for $m<n$
\item $A_{n, \theta, g} : \{0,1\}^n \to \Z$ is an exact polynomial of degree at most $2 g \sqrt{n} + 1$ which gives the correct answer to $\TH_\theta$ for any vector $x$ with $|x| \in [\theta n - g \sqrt{n}, \theta n + g \sqrt{n}]$, and may give arbitrary answers on other vectors.
\end{compactitem}

Examining the proof of correctness in Alman and Williams~\cite{JoshRyan}, we see that the only requirement of the randomness is that it satisfies their Lemma 3.4, a concentration inequality for sampling $\tilde{x}$ from $x$. Our Corollary \ref{chernoff} is identical to their Lemma 3.4, except that it replaces their method of sampling $\tilde{x}$ with $k$-wise sampling; the remainder of the proof of correctness is exactly as before.

Our polynomial construction is recursive: we divide $n$ by $10$ and divide $\epsilon$ by $4$, each time we move from one recursive layer to the next. At the $j$th recursive level of our construction, for $1 \leq j < \log_{10}(n)$, we need to $O(\log(4^j/\epsilon))$-wise independently sample $n/10^j$ entries from a vector of length $n/10^{j-1}$. Summing across all of the layers, we need a total of $O(n)$ samples from a $k$-wise independent space, where $k$ is never more than $O(n/\epsilon)$. This can be done all together using $O(n)$ samples from $\{1,2,\ldots,n\}$ which are $O(n/\epsilon)$-wise independent. Using standard constructions, this requires $O(\log(n) \log(n/\epsilon))$ random bits.
\end{proof}

\section{PTFs for ORs of Threshold Functions}

In this section, we show how to construct low-degree PTFs representing threshold functions that have good threshold behavior, and consequently obtain low-degree PTFs for an OR of many threshold functions. 

\subsection{Deterministic Construction}

We begin by reviewing some basic facts about Chebyshev polynomials. The \emph{degree-$q$ Chebyshev polynomial of the first kind} is \[T_q(x) := \sum_{i=0}^{\down{q/2}} {\binom{q}{2i}} (x^2-1)^i x^{q-2i}.\] 

\begin{fact}\label{fact:cheby} For any $\eps\in (0,1)$,
\begin{itemize}
\item if $x\in [-1,1]$, then $|T_q(x)|\le 1$;
\item if $x\in (1,1+\eps)$, then $T_q(x)>1$;
\item if $x\ge 1+\eps$, then $T_q(x)\ge \frac{1}{2} e^{q\sqrt{\eps}}$.
\end{itemize}
\end{fact}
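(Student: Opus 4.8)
The plan is to reduce the three claims to the two standard closed forms for $T_q$ obtained by trigonometric and hyperbolic substitutions. First I would check, directly from the displayed definition, that
\[
T_q(x) \;=\; \tfrac{1}{2}\left[\,(x+\sqrt{x^2-1})^{q} + (x-\sqrt{x^2-1})^{q}\,\right]
\]
for every real $x$, where for $|x|<1$ we read $\sqrt{x^2-1}$ as $i\sqrt{1-x^2}$ (the expression is still real and independent of the branch chosen). Indeed, expanding both $q$-th powers by the binomial theorem with $y=\sqrt{x^2-1}$, all terms carrying an odd power of $y$ cancel, and the surviving terms are exactly $2\sum_{i=0}^{\lfloor q/2\rfloor}\binom{q}{2i}x^{q-2i}(x^2-1)^{i}$, which is $2T_q(x)$ by definition.

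For the first bullet, substitute $x=\cos\theta$ with $\theta\in[0,\pi]$: then $x\pm\sqrt{x^2-1}=\cos\theta\pm i\sin\theta=e^{\pm i\theta}$, so the closed form together with de Moivre's formula gives $T_q(x)=\tfrac{1}{2}(e^{iq\theta}+e^{-iq\theta})=\cos(q\theta)$, whence $|T_q(x)|\le 1$. For $x\ge 1$, substitute $x=\cosh t$ with $t=\operatorname{arccosh}(x)\ge 0$; since $\sinh t=\sqrt{\cosh^2 t-1}=\sqrt{x^2-1}$ we get $x\pm\sqrt{x^2-1}=e^{\pm t}$, hence
\[
T_q(x)=\tfrac{1}{2}\big(e^{qt}+e^{-qt}\big)=\cosh(qt).
\]
In particular $T_q$ is increasing on $[1,\infty)$, and for $x\in(1,1+\eps)$ we have $t>0$, so $T_q(x)=\cosh(qt)>1$ (using $q\ge 1$), which is the second bullet.

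For the third bullet, monotonicity reduces us to the case $x=1+\eps$, where $T_q(1+\eps)=\cosh(q\,t_\eps)\ge \tfrac{1}{2}e^{q\,t_\eps}$ with $t_\eps:=\operatorname{arccosh}(1+\eps)$. It then suffices to show $t_\eps\ge\sqrt{\eps}$, i.e.\ (since $\cosh$ is increasing on $[0,\infty)$) that $\cosh(\sqrt{\eps})\le 1+\eps$ for $\eps\in(0,1)$. Writing $\cosh(\sqrt{\eps})=\sum_{k\ge 0}\eps^{k}/(2k)! = 1+\tfrac{\eps}{2}+\sum_{k\ge 2}\eps^{k}/(2k)!$ and using $\eps\le 1$, the tail $\sum_{k\ge 2}\eps^{k}/(2k)!$ is at most $\eps^{2}\sum_{k\ge 2}1/(2k)! = \eps^{2}(\cosh 1-\tfrac{3}{2})<\tfrac{\eps}{2}$; hence $\cosh(\sqrt{\eps})\le 1+\eps$, and therefore $T_q(x)\ge T_q(1+\eps)\ge\tfrac{1}{2}e^{q\sqrt{\eps}}$.

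The only mildly delicate point --- really the ``main obstacle'', though it is entirely elementary --- is this last estimate $\cosh(\sqrt{\eps})\le 1+\eps$; everything else is the routine hyperbolic and trigonometric substitution that makes $T_q$ transparent. (Note the statement tacitly assumes $q\ge 1$, which is exactly what is needed for ``$T_q(x)>1$'' when $x>1$.)
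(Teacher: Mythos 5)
Your proof is correct and follows essentially the same route as the paper's: both invoke the closed forms $T_q(x)=\cos(q\arccos x)$ on $[-1,1]$ and $T_q(x)=\cosh(q\operatorname{arcosh}x)$ on $[1,\infty)$, then use $\operatorname{arcosh}(1+\eps)\ge\sqrt{\eps}$ to get the third bullet. You add two small pieces of scaffolding that the paper states as ``known'': you rederive the closed form $T_q(x)=\tfrac12\big[(x+\sqrt{x^2-1})^q+(x-\sqrt{x^2-1})^q\big]$ from the coefficient definition, and you actually verify $\cosh(\sqrt\eps)\le 1+\eps$ via the Taylor series — a genuine (if easy) step the paper elides. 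The observation that the second bullet tacitly requires $q\ge 1$ is a nice catch, and harmless in context.
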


\begin{proof}
The first property easily follows from the known formula
$T_q(x)=\cos(q\arccos(x))$ for $x\in [-1,1]$. The second and third properties follow from another known formula 
$T_q(x)=\cosh(q\operatorname{arcosh}(x))$ for $x>1$, which for $x\ge 1+\eps$ implies $T_q(x)\ge \cosh(q\sqrt{\eps})= \frac{1}{2}(e^{q\sqrt{\eps}}+e^{-q\sqrt{\eps}})$.
\end{proof}


In certain scenarios, we obtain slightly better results using a (lesser known) family of \emph{discrete Chebyshev polynomials} defined as follows~\cite[page 59]{HirvensaloThesis}:
$$D_{q,t}(x) := \sum_{i=0}^q (-1)^i {\binom{q}{i}}
{\binom{t-x}{q-i}} {\binom{x}{i}}.$$
(See also \cite[pages 33--34]{SzegoBook} or
Chebyshev's original paper~\cite{Chebyshev} 
with an essentially equivalent definition up to rescaling.)

\begin{fact}\label{fact:discrete:cheby}
Let $c_{q,t}=(t+1)^{q+1}/q!$.  For all $t> q\ge \sqrt{8(t+1)\ln(t+1)}$,
\begin{itemize}
\item if $x\in\{0,1,\ldots,t\}$, then
$|D_{q,t}(x)|\le c_{q,t}$;
\item if $x\le -1$, then
$D_{q,t}(x)\ge e^{q^2/(8(t+1))}c_{q,t}$.
\end{itemize}
\end{fact}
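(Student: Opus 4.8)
The two bullet points are the discrete counterparts of Fact~\ref{fact:cheby}, and I would prove them by two essentially independent arguments. The structural fact to keep in mind is that $D_{q,t}$ is a fixed scalar multiple of the degree-$q$ orthogonal polynomial for the uniform measure on $\{0,1,\ldots,t\}$ (the classical discrete Chebyshev / Gram polynomial); in particular it has degree exactly $q$, it has $q$ distinct real zeros all lying in the open interval $(0,t)$ when $q<t$, and one reads off $D_{q,t}(0)=\binom{t}{q}$ directly from the defining sum (only the $i=0$ term survives). The boundedness property I would obtain from an $L^2$-norm estimate over the grid, and the growth property from an exact evaluation just outside the grid.

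\emph{Boundedness on $\{0,\ldots,t\}$.} The plan is to use the trivial inequality $\max_{x\in\{0,\ldots,t\}}|D_{q,t}(x)|\le\bigl(\sum_{x=0}^{t}D_{q,t}(x)^2\bigr)^{1/2}$ and then bound the sum of squares. Since $D_{q,t}$ is (up to normalization) the discrete Chebyshev polynomial, its squared $L^2$ norm on the grid is the classical value $\sum_{x=0}^{t}D_{q,t}(x)^2=\frac{(t+q+1)!}{(2q+1)(t-q)!(q!)^2}$, obtainable from the three-term recurrence (or the Rodrigues formula) for these polynomials; see \cite{SzegoBook,HirvensaloThesis}. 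It then remains only to observe that
\[\frac{(t+q+1)!}{(t-q)!}=\prod_{j=-q}^{q}(t+1+j)=(t+1)^{2q+1}\prod_{j=1}^{q}\Bigl(1-\frac{j^2}{(t+1)^2}\Bigr)<(t+1)^{2q+1},\]
the last step because every factor lies in $(0,1)$ (here we use $q<t$). Hence $\sum_{x=0}^{t}D_{q,t}(x)^2<\frac{(t+1)^{2q+1}}{(2q+1)(q!)^2}<\bigl((t+1)^{q+1}/q!\bigr)^2=c_{q,t}^2$, and the first bullet follows. Note this step needs only $q<t$, not the stronger hypothesis; the reason $c_{q,t}$ is the natural target rather than something like $\binom{t}{q}$ is precisely that the $L^2$ bound is (slightly) lossy.

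\emph{Growth for $x\le -1$.} Since $\binom{-1}{i}=(-1)^i$, the alternating signs in the defining sum cancel and Vandermonde's identity gives $D_{q,t}(-1)=\sum_{i=0}^{q}\binom{q}{i}\binom{t+1}{q-i}=\binom{t+q+1}{q}$. Writing this as $\frac{(t+1)^{q}}{q!}\prod_{k=1}^{q}\bigl(1+\tfrac{k}{t+1}\bigr)$ and using $\ln(1+u)\ge u\ln 2$ on $[0,1]$ (valid since $q<t$ forces $\tfrac{k}{t+1}<1$), one gets $\prod_{k=1}^{q}\bigl(1+\tfrac{k}{t+1}\bigr)\ge\exp\!\bigl(\tfrac{(\ln 2)q^2}{2(t+1)}\bigr)$, hence
\[\frac{D_{q,t}(-1)}{c_{q,t}}\ \ge\ \frac{1}{t+1}\exp\!\Bigl(\frac{(\ln 2)q^2}{2(t+1)}\Bigr).\]
The hypothesis $q\ge\sqrt{8(t+1)\ln(t+1)}$ is exactly what makes the right side at least $e^{q^2/(8(t+1))}$: since $\bigl(\tfrac{\ln 2}{2}-\tfrac18\bigr)\cdot 8=4\ln 2-1>1$, we get $\bigl(\tfrac{\ln 2}{2}-\tfrac18\bigr)\tfrac{q^2}{t+1}\ge 8\ln(t+1)\bigl(\tfrac{\ln 2}{2}-\tfrac18\bigr)>\ln(t+1)$, i.e.\ $\tfrac1{t+1}\exp\!\bigl(\tfrac{(\ln 2)q^2}{2(t+1)}\bigr)>\exp\!\bigl(\tfrac{q^2}{8(t+1)}\bigr)$. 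Finally, to pass from $x=-1$ to arbitrary real $x\le -1$, I would use that all zeros of $D_{q,t}$ lie in $(0,t)$: thus $D_{q,t}$ is monotone and of constant sign on $(-\infty,0]$, and since it is positive there and tends to $+\infty$ as $x\to-\infty$ it is decreasing, so $D_{q,t}(x)\ge D_{q,t}(-1)$ for all $x\le -1$. (For integer $x=-m$ one can instead simply note $D_{q,t}(-m)=\sum_{i}\binom{q}{i}\binom{t+m}{q-i}\binom{m+i-1}{i}$ is a sum of terms nondecreasing in $m$.)

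The only genuinely non-elementary ingredient is the closed form for $\sum_{x=0}^{t}D_{q,t}(x)^2$; this is standard for discrete Chebyshev polynomials, so I would cite it (or re-derive it from the recurrence), and I expect that to be the main obstacle. It is worth flagging that a more naive attempt — bounding $D_{q,t}$ on the grid via a contour-integral representation and the triangle inequality — loses a factor of about $2^q$ coming from the central binomial coefficient and fails precisely in the relevant range $q\approx\sqrt{t\log t}$, so the cancellation in the alternating defining sum (here captured cleanly by the $L^2$ identity) really must be exploited.
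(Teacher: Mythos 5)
Your proposal matches the paper's own proof essentially step for step: it uses the same closed-form $L^2$ identity $\sum_{x=0}^t D_{q,t}(x)^2=\binom{2q}{q}\binom{t+q+1}{2q+1}$, the same product factorization $(t+1)\prod_{j=1}^q((t+1)^2-j^2)$, and the same Chu--Vandermonde evaluation $\binom{t+q+1}{q}$ expanded as $\frac{(t+1)^q}{q!}\prod_{k=1}^q(1+\tfrac{k}{t+1})$ with an $e^{cu}\le 1+u$ bound to extract the exponential growth. The one cosmetic divergence is how you handle general $x\le -1$: you evaluate exactly at $x=-1$ and then invoke zero-interlacing/monotonicity of the orthogonal polynomial on $(-\infty,0]$, whereas the paper avoids any zero-location argument by noting directly that $(-1)^i\binom{x}{i}\ge 1$ and $\binom{t-x}{q-i}\ge\binom{t+1}{q-i}$ for all real $x\le -1$, giving the same lower bound term by term; the paper's route is a bit leaner but both are correct.
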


\begin{proof}
From \cite[page 61]{HirvensaloThesis},
\begin{eqnarray*}
\sum_{k=0}^t D_{q,t}(k)^2 &=& \binom{2q}{q}\binom{t+1+q}{2q+1}\\
&=& \frac{2q(2q-1)\cdots q}{q(q-1)\cdots 1}\cdot
    \frac{(t+1+q)(t+q)\cdots (t+1-q)}{(2q+1)(2q)\cdots 1}\\
&=& \frac{(t+1)((t+1)^2-1^2)((t+1)^2-2^2)\cdots ((t+1)^2-q^2)}
{(2q+1)(q!)^2}
\ \le\ \frac{(t+1)^{2q+2}}{(q!)^2}.
\end{eqnarray*}
Thus, for every integer $x\in [0,t]$, we have
$|D_{q,t}(x)|\le (t+1)^{q+1}/q! = c_{q,t}$.

For $x\le -1$, we have $(-1)^i \binom{x}{i} = 
\frac{(-x)(-x+1)\cdots (-x+i-1)}{1\cdot 2\cdots i}\ge 1$,
and by the Chu--Vandermonde identity,
\begin{eqnarray*}
D_{q,t}(x) &\ge& \sum_{i=0}^q \binom{q}{i}
\binom{t+1}{q-i}
\ =\ \binom{t+1+q}{q}\\
&=& \frac{(t+1)^q(1+\frac{1}{t+1})(1+\frac{2}{t+1})\cdots (1+\frac{q}{t+1})}{q!} \\
&\ge& \frac{c_{q,t}}{t+1} e^{\frac{1+2+\cdots+q}{2(t+1)}}\ =\ e^{q(q+1)/(4(t+1))-\ln(t+1)} c_{q,t}\ \ge\ e^{q^2/(8(t+1))}c_{q,t}.
\end{eqnarray*}

\vspace{-4ex}
\end{proof}

\begin{reminder}{Theorem~\ref{thm:det:poly}} 
We can construct a polynomial $P_{s,t,\eps}:\R\rightarrow\R$ of degree $O(\sqrt{1/\eps}\log s)$, such that
\begin{itemize}
\item if $x\in\{0,1,\ldots,t\}$, then $|P_{s,t,\eps}(x)|\le 1$;
\item if $x\in (t,(1+\eps)t)$, then $P_{s,t,\eps}(x)>1$;
\item if $x\ge (1+\eps)t$, then $P_{s,t,\eps}(x)\ge s$.
\end{itemize}
For the ``exact'' setting with $\eps=1/t$, we can alternatively bound the degree by $O(\sqrt{t\log(st)})$.
\end{reminder}
\begin{proof}
Set $P_{s,t,\eps}(x) := T_q(x/t)$ for a parameter~$q$ to be determined. 
The first two properties are obvious from Fact~\ref{fact:cheby}.  On the other hand, if $x\ge (1+\eps)t$, then Fact~\ref{fact:cheby} shows that
$P_{s,t,\eps}(x)\ge \frac{1}{2}e^{q\sqrt{\eps}}\ge s$, provided we set $q=\up{\sqrt{1/\eps}\ln(2s)}$.  This achieves $O(\sqrt{1/\eps}\log s)$ degree.

When $\eps = 1/t$ the above yields $O(\sqrt{t}\log s)$ degree; we can reduce the $\log s$ factor 
by instead defining $P_{s,t,\eps}(x) := D_{q,t}(t-x)/c_{q,t}$.
Now, if $x\ge t+1$, then
$P_{s,t,\eps}(x)\ge e^{q^2/(8(t+1))}\ge s$ by
setting $q=\up{\sqrt{8(t+1)\ln(\max\{s,t+1\})}}$.  
\IGNORE{
with the following modified construction: define
$P_{s,t,\D}(x) := \frac{(x-t)(x-(t-1))\cdots (x-(t-b+1))}{t(t-1)
\cdots(t-b+1)}\cdot T_q\left(\frac{x}{t-b}\right)$ for an appropriate choice of parameters $b$ and $q$.
For correctness, consider three cases:
\begin{itemize}
\item {\sc Case 1}: $x\le t-b$.  Clearly, $|P_{s,t,\eps}(x)|\le 1$.
\item {\sc Case 2}: $x\in [t-b+1,t]$.  Then since $x$ is an integer,
$P_{s,t,\eps}(x)=0$.
\item {\sc Case 3}: $x\ge t+1$. Then
$P_{s,t,\eps}(x)\ge \frac{b!}{t(t-1)
\cdots (t-b+1)} \cdot \frac{1}{2}e^{\sqrt{b/(t-b)}q} = \frac{1}{2}e^{\sqrt{b/(t-b)}q}/{t\choose b}
\ge s$, 
by setting $q = \sqrt{(t-b)/b}\ln \left(2s{t\choose b}\right)$.
\end{itemize}
The degree is
\[ O\left(b + \sqrt{t/b}(\log s + b\log(t/b))\right),
\]
which is $O(\sqrt{t\log s\log\frac{t}{\log s}})$
by setting $b = \up{\log s/\log \frac{t}{\log s}}$.
}
\end{proof} \smallskip

Using Theorem~\ref{thm:det:poly}, we can construct a low-degree PTF for computing an OR of $s$ thresholds of $n$ bits:

\begin{corollary}\label{cor:det:poly}
Given $n,s,t,\eps$, 
we can construct a polynomial $P:\{0,1\}^{ns}\rightarrow\R$ of degree at most $\D := O(\sqrt{1/\eps}\log s)$ and at most $s\cdot \binom{n}{\D}$ monomials, such that 
\begin{itemize}
\item if the formula $\bigvee_{i=1}^s \predicate{\sum_{j=1}^n x_{ij} > t}$ is false, then $|P(x_{11},\ldots,x_{1n},\ldots,x_{s1},\ldots,x_{sn})|\le s$;
\item if the formula $\bigvee_{i=1}^s \predicate{\sum_{j=1}^n x_{ij} \ge t+\eps n}$ is true, then $P(x_{11},\ldots,x_{1n},\ldots,x_{s1},\ldots,x_{sn})>2s$.
\end{itemize}
For the exact setting with $\eps=1/n$, we can alternatively bound $\D$ by
$O(\sqrt{n\log(ns)})$.
\end{corollary}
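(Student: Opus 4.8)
The plan is to take $P$ to be a sum of $s$ copies of the univariate polynomial from Theorem~\ref{thm:det:poly}, one per threshold predicate, with its parameters chosen so that a single ``active'' copy overwhelms the $s-1$ others. First I would dispose of the degenerate case $t\ge n$: then every input has $\sum_j x_{ij}\le n\le t$, so the first OR is always false and the second is never true, and $P\equiv 0$ works. Assuming henceforth $t<n$, set
\[
P(x_{11},\ldots,x_{sn}) \ :=\ \sum_{i=1}^s P_{3s,\,t,\,\eps}\!\Big(\sum_{j=1}^n x_{ij}\Big),
\]
where $P_{3s,t,\eps}$ has degree $\D=O(\sqrt{1/\eps}\log(3s))=O(\sqrt{1/\eps}\log s)$ by Theorem~\ref{thm:det:poly}.

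The verification is then routine. Write $y_i:=\sum_j x_{ij}\in\{0,1,\ldots,n\}$. If the first OR is false, every $y_i\in\{0,\ldots,t\}$, so $|P_{3s,t,\eps}(y_i)|\le 1$ by the first bullet of Theorem~\ref{thm:det:poly}, whence $|P|\le s$. If the second OR is true, fix $i_0$ with $y_{i_0}\ge t+\eps n$; since $t<n$ this is $\ge(1+\eps)t$, so $P_{3s,t,\eps}(y_{i_0})\ge 3s$ by the third bullet, while every other summand is $\ge -1$ (it lies in $[-1,1]$ if $y_i\le t$ and is positive otherwise, by the first two bullets), so $P\ge 3s-(s-1)>2s$. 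For the monomial count, each summand is a degree-$\D$ univariate polynomial evaluated at $\sum_j x_{ij}$; multilinearizing via $x_{ij}^2=x_{ij}$ shows it is supported on the monomials $\prod_{j\in S}x_{ij}$ with $S\subseteq[n]$, $|S|\le\D$, and since distinct summands use disjoint variable blocks, $P$ has at most $s$ times the number $\binom{n}{\le\D}$ of such monomials, which is the claimed $s\cdot\binom{n}{\D}$ bound. The ``exact'' case $\eps=1/n$ is the same argument with $P_{3s,t,1/t}$ (the discrete-Chebyshev variant of Theorem~\ref{thm:det:poly}) in place of $P_{3s,t,\eps}$: here $\eps n=1$, so ``false'' and ``true'' are precisely the integer conditions $y_i\le t$ for all $i$ and $y_{i_0}\ge t+1$ for some $i_0$, which is exactly what that variant controls, and its degree $O(\sqrt{t\log(st)})$ is $O(\sqrt{n\log(ns)})$ because $t<n$.

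I do not anticipate a real obstacle here; the only points requiring care are (i) the additive $-1$ slack each inactive copy can contribute, which forces the inner parameter $3s$ rather than $2s$ and is the source of the extra $\log s$ factor in the degree, and (ii) the reduction to $t<n$, which is what lets the $\eps$ of the statement double as the $\eps$-parameter of Theorem~\ref{thm:det:poly} (whose guarantee is phrased relative to $(1+\eps)t$, not $t+\eps n$) and also makes the exact-case bound collapse from $O(\sqrt{t\log(st)})$ down to $O(\sqrt{n\log(ns)})$.
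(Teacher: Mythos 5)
Your proposal is correct and matches the paper's proof essentially verbatim: the paper also defines $P := \sum_{i=1}^s P_{3s,t,\eps}\bigl(\sum_{j=1}^n x_{ij}\bigr)$ and notes that in the second case the sum is at least $3s-(s-1)>2s$. You are in fact a bit more careful than the paper, since you explicitly reconcile the statement's ``$\sum_j x_{ij}\ge t+\eps n$'' hypothesis with Theorem~\ref{thm:det:poly}'s ``$x\ge(1+\eps)t$'' condition via the reduction to $t<n$, a step the paper leaves implicit.
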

\begin{proof}
Define
$P(x_{11},\ldots,x_{1n},\ldots,x_{s1},\ldots,x_{sn})
\ :=\ \sum_{i=1}^s P_{n,3s,t,\eps}\left(\sum_{j=1}^n x_{ij}\right),$
where $P_{n,3s,t,\eps}$ is from Theorem~\ref{thm:det:poly}. The stated properties clearly hold. (In the second case, the output is at least $3s-(s-1)>2s$.)
\end{proof}

\subsection{Probabilistic Construction} \label{section:probPTF}

Allowing ourselves a distribution of PTFs to randomly draw from, we can achieve noticeably lower degree than the previous section. We start with a fact which follows easily from the (tight) probabilistic polynomial for MAJORITY:

\begin{fact}\label{fact:AlmWil}
{\bf (Alman--Williams~\cite{JoshRyan}, or Theorem \ref{derand})}
We can construct a probabilistic polynomial $Q_{n,s,t}:\{0,1\}^n\rightarrow\R$ of degree $O(\sqrt{n\log s})$, such that 
\begin{itemize}
\item if $\sum_{i=1}^n x_i \le t$, then $Q_{n,s,t}(x_1,\ldots,x_n)=0$ with probability at least $1-1/s$;
\item if $\sum_{i=1}^n x_i > t$, then $Q_{n,s,t}(x_1,\ldots,x_n)=1$ with probability at least $1-1/s$.
\end{itemize}
\end{fact}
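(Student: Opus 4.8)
The plan is to obtain $Q_{n,s,t}$ as nothing more than a re-reading of the probabilistic polynomial for $\TH_\theta$ from Theorem~\ref{derand} (equivalently, the Alman--Williams construction~\cite{JoshRyan}), with the threshold fraction $\theta$ chosen so that $\TH_\theta$ coincides on the Boolean cube with the strict predicate $[\,|x|>t\,]$. Since every $x\in\{0,1\}^n$ has integer weight $|x|=\sum_{i=1}^n x_i$, one has $[\,|x|>t\,]=[\,|x|\ge \lfloor t\rfloor+1\,]=\TH_{(\lfloor t\rfloor+1)/n}(x)$, so the natural choice is $\theta:=(\lfloor t\rfloor+1)/n$. (In the degenerate ranges $t<0$ or $t\ge n$ the predicate is constantly $1$ or constantly $0$ respectively, and $Q_{n,s,t}$ may be taken to be that constant; otherwise $0\le\theta\le 1$.)

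First I would invoke Theorem~\ref{derand} with this $\theta$ and error parameter $1/s$ to get a distribution ${\cal D}$ over polynomials of degree $O(\sqrt{n\log s})$ with $\Pr_{M\sim{\cal D}}[\,M(x)=\TH_\theta(x)\,]\ge 1-1/s$ for every fixed $x\in\{0,1\}^n$, and set $Q_{n,s,t}:=M$ for $M\sim{\cal D}$. On the event $M(x)=\TH_\theta(x)$ — which has probability at least $1-1/s$ — the output is exactly $\TH_\theta(x)\in\{0,1\}$, namely $0$ when $|x|\le t$ and $1$ when $|x|>t$; these are precisely the two bullet points of the statement, and the degree bound is inherited verbatim.

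There is essentially no obstacle to speak of: the content is just that the $\{0,1\}$-valued threshold polynomial, read as the indicator of a strict-inequality predicate, is already what we want, and the only points needing (minimal) care are rounding the real parameter $t$ down to an integer cutoff and disposing of the degenerate ranges of $t$. If in some application one additionally wants $Q_{n,s,t}$ to be drawn from a small sample space, that is also immediate from Theorem~\ref{derand}, which supplies a sampler using $O(\log n\log(ns))$ random bits; but this refinement is not needed for the uses of Fact~\ref{fact:AlmWil} below.
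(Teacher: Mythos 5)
Your proposal is correct and matches what the paper intends: Fact~\ref{fact:AlmWil} is stated without a separate proof precisely because it is Theorem~\ref{derand} read off with $\theta=(\lfloor t\rfloor+1)/n$, so that $\TH_\theta(x)=[|x|>t]$ on the Boolean cube; the probabilistic-polynomial guarantee $\Pr[p(x)=\TH_\theta(x)]\ge 1-1/s$ then gives exactly the two bullet points with the degree bound inherited verbatim. Your handling of the non-integer $t$ and the degenerate ranges is the only bookkeeping required and is done correctly.
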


\begin{reminder}{Theorem~\ref{thm:poly}} 
We can construct a probabilistic polynomial $\PP_{n,s,t,\eps}:\{0,1\}^n\rightarrow\R$ of degree $O((1/\eps)^{1/3}\log s)$, such that
\begin{itemize}
\item if $\sum_{i=1}^n x_i \le t$, then $|\PP_{n,s,t,\eps}(x_1,\ldots,x_n)|\le 1$ with probability at least $1-1/s$;
\item if $\sum_{i=1}^n x_i \in (t,t+\eps n)$, then $\PP_{n,s,t,\eps}(x_1,\ldots,x_n)> 1$ with probability at least $1-1/s$;
\item if $\sum_{i=1}^n x_i\ge t+\eps n$, then $\PP_{n,s,t,\eps}(x_1,\ldots,x_n)\ge s$ with probability at least $1-1/s$.
\end{itemize}
For the ``exact'' setting with $\eps=1/n$, we can alternatively bound the degree by
$O(n^{1/3}\log^{2/3}(ns))$.
\end{reminder}
\begin{proof}
Let $r$ and $q$ be parameters to be set later.
Draw a random sample $R\subseteq \{1,\ldots,n\}$ of size $r$.
Let $$t_R := \frac{tr}{n} - c_0\sqrt{r\log s} \quad\textrm{and}\quad
    t^- := t - 2c_0 \left(\frac{n}{\sqrt{r}}\right)\sqrt{\log s}$$
for a sufficiently large constant $c_0$.
Define 
\setlength{\thinmuskip}{0mu}

\[ \PP_{n,s,t,\eps}(x_1,\ldots,x_d)\ :=\ Q_{r,2s,t_R}(\{x_i\}_{i\in R})\:\cdot\:
                   P_{s,t',\eps'}\left(\sum_{i=1}^n x_i - t^-\right),
\]
\setlength{\thinmuskip}{3mu}

where $P_{s,t',\eps'}$ is the polynomial from Theorem \ref{thm:det:poly}, with $t':=t-t^-=\Theta((n/\sqrt{r})\sqrt{\log s})$ and
$\eps':=\eps n/t' = \Theta(\eps\sqrt{r}/\sqrt{\log s})$.

To verify the stated properties, consider three cases:
\begin{itemize}
\item
{\sc Case 1}: $\sum_{i=1}^n x_i < t^-$.
By a standard Chernoff bound, with probability at least $1-1/(2s)$,
we have $\sum_{i\in R} x_i < t^-r/n + c_0\sqrt{r\log s} \le t_R$ 
(assuming that $r\ge\log s$).
Thus, with probability at least $1-1/s$,
we have $Q_{n,2s,t_R}(\{x_i\}_{i\in R})=0$ and so
$\PP_{n,s,t,\eps}(x_1,\ldots,x_n)= 0$.
\item
{\sc Case 2}: $\sum_{i=1}^n x_i \in [t^-, t]$.
With probability at least $1-1/s$,
we have $Q_{r,2s,t_R}(\{x_i\}_{i\in R})\in\{0,1\}$ and so
$|\PP_{n,s,t,\eps}(x_1,\ldots,x_n)|\le 1$.
\item
{\sc Case 3}: $\sum_{i=1}^n x_i > t$.
By a standard Chernoff bound, with probability at least $1-1/(2s)$,
we have $\sum_{i\in R} x_i \ge tr/n +
c_0\sqrt{r\log s}=t_R$.
Thus, with probability at least $1-1/s$,
we have $Q_{r,2s,t_R}(\{x_i\}_{i\in R})=1$ and so
$\PP_{n,s,t,\eps}(x_1,\ldots,x_n)>1$ for $\sum_{i=1}^n x_i\in (t,t+\eps n)$, or
$\PP_{n,s,t,\eps}(x_1,\ldots,x_n)\ge s$ for $\sum_{i=1}^n x_i\ge t+\eps n$.
\end{itemize}
The degree of $\PP_{n,s,t,\eps}$ is
\[ O\left( \sqrt{r\log s} + \sqrt{(1/(\eps\sqrt{r}))\sqrt{\log s}}\log s\right) \]
and we can set $r = \up{(1/\eps)^{2/3} \log s}$.
For the exact setting,
the degree is
\[ O\left( \sqrt{r\log s} +
\sqrt{(n/\sqrt{r})\sqrt{\log s}\cdot \log(ns)} \right)
\]
and we can set
$r=\up{n^{2/3}\log^{1/3}(ns)}$.
\end{proof}

\begin{remark}\rm \label{dethamming}
Using the same techniques as in Theorem \ref{derand}, we can sample a probabilistic polynomial from Theorem \ref{thm:poly} with only $O(\log(n)\log(ns))$ random bits.
\end{remark}

\begin{corollary}\label{cor:poly}
Given $d,s,t,\eps$,
we can construct a probabilistic polynomial $\PP:\{0,1\}^{ns}\rightarrow\R$ of degree at most $\D := O((1/\eps)^{1/3}\log s)$ with at most
$s\cdot \binom{n}{D}$ monomials, such that
\begin{compactitem}
\item if $\bigvee_{i=1}^s \predicate{\sum_{j=1}^n x_{ij} \ge t}$ is false, then $|\PP(x_{11},\ldots,x_{1n},\ldots,x_{s1},\ldots,x_{sn})|\le s$ with probability at least $2/3$;
\item if $\bigvee_{i=1}^s \predicate{\sum_{j=1}^d x_{ij} \ge t+\eps n}$ is true, then $\PP(x_{11},\ldots,x_{1n},\ldots,x_{s1},\ldots,x_{sn})> 2s$ with probability at least $2/3$.
\end{compactitem}
For the exact setting with $\eps=1/n$, we can alternatively bound $\D$ by
$O(n^{1/3}\log^{2/3}(ns))$.
\end{corollary}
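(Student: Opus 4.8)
The plan is to mimic exactly the construction used in Corollary~\ref{cor:det:poly} for the deterministic PTF, but feeding in the probabilistic PTF $\PP_{n,s,t,\eps}$ from Theorem~\ref{thm:poly} in place of the deterministic $P_{s,t,\eps}$, and then paying attention to the fact that the error probabilities of the $s$ individual copies must be union-bounded. Concretely, I would define
\[
\PP(x_{11},\ldots,x_{sn}) \ :=\ \sum_{i=1}^s \PP_{n,3s',t,\eps}^{(i)}\!\left(\sum_{j=1}^n x_{ij}\right),
\]
where each $\PP^{(i)}$ is an independent draw from the distribution of Theorem~\ref{thm:poly} on $n$ variables, and $s'$ is chosen large enough (a constant times $s$) that the overall failure probability is at most $1/3$. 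Since Theorem~\ref{thm:poly} gives error $1/s'$ per copy, taking $s' = \Theta(s)$ (say $s' \ge 6s$) makes the union bound over the $s$ copies at most $s/s' \le 1/6 < 1/3$.

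The verification of the two bullet points is then immediate, conditioned on the good event that all $s$ copies behave as promised by Theorem~\ref{thm:poly}. In the ``false'' case, each $\sum_j x_{ij} \le t$ (strictly below the threshold in the sense of the first bullet of Theorem~\ref{thm:poly}, after the usual off-by-one massaging of $t$ versus $t-1$), so $|\PP^{(i)}| \le 1$ for every $i$, hence $|\PP| \le s$. In the ``true'' case, some index $i_0$ has $\sum_j x_{i_0 j} \ge t + \eps n$, so $\PP^{(i_0)} \ge 3s' \ge 3s$, while every other term is at least $-1$, giving $\PP \ge 3s - (s-1) > 2s$. The degree bound $\D = O((1/\eps)^{1/3}\log s')= O((1/\eps)^{1/3}\log s)$ and the exact-case bound $O(n^{1/3}\log^{2/3}(ns))$ come directly from Theorem~\ref{thm:poly} with the substitution $s\mapsto 3s'$, which only changes constants. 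The monomial count is at most $s$ times the number of degree-$\le\D$ monomials in $n$ variables, i.e. $s\cdot\binom{n}{\D}$ (using that each $\PP^{(i)}$ is a univariate polynomial of degree $\D$ in $\sum_j x_{ij}$, and $(\sum_j x_{ij})^k$ expands into at most $\binom{n}{k}\le\binom{n}{\D}$ distinct monomials over $\{0,1\}$ since squares of $0/1$ variables collapse).

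I do not expect any serious obstacle here; this is essentially bookkeeping on top of Theorem~\ref{thm:poly}. The only mildly delicate point is matching the strict-versus-nonstrict inequalities: Theorem~\ref{thm:poly} separates $\sum x_i \le t$ from $\sum x_i \in (t, t+\eps n)$ from $\sum x_i \ge t+\eps n$, whereas the corollary phrases the "false" case as $\bigvee_i [\sum_j x_{ij}\ge t]$ being false, i.e. $\sum_j x_{ij} \le t-1$ for all $i$ — so one should instantiate Theorem~\ref{thm:poly} with threshold parameter $t-1$ (and note $\eps n$ absorbs the shift), exactly as is done implicitly throughout the paper when passing between "$>t$" and "$\ge t$". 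Also worth a one-line remark: by Remark~\ref{dethamming}, each $\PP^{(i)}$ can be sampled with $O(\log n\log(ns))$ random bits, so $\PP$ uses only $O(s\log n\log(ns))$ — or, reusing the same seed structure as in Theorem~\ref{derand}, still only $O(\log n\log(ns))$ — random bits, which is what the applications need.
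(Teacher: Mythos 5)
Your construction is the same as the paper's: the paper simply sets $\PP := \sum_{i=1}^s \PP_{n,3s,t,\eps}(x_{i1},\ldots,x_{in})$, so that each copy fails with probability $1/(3s)$ and a union bound over the $s$ copies already gives failure probability at most $1/3$, with the true case yielding $\PP \ge 3s - (s-1) > 2s$. Your extra inflation to $s' \ge 6s$ (and the passing claim that the per-copy error is $1/s'$ rather than $1/(3s')$) is harmless but unnecessary, and the rest — the degree, the monomial count via $\binom{n}{\D}$, and the handling of the three ranges of $\sum_j x_{ij}$ — matches the intended argument.
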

\begin{proof}
Define
$\PP(x_{11},\ldots,x_{1n},\ldots,x_{s1},\ldots,x_{sn})
\ :=\ \sum_{i=1}^s \PP_{n,3s,t_i,\eps}(x_{i1},\ldots,x_{in}).
$
\end{proof}

\begin{remark}\rm 
The coefficients of the polynomials from Fact~\ref{fact:AlmWil} are $\poly(n)$-bit integers, and it can be checked that the coefficients of all our deterministic and probabilistic polynomials are rational numbers with $\poly(n)$-bit numerators and a common $\poly(n)$-bit denominator, and that the same bound for the number of monomials holds for the construction time, up to $\poly(n)$ factors. That is, computations with these polynomials have low computational overhead relative to $n$.
\end{remark}

\section{Exact and Approximate Offline Nearest Neighbor Search} \label{hnnsection}

We now apply our new probabilistic PTF construction to obtain a faster  algorithm for offline exact nearest/farthest neighbor search in Hamming space:

\begin{reminder}{Theorem~\ref{thm:ham}}
Given $n$ red and $n$ blue points in $\{0,1\}^d$ for $d=c\log n \ll \log^3 n/\log^5\log n$, we can find an (exact) Hamming nearest/farthest blue neighbor for every red point in randomized time
$n^{2-1/O(\sqrt{c}\log^{3/2}c)}$.
\end{reminder}
\begin{proof}
We proceed as in Abboud, Williams, and Yu's algorithm for Boolean orthogonal vectors~\cite{AbboudWY15} or Alman and Williams' algorithm for Hamming closest pair~\cite{JoshRyan}.
For a fixed $t$,
we first solve the decision problem of testing whether the nearest neighbor
distance is less than $t$ for each red point.  (Farthest neighbors are similar.)
Let $s=n^\alpha$ for some parameter $\alpha$ to be set later.
Arbitrarily divide the blue point set into $n/s$ groups of $s$ points.
For every group $G$ of blue points 
and every red point $q$,
we want to test whether 
\[ F(G,q)\ :=\ \predicate{\min_{p\in G} \|p-q\|_1 < t} \ =\ 
\bigvee_{p\in G} \predicate{\sum_{i=1}^d (p_iq_i + (1-p_i)(1-q_i))
> d-t}
\] 
(where $p_i$ denotes the $i$-th coordinate of a point $p$).  
By Corollary~\ref{cor:poly}, we can express $F(G,q)$ 
as a probabilistic polynomial that has
the following number of monomials:
\begin{eqnarray*}
s \cdot \binom{O(d)}{ O(d^{1/3}\log^{2/3}(ds))}
&\le& n^\alpha\cdot O\left(\frac{c\log n}{c^{1/3}\alpha^{2/3}\log n}\right)^{O(c^{1/3}\alpha^{2/3}\log n)}\\
&\le& n^\alpha\cdot n^{O(c^{1/3}\alpha^{2/3}\log\frac{c}{\alpha})}
\ \ll\ (n/s)^{0.1}
\end{eqnarray*}
for large enough $n$, by setting $\alpha$ to be a sufficiently small constant times $1/(c^{1/3}\log^{3/2}c)$.  The same bound holds for the construction time of the polynomial.

We can rewrite the polynomial for $F(G,q)$ as the dot product of two vectors $\phi(G)$ and $\psi(q)$ in $(n/s)^{0.1}$ dimensions over $\R$.
The problem of evaluating $F(G,q)$ over all $n/s$ groups $G$ of
blue points and all red points $q$ then reduces to
multiplying an $n/s\times (n/s)^{0.1}$ with an $(n/s)^{0.1}\times n$
matrix over $\R$.  This in turn reduces to $s$ instances of multiplication of
$n/s\times (n/s)^{0.1}$ with $(n/s)^{0.1}\times n/s$ matrices,
each of which can be done in $\OO(n/s)^2$ arithmetic operations on $\poly(d)$-bit numbers over an appropriately large field (Lemma~\ref{rectangular}). The total time is $\OO(\poly(d)n^2/s)=O(n^{2-1/O(c^{1/3}\log^{3/2}c)})$.  

The error probability for each pair $(G,q)$ is at most $1/3$, which can be lowered to $O(1/n^3)$, for example,
by repeating $O(\log n)$ times (and taking the majority of the answers).
The overall error probability is then $O(1/n)$.  This solves the
decision problem for a fixed $t$, but we can compute all nearest neighbor
distances by calling the decision algorithm $d$ times for all values of $t$.
For each red point, we can find an actual nearest neighbor in additional $O(s)$ time,
since we know which group achieves the nearest neighbor distance. 
\end{proof} \smallskip

The same approach can be applied to solve \emph{approximate} nearest neighbor search in Hamming space:

\begin{theorem}\label{thm:ham:approx0} 
Given $n$ red and $n$ blue points in $\{0,1\}^d$ and $\eps\gg \log^6(d\log n)/\log^3 n$, we can find 
an approximate Hamming nearest/farthest blue neighbor with additive error at most $\eps d$ for each red point in randomized time
$n^{2-\Omega(\eps^{1/3}/\log(\frac{d}{\eps\log n}))}$.
\end{theorem}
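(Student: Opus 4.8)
The plan is to run the same polynomial-method / rectangular-matrix-multiplication scheme as in Theorem~\ref{thm:ham} (itself following \cite{AbboudWY15,JoshRyan}), but feeding it the \emph{approximate} probabilistic PTF of Corollary~\ref{cor:poly} (with the dimension playing the role of ``$n$'' and with the given $\eps$) in place of the exact version; the cube-root dependence on $1/\eps$ there is exactly what produces the $\eps^{1/3}$ in the exponent. First I would fix a target threshold $t\in\{0,1,\ldots,d\}$ and consider the decision task: for each red point $q$, decide whether some blue point lies within Hamming distance $<t$, tolerating an additive slack of $\eps d$ in $t$. Set $s=n^{\alpha}$ for a parameter $\alpha$ fixed below, and arbitrarily partition the blue points into $n/s$ groups of $s$ points. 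For a group $G$ and a red point $q$, note $\|p-q\|_1<t$ iff $\sum_{i=1}^d\big(p_iq_i+(1-p_i)(1-q_i)\big)>d-t$, so
\[ F(G,q)\ :=\ \predicate{\min_{p\in G}\|p-q\|_1<t}\ =\ \bigvee_{p\in G}\predicate{\textstyle\sum_{i=1}^d\big(p_iq_i+(1-p_i)(1-q_i)\big)>d-t}, \]
where the inner sum is a sum of $d$ products, each depending on one coordinate of $p$ and one of $q$. Applying Corollary~\ref{cor:poly} (with ``$n$''$\,=d$, threshold $d-t$, and approximation $\eps$) gives a probabilistic PTF for $F(G,q)$ of degree $\D=O((1/\eps)^{1/3}\log s)=O((1/\eps)^{1/3}\alpha\log n)$; substituting the bilinear sum and expanding, this becomes a polynomial in the coordinates of $G$ and $q$ jointly with at most $s\binom{d}{\D}$ monomials, computable with $\poly(d\log n)$-bit coefficients in about that much time.

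Next I would choose $\alpha$ so that this monomial count is tiny. We need $s\binom{d}{\D}\le (n/s)^{0.1}$; since $\binom{d}{\D}\le (O(d)/\D)^{\D}$, taking $\log_n$ of the left side gives $\alpha+O\!\big((1/\eps)^{1/3}\alpha\cdot\log\tfrac{d}{\eps\log n}\big)$, so it suffices to take $\alpha$ equal to a small enough constant times $\eps^{1/3}/\log\!\big(\tfrac{d}{\eps\log n}\big)$. The hypothesis $\eps\gg \log^6(d\log n)/\log^3 n$ guarantees that with this choice $\alpha\log n/\log(d\log n)\to\infty$, i.e.\ $n^{\alpha}$ dominates any fixed polynomial in $d\log n$; this is what lets the $\le d+1$ decision calls, the $O(\log n)$ repetitions used for error reduction, and the $\poly(d\log n)$ per-entry arithmetic cost all be absorbed into the final $n^{\Omega(\alpha)}$ savings.

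With $\alpha$ fixed I would finish exactly as in Theorem~\ref{thm:ham}. Write the polynomial for $F(G,q)$ as a dot product $\phi(G)\cdot\psi(q)$ of two explicitly computable vectors of dimension $s\binom{d}{\D}\le (n/s)^{0.1}$ over a sufficiently large field. Evaluating all $F(G,q)$ over the $n/s$ groups and $n$ red points is then one multiplication of an $(n/s)\times (n/s)^{0.1}$ matrix by an $(n/s)^{0.1}\times n$ matrix, which splits into $s$ multiplications of $(n/s)\times (n/s)^{0.1}$ by $(n/s)^{0.1}\times (n/s)$ matrices; by Coppersmith's rectangular matrix multiplication (Lemma~\ref{rectangular}) each costs $\OO((n/s)^2)$ field operations, for a total of $\OO(\poly(d\log n)\cdot n^2/s)=n^{2-\Omega(\eps^{1/3}/\log(d/(\eps\log n)))}$. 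The per-pair error $\le 1/3$ is driven down to $\le 1/n^{3}$ by $O(\log n)$ independent repetitions and a majority vote, for overall error $o(1)$. Running this decision procedure for every $t\in\{0,1,\ldots,d\}$ and, for each red point, reading off the smallest $t$ for which the answer is ``yes'', yields the nearest-neighbor distance up to additive error $\eps d$ (that is the purpose of the $\eps d$ ambiguity window in Corollary~\ref{cor:poly}); an actual approximate nearest neighbor is then found in $O(s)$ extra time per point by scanning the one group responsible. The farthest-neighbor case is symmetric (replace $<$ by $>$, or negate).

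The main obstacle — and the only place where genuine choices are made — is the parameter balance in the second step: we want $s$ as large as possible (the speedup is $n^2/s$) subject to the PTF monomial count $s\binom{d}{\D}$ staying polynomially smaller than $n/s$. Making this go through is precisely where Theorem~\ref{thm:poly}/Corollary~\ref{cor:poly}'s degree $O((1/\eps)^{1/3}\log s)$ — rather than the exact $O(d^{1/3}\log^{2/3}(ds))$ used in Theorem~\ref{thm:ham} — buys the improvement, and where the lower bound on $\eps$ is needed so that the polylogarithmic overheads do not swamp the gain. Everything else is a routine transcription of the exact algorithm.
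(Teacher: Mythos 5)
Your proposal follows essentially the same route as the paper's proof: mimic Theorem~\ref{thm:ham}, substitute the approximate probabilistic PTF of Corollary~\ref{cor:poly} (with ``$n$''$=d$ and approximation parameter $\eps$) for the exact one, and set $\alpha$ a small constant times $\eps^{1/3}/\log(d/(\eps\log n))$ so the monomial count stays below $(n/s)^{0.1}$; the parameter bookkeeping, the rectangular matrix multiplication finish, and the role of the $\eps d$ ambiguity window all match. The only cosmetic difference is that you (correctly) write the time as $\OO(\poly(d\log n)\cdot n^2/s)$ while the paper writes $\tilde O(n^2/s^2)$, but both reduce to the same stated exponent.
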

\begin{proof} We mimic the proof of Theorem~\ref{thm:ham} up to the definition of the polynomial $F(G,q)$. However, instead of applying the exact polynomial of Corollary~\ref{cor:poly}, we insert the \emph{approximate} polynomial construction from the same corollary. While the exact polynomial had degree $O(d^{1/3}\log^{2/3}(ds))$, the approximate one has degree $O((1/\epsilon)^{1/3} \log s)$. 
Setting \[s := n^{\alpha} := n^{\Omega(\eps^{1/3}/\log(\frac{d}{\eps\log n}))},\] 
 the number of monomials in the new polynomial is now
\begin{eqnarray*}
s\cdot \binom{O(d)}{O((1/\eps)^{1/3}\log s)}
&\le&
n^\alpha\cdot O\left(\frac{d}{(\alpha/\eps^{1/3})\log n}\right)^{O((\alpha/\eps^{1/3})\log n)}\\
&\le&
n^\alpha\cdot n^{O((\alpha/\eps^{1/3})\log\frac{d}{\alpha\log n})}
\ \ll\ (n/s)^{0.1},
\end{eqnarray*}
for large enough $n$. The remainder of the algorithm is the same as the proof of Theorem~\ref{thm:ham}, and the running time is $\tilde{O}(n^2/s^2) \leq n^{2-\Omega(\eps^{1/3}/\log(\frac{d}{\eps\log n}))}$.
\end{proof}

\begin{remark}\rm\label{rmk:det:alg}
For deterministic algorithms, using Corollary~\ref{cor:det:poly} instead,
the time bounds for Theorems~\ref{thm:ham} and~\ref{thm:ham:approx}
become $n^{2-1/O(c\log^2c)}$ and $n^{2-\Omega(\sqrt{\eps}/\log(\frac{d}{\eps\log n}))}$ respectively.
\end{remark}

The algorithm of Theorem~\ref{thm:ham:approx0} still has three drawbacks: (i)~the exponent in the time bound depends on the dimension~$d$, (ii)~the result requires additive instead of multiplicative error, and (iii)~the result is for Hamming space instead of more generally $\ell_1$ or $\ell_2$.  We can resolve all three issues at once, by using known dimension reduction techniques:

\begin{reminder}{Theorem~\ref{thm:ham:approx}}
Given $n$ red and $n$ blue points in $[U]^d$ and $\eps\gg \frac{\log^6\log n}{\log^3 n}$, we can find 
a $(1+\eps)$-approximate $\ell_1$ or $\ell_2$ nearest/farthest blue neighbor for each red point in 
$(dn+n^{2-\Omega(\eps^{1/3}/\log(1/\eps))})\cdot \poly(\log(nU))$ randomized time.

\end{reminder}
\begin{proof} {\bf (The $\ell_1$ case.)} We first solve the decision problem for a fixed threshold value $t$.
We use a variant of $\ell_1$ locality-sensitive hashing (see~\cite{AndoniMasterThesis}) to map points from $\ell_1$ into low-dimensional Hamming space
(providing an alternative to Kushilevitz, Ostrovsky, and Rabani's dimension reduction technique for Hamming space~\cite{KOR}).
For each red/blue point $p$ and each $i\in\{1,\ldots,k\}$,
define $h_i(p)=(h_{i1}(p),\ldots,h_{id}(p))$ with $h_{ij}(p) = \down{(p_{a_{ij}}+b_{ij})/(2t)}$
where $a_{ij}\in\{1,\ldots,d\}$ and $b_{ij}\in [0,2t)$ are independent uniformly distributed random variables.
For each of the $O(n)$ hashed values of $h_i$, pick a random bit; let $f_i(p)$ be the random bit associated with $h_i(p)$.
Finally, define $f(p)=(f_1(p),\ldots,f_k(p))\in\{0,1\}^k$. For any fixed $p,q$, 
\begin{eqnarray*}
 \Pr[h_{ij}(p)\neq h_{ij}(q)] &=& \frac{1}{d}\sum_{a=1}^d \min\left\{\frac{|p_a-q_a|}{2t}, 1\right\}\\
  \Pr[f_i(p)\neq f_i(q)] &=& \frac{1}{2}\Pr[h_i(p)\neq h_i(q)]\ =\ \frac{1}{2}\Pr\left[\bigvee_{j=1}^k \predicate{h_{ij}(p)\neq h_{ij}(q)}\right].
\end{eqnarray*}
\begin{itemize}
\item If $\|p-q\|_1\le t$, then $\Pr[h_{ij}(p)\neq h_{ij}(q)]\le \frac{\|p-q\|_1}{2dt}\le \frac{1}{2d}$ and 
$\Pr[f_i(p)\neq f_i(q)] \le \alpha_0 := \frac{1}{2}(1 - (1-\frac{1}{2d})^d)$;
\item if $\|p-q\|_1 \ge (1+\eps)t$, then $\Pr[h_{ij}(p)\neq h_{ij}(q)]\ge
\min\{\frac{\|p-q\|_1}{2dt},\frac{1}{d}\}\ge \frac{1+\eps}{2d}$ and 
$\Pr[f_i(p)\neq f_i(q)] \ge \alpha_1 := \frac{1}{2}(1 - (1-\frac{1+\eps}{2d})^d)$.
\end{itemize}
Note that $\alpha_1-\alpha_0=\Omega(\eps)$.
By a Chernoff bound, it follows (assuming $k\ge\log n$) that
\begin{itemize}
\item if $\|p-q\|_1\le t$, then $\|f(p)-f(q)\|_1\le A_0 := \alpha_0 k + 
O(\sqrt{k\log n})$ with probability $1-O(1/n^3)$;
\item if $\|p-q\|_1\ge (1+\eps)t$, then $\|f(p)-f(q)\|_1\ge A_1 :=
\alpha_1 k - O(\sqrt{k\log n})$ with probability $1-O(1/n^3)$.
\end{itemize}
Note that $A_1-A_0=\Omega(\eps k)$ by setting $k$ to be a sufficiently 
large constant times $(1/\eps)^2\log n$.
We have thus reduced the problem to an approximate problem
with additive error $O(\eps k)$ for Hamming space in $k=O((1/\eps^2)\log n)$ dimensions,
which by Theorem~\ref{thm:ham:approx0} requires $n^{2-\Omega(\eps^{1/3}/\log(1/\eps))}$ time.  The initial cost of applying the mapping $f$
is $O(dkn)$.

This solves the decision problem; we can solve the original problem
by calling the decision algorithm $O(\log_{1+\eps}U)$ times for all $t$'s that
are powers of $1+\eps$. 
\end{proof} \smallskip

\begin{proof} {\bf (The $\ell_2$ case.)} We use a version of the Johnson--Lindenstrauss lemma to map from $\ell_2$ to $\ell_1$ (see for example~\cite{Mat08}).  For each red/blue point $p$, define 
$f(p)=(f_1(p),\ldots,f_k(p))\in\R^k$ with $f_i(p)=\sum_{j=1}^k a_{ij}p_j$, where the $a_{ij}$'s are independent normally distributed random variables with mean 0 and variance~1.  For each fixed $p,q\in\R^d$, 
it is known that after rescaling by a constant,
$\|f(p)-f(q)\|_1$ approximates $\|p-q\|_2$ to within 
$1\pm O(\eps)$ factor with probability $1-O(1/n^3)$, by setting $k=O((1/\eps)^2\log n)$.  It suffices to keep $O(\log U)$-bit precision of the mapped points.  The initial cost of applying the
mapping $f$ is $O(dkn)$ (which can be slightly improved by utilizing a sparse
Johnson--Lindenstrauss transform~\cite{AilonChazelle}).
\end{proof} \smallskip

Numerous applications to high-dimensional computational geometry now follow.  We briefly mention just one such application, building on the work of \cite{IndykMotwani,HarPeledIndykMotwani}:

\begin{corollary}\label{cor:mst}
Given $n$ points in $[U]^d$ and $\eps\gg \log^6\log n/\log^3 n$, we can find 
a $(1+\eps)$-approximate $\ell_1$ or $\ell_2$ minimum spanning tree
in 
$(dn+n^{2-\Omega(\eps^{1/3}/\log(1/\eps))})\cdot \poly(\log(nU))$ randomized time.
\end{corollary}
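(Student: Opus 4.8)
The plan is to reduce the approximate minimum spanning tree (MST) problem to a sequence of approximate bichromatic nearest-neighbor queries, and then invoke Theorem~\ref{thm:ham:approx} as a black box. The classical framework here is that of Indyk and Motwani~\cite{IndykMotwani} and Har-Peled, Indyk, and Motwani~\cite{HarPeledIndykMotwani}: a $(1+\eps)$-approximate MST under any metric can be computed by running an approximate version of a standard MST algorithm (Kruskal or Bor\r{u}vka) in which each ``find the shortest edge crossing a cut'' step is replaced by an approximate bichromatic closest-pair query. First I would recall that for points with integer coordinates in $[U]^d$ under $\ell_1$ or $\ell_2$, all pairwise distances lie in a range of size $\poly(U d)$, so we only need to consider $O(\log_{1+\eps}(Ud)) = O((1/\eps)\log(Ud))$ distinct distance scales; this is the same bucketing idea already used at the end of the proof of Theorem~\ref{thm:ham:approx}.

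The key steps, in order, are as follows. First, bucket the candidate edge lengths into $O((1/\eps)\log(nU))$ geometrically spaced thresholds $t$ that are powers of $1+\eps$. Second, run Bor\r{u}vka's algorithm: maintain a partition of the points into connected components (initially singletons), and in each of the $O(\log n)$ rounds, for every component $C$, find an approximately nearest point in another component to some/every vertex of $C$. To implement one such round at a fixed distance scale $t$, color the points so that we can detect, for each point, whether it has a point from a \emph{different} component within distance roughly $t$; this is exactly an offline bichromatic approximate nearest-neighbor problem of the type solved by Theorem~\ref{thm:ham:approx}, after the standard trick of splitting the components into two color classes $O(\log n)$ times (a randomized coloring, or a deterministic $O(\log n)$-wise splitting) so that any fixed cross-component pair is separated in at least one coloring. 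Third, add the selected approximate edges, contract components, and recurse; after $O(\log n)$ rounds the tree is complete. The total number of calls to the algorithm of Theorem~\ref{thm:ham:approx} is $O(\log n) \cdot O(\log n) \cdot O((1/\eps)\log(nU)) = \poly(\log(nU), 1/\eps)$, each costing $(dn + n^{2-\Omega(\eps^{1/3}/\log(1/\eps))})\cdot\poly(\log(nU))$ time, which is absorbed into the claimed bound since the leading $n^{2-\Omega(\eps^{1/3}/\log(1/\eps))}$ term dominates $\poly(\log)$ overhead.

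The main obstacle is the usual subtlety in the approximate-MST reduction: one must verify that using approximate (rather than exact) nearest-neighbor edges at each Bor\r{u}vka step still yields a spanning tree whose total weight is within $(1+\eps)$ of optimal, and that the approximation errors do not compound across the $O(\log n)$ rounds. The standard resolution (from \cite{IndykMotwani,HarPeledIndykMotwani}) is that one works scale-by-scale: at scale $t$, one only commits to edges of length in $[t,(1+\eps)t)$, merging components connected by such edges, and one shows that the resulting forest, summed over all scales, has weight at most $(1+\eps)$ times that of the true MST via a charging argument against the sorted edge weights of the optimal tree. A secondary, minor point is that Theorem~\ref{thm:ham:approx} as stated finds, for each red point, an approximate nearest \emph{blue} neighbor, so one must phrase each Bor\r{u}vka cut query in bichromatic form and union the $O(\log n)$ color-splitting outcomes; this is routine. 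Handling the ``additive vs.\ multiplicative'' and ``Hamming vs.\ $\ell_1/\ell_2$'' concerns is unnecessary here since Theorem~\ref{thm:ham:approx} already delivers a $(1+\eps)$-multiplicative guarantee directly in $\ell_1$ and $\ell_2$ over $[U]^d$.
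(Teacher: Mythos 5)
Your proposal follows essentially the same route as the paper: reduce approximate MST via the Har-Peled--Indyk--Motwani Kruskal/scale-based reduction, use Bor\r{u}vka-style merging to bound the number of iterations by $O(\log n)$, and reduce the ``nearest point in a different component'' query to offline bichromatic ANN via the $O(\log n)$ bit-splitting trick on color labels, finally invoking Theorem~\ref{thm:ham:approx} as a black box. The paper's exposition is slightly cleaner in that it explicitly factors through HIM's intermediate \emph{approximate connected components} problem (so the Kruskal-based charging argument and the Bor\r{u}vka merging live in separate, well-delimited reductions), whereas your write-up fuses the scale loop with the Bor\r{u}vka rounds, but the underlying ingredients and the resulting bound are identical.
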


\begin{proof}
Let $G_r$ denote the graph where the vertex set is the given point set $P$
and an edge $pq$ is present whenever $p$ and $q$ have distance at most $r$.
Har-Peled, Indyk, and Motwani~\cite{HarPeledIndykMotwani}
gave a reduction of the approximate minimum spanning tree
problem to the following \emph{approximate connected components} problem:
\begin{quote}
Given a value $r$, compute a partition of $P$ into
subsets with the properties that
(i)~two points in the same subset must be in the same component 
in $G_{(1+\eps)r}$, and
(ii)~two points in different subsets must be in different components
in $G_r$.
\end{quote}
The reduction is based on Kruskal's algorithm and
increases the running time by a logarithmic factor.

To solve the approximate connected components problem,
Har-Peled, Indyk, and Motwani gave a further reduction to
online dynamic approximate nearest neighbor search.  Since we want
a reduction to offline static approximate nearest neighbor search, we proceed differently.

We first reduce the approximate connected components problem to 
the \emph{offline approximate nearest foreign neighbors} problem:
\newcommand{\NFN}{\textrm{NFN}}
\begin{quote}
Given a set $P$ of $n$ colored points with colors from $[n]$, 
for each point $q\in P$, find a $(1+\eps)$-approximate nearest neighbor $\NFN_q$ 
among all points in $P$ with color different from $q$'s color.
\end{quote}
The reduction can be viewed as a variant of Boruvka's algorithm and
is as follows:  Initially assign each point a unique
color and mark all colors as active. 
At each iteration, solve the offline approximate nearest
foreign neighbors problem for points with active colors.  
For each $q$, if $\NFN_q$ and $q$ have
distance at most $(1+\eps)r$ and have different colors, 
merge the color class of $\NFN_q$ and $q$.  If a color class has not been
merged to other color classes during the iteration, 
mark its color as inactive.
When all colors are inactive, output the color classes.
Otherwise, proceed to the next iteration.  The correctness of
the algorithm is obvious.  Since each iteration decreases the number of
active colors by at least a half, the number
of iterations is bounded by $O(\log n)$.  Thus, the reduction
increases the running time by a logarithmic factor.

To finish, we reduce the offline approximate nearest foreign neighbors problem
to the standard (red/blue) offline approximate nearest neighbors problem
by a standard trick:
For each $j=1,\ldots,\lceil\log n\rceil$, for each point $q\in P$ where the
$j$-th bit of $q$'s color is 0 (resp.\ 1), compute an approximate nearest neighbor
of $q$ among all points $p\in P$ where the $j$-th bit of $p$'s color
is 1 (resp.\ 0).  Record the nearest among all approximate nearest
neighbors found for each point $q$.  The final reduction increases
the running time by another logarithmic factor.
\end{proof}

\section{Faster Algorithms For MAX-SAT}
\label{appendix-MAX-SAT}

Next, we apply our improved probabilistic PTFs to obtain faster algorithms for MAX-SAT for sparse instances with $cn$ clauses.  We first consider MAX-$k$-SAT for small $k$ before solving the general problem:

\begin{theorem}\label{thm:maxksat}
Given a $k$-CNF formula $F$ (or $k$-CSP instance) with $n$ variables and $cn\ll n^4/(k^4\log^6 n)$ clauses, we can find an assignment that satisfies the maximum number of clauses (constraints) of $F$ in randomized $2^{n-n/O(k^{4/3}c^{1/3}\log(kc))}$ time.
\end{theorem}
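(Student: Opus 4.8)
The plan is to follow the split-and-list plus polynomial-method template used for Theorem~\ref{thm:ham} and for the prior sparse-MAX-SAT algorithms. First I would partition the $n$ variables into two halves and list the $2^{n/2}$ assignments to the left half (``red'', set $R$) and the $2^{n/2}$ assignments to the right half (``blue'', set $B$). For $a\in R$ and $b\in B$ the number of satisfied clauses is $\mathrm{sat}(a,b)=\sum_{j=1}^{cn}\predicate{C_j\text{ is satisfied by }(a,b)}$, and each indicator equals $1-\prod_{\ell\in C_j}(1-\ell)$, a polynomial of degree $\le k$ in the bits of $(a,b)$ that, for a clause meeting both halves, factors as (a monomial of degree $\le k$ in the bits of $a$) times (a monomial of degree $\le k$ in the bits of $b$); for a $k$-CSP the same holds with ``clause satisfied'' replaced by ``constraint satisfied'', so the argument is identical. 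Finding $\max_{a,b}\mathrm{sat}(a,b)$ reduces, by binary search on the target value $m$, to $O(\log(cn))$ decision queries ``does some pair $(a,b)$ have $\mathrm{sat}(a,b)\ge m$?'', plus one more pass to output a witnessing assignment once the optimum is known.

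To answer one such query I would group $B$ into $2^{n/2}/s$ groups of $s$ assignments, with $s=2^{n/L}$ for a parameter $L$ to be chosen, and for each red $a$ and blue group $G=\{b_1,\dots,b_s\}$ observe that $\bigvee_{i=1}^s\predicate{\mathrm{sat}(a,b_i)\ge m}$ is an OR of $s$ threshold functions on the $cn$-dimensional $0/1$ vectors $\big(\predicate{C_j\text{ satisfied by }(a,b_i)}\big)_j$. Corollary~\ref{cor:poly} then supplies a probabilistic polynomial $\mathbb{P}$ in these $cn\cdot s$ ``bits'', of degree $\D:=O((1/\eps)^{1/3}\log s)$ and with at most $s\binom{cn}{\D}$ monomials, whose value relative to $2s$ decides the predicate with probability $\ge 2/3$ once we allow an $\eps\cdot cn$ gap around the threshold. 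Substituting the degree-$\le k$ clause indicators turns $\mathbb{P}$ into a polynomial in the $n/2$ bits of $a$ and the $(n/2)s$ bits of $G$, of degree $O(k\D)$ and with at most $s\binom{cn}{\D}\cdot 2^{O(k\D)}$ monomials; since every monomial splits uniquely into an ``$a$-part'' and a ``$G$-part'', its value over all $(a,G)$ is the product of a $2^{n/2}\times M$ matrix with an $M\times(2^{n/2}/s)$ matrix over $\R$ with $\poly(n)$-bit entries, where $M$ is the number of monomials. Choosing $\eps=\Theta(1/(kc))$ and $L=\Theta(k^{4/3}c^{1/3}\log(kc))$ keeps $\D=O((kc)^{1/3}\log s)$, the composed degree $O(k\D)=O(k^{4/3}c^{1/3}\log s)$, and $\log M\lesssim\log s+\D\cdot(k+\log(cn))$ below (say) $\tfrac{1}{20}n$, so $M\le(2^{n/2}/s)^{0.1}$; splitting the product into $s$ rectangular products covered by Coppersmith's rectangular matrix multiplication (Lemma~\ref{rectangular}) then answers one query in $\tilde O(2^n/s)$ time, i.e.\ $2^{\,n-n/O(k^{4/3}c^{1/3}\log(kc))}\poly(n)$ time. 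The $2/3$ success probability is amplified to $1-2^{-3n}$ by $O(n)$ independent repetitions and a majority vote, so a union bound over all (pair, query) events leaves overall error $o(1)$, and the $O(n)$, $O(\log(cn))$ overheads are absorbed into $\poly(n)$.

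Because Corollary~\ref{cor:poly}'s probabilistic PTF only has a one-sided guarantee with an $\eps\cdot cn$ gap, the queries above localize the optimum rather than pin it down exactly; to recover the \emph{exact} optimum I would finish with a pass that uses the \emph{deterministic} construction of Corollary~\ref{cor:det:poly}, since its discrete-Chebyshev version has degree $O(\sqrt{t\log(st)})$ in the threshold value $t$ rather than in the number of clauses. For a $k$-CNF a uniformly random assignment already satisfies all but at most $cn/2^k$ clauses, so the optimum leaves at most $\tau=O(cn/2^k)$ clauses unsatisfied; subtracting the lower bound on $\tau$ obtained from the coarse phase shrinks the relevant threshold still further, and the resulting pass has degree small enough not to dominate the bound above.

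The step I expect to be the main obstacle is the parameter balancing in the matrix-multiplication phase: one must choose $\eps$ and the group size $s$ so that the composed polynomial simultaneously has low enough degree and few enough ($\le(2^{n/2}/s)^{0.1}$) monomials, and it is precisely the tension between the $(1/\eps)^{1/3}$ degree of the probabilistic PTF (with $\eps\approx 1/(kc)$) and the degree-$k$ blow-up from composing with the clause indicators that produces the $k^{4/3}c^{1/3}$ factor. A secondary subtlety is ensuring that recovering the \emph{exact} optimum — despite the probabilistic PTF's inherent gap — fits within the same time budget, which is why the discrete-Chebyshev polynomial (whose degree is governed by the threshold value, not by $cn$) together with the bound $\tau=O(cn/2^k)$ are brought in for the final pass.
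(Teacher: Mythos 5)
Your approach is genuinely different from the paper's, and it contains a real gap in the ``exact finishing pass.''

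The paper does not split-and-list on $n/2+n/2$ variables with matrix multiplication. Instead, following Chan--Williams' $\#$SAT framework, it branches on a set of $s=\alpha n$ \emph{good} variables, chosen so that each occurs in at most $2kc$ clauses; by pigeonhole such variables exist, and the set $J$ of clauses touching any good variable has size only $|J|=O(kcs)$. The OR is taken over the $2^{s}$ settings of the good variables, the threshold inside the OR runs over only the $O(kcs)$ clauses in $J$ (the contribution of the other clauses is absorbed into a precomputed shift $T[x_{s+1},\ldots,x_n]$), and Corollary~\ref{cor:poly} is applied in the \emph{exact} setting on $O(kcs)$ bits, giving degree $k\cdot O((kcs)^{1/3}(s+\log(kcs))^{2/3})$ and hence $\le 2^{0.1n}$ monomials over the remaining $n-s$ variables. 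The polynomial is then evaluated on all $2^{n-s}$ assignments by fast zeta transform, not by rectangular matrix multiplication. The shrinkage of the threshold's arity from $cn$ to $O(kcs)$ is exactly what produces the $k^{4/3}c^{1/3}$ in the exponent; without it an exact split-and-list argument only gives roughly $k^{3/2}c^{1/2}$.

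Your coarse phase uses the \emph{approximate} PTF with $\eps=\Theta(1/(kc))$ over all $cn$ clauses. Two issues arise. First, your monomial bound $\log M\lesssim\log s+\D(k+\log(cn))$ is too weak as written: with $\D=\Theta(n/(k\log(kc)))$, the term $\D\log(cn)$ is $\Theta(n\log n/(k\log kc))$ for constant $k,c$, exceeding $n/20$. The correct bound replaces $\log(cn)$ with $\log(cn/\D)=O(\log(kc))$, which salvages the count, but the justification as stated does not go through. Second, and more seriously, the coarse phase only localizes the optimum to a window of width $\eps cn=\Theta(n/k)$; you then fall back on the deterministic discrete-Chebyshev PTF of Theorem~\ref{thm:det:poly} with threshold $\tau=O(cn/2^{k})$. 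That polynomial has degree $O(\sqrt{\tau\log(s\tau)})$, so after composing with the degree-$k$ clause indicators the degree is $k\sqrt{(cn/2^{k})\cdot(n/L+\log cn)}\approx k^{1/3}c^{1/3}n/(2^{k/2}\sqrt{\log(kc)})$. For small fixed $k$ (say $k=3$ or $4$) and any constant $c$ this is $\Omega(n)$; the corresponding monomial count $\binom{cn}{\Omega(n)}$ is $2^{\Omega(n)}$, far exceeding the $(2^{n/2}/s)^{0.1}$ budget you need for Lemma~\ref{rectangular}. The claim that ``the resulting pass has degree small enough not to dominate'' is therefore not justified, and the ``subtracting the lower bound'' step does not reduce the threshold $t$ fed to the discrete Chebyshev polynomial, since $t$ is an absolute count of (un)satisfied clauses, not a deviation from the optimum. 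The good-variable trick, which you omit, is precisely what sidesteps this problem by making the PTF's arity $O(kcs)$ instead of $cn$ so the exact construction can be used outright.
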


\begin{proof}
We proceed as in the \#$k$-SAT algorithm of Chan and Williams~\cite{ChaWil}.
We first solve the decision problem of testing whether there is a variable
assignment satisfying more than $t$ clauses for a fixed~$t \in [cn]$.
Let $s=\alpha n$ for some parameter $\alpha<1/2$ to be set later.

For $j \in [c_n]$, define the function $C_j(x_1,\ldots,x_n)=1$ if the $j$-th clause of the given formula is satisfied, and $0$ otherwise. Note that  
each $C_j$ can be expressed as a polynomial of degree at most $k$.  

Say that a variable is \emph{good} if it occurs in at most $2kc$ clauses.
By the pigeonhole principle, at least half of the variables are good, so
we can find $s$ good variables $x_1,\ldots,x_s$.  
Let $x_{s+1},\ldots,x_n$ be the remaining variables, and let $J \subset [cn]$ be the set of indices of all clauses $C_j$ that contain some occurrence of a good variable; note that $|J|=O(kcs)$. Now for every variable assignment $(x_{s+1},\ldots,x_n)\in\{0,1\}^{n-s}$,
we want to compute
\[
 F(x_{s+1},\ldots,x_n)\ :=\ 
 \bigvee_{(a_1,\ldots,a_s)\in\{0,1\}^s}
 \predicate{\sum_{j=1}^{cn} C_j(a_1,\ldots,a_s,x_{s+1},\ldots,x_n) > t}.
 \]
We will achieve this by computing for every $t'\in [cn]$:
\[
 G_{t'}(x_{s+1},\ldots,x_n)\ :=\ 
 \bigvee_{(a_1,\ldots,a_s)\in\{0,1\}^s}
 \predicate{\sum_{j\in J} C_j(a_1,\ldots,a_s,x_{s+1},\ldots,x_n) > t'}.
 \]

Let us define $T[x_{s+1},\ldots,x_n] := t-\sum_{j\not\in J}C_j(0,\ldots,0,x_{s+1},\ldots,x_n)$. (Observe that it is OK to zero out the good variables $x_1,\ldots,x_s$ here, because we are only summing over clauses that \emph{do not} contain them.) Note that $T$ can be viewed as a polynomial in $n-s$ variables with only $\poly(n)$ monomials. Therefore for all $(x_{s+1},\ldots,x_n) \in \{0,1\}^{n-s}$, these $T$-values can be precomputed in $\poly(n)2^{n-s}$ time. As these $T$-values are measuring the contribution from the variables $x_{s+1},\ldots,x_n$ to the number of satisfied clauses, we have \[F(x_{s+1},\ldots,x_n)= G_{T[x_{s+1},\ldots,x_n]}(x_{s+1},\ldots,x_n).\]
Applying Corollary~\ref{cor:poly} (in the exact setting), we can express any $G_{t'}$ as a sum of $2^s$ probabilistic
polynomials of degree $k \cdot O((kcs)^{1/3}(s+\log(kcs))^{2/3})$, where each probabilistic polynomial computes an expression of the form $\predicate{\sum_{j \in J} p_j(x_{s+1},\ldots,x_n)}$ with error probability at most $1/(10\cdot 2^s)$, and for all $j \in J$ we have $\deg(p_j(x_{s+1},\ldots,x_n)) \leq k$. The number of monomials in our probabilistic polynomial for $G_{t'}$ is at most
\begin{eqnarray*}
 2^s\cdot \binom{n-s}{k\cdot O((kcs)^{1/3}(s+\log(kcs))^{2/3})} 
 &\le& 
2^{\alpha n} \cdot O\left(\frac{n}{k^{4/3}c^{1/3}\alpha n}\right)^{O(k^{4/3}c^{1/3}\alpha n)}\\
 &\le & 2^{\alpha n}\cdot 2^{O(k^{4/3}c^{1/3}\alpha\log\frac{1}{\alpha})n}\ \ll\ 2^{0.1n}
\end{eqnarray*}
by setting $\alpha$ to be a sufficiently small constant times
$1/(k^{4/3}c^{1/3}\log(kc))$.  The same bound holds for the
construction time of the polynomial.

For each $t'$,
we can evaluate the polynomial for $G_{t'}$ at all $2^{n-s}$ input values by divide-and-conquer or dynamic programming using
$\poly(n)2^{n-s}$ arithmetic operations~\cite{Yates,WilliamsJACM14} on $\poly(n)$-bit numbers.
The total time is $2^{n-n/O(k^{4/3}c^{1/3}\log(kc))}$.  As before,
the error probability can be lowered by taking the majority values over $O(n)$ repetitions, and the original problem can be solved
by calling the decision algorithm for at most $cn$ times.
\end{proof}

\begin{reminder}{Theorem~\ref{thm:maxsat}}
Given a CNF formula with $n$ variables and $cn\ll n^4/\log^{10} n$ clauses,
we can find an assignment that satisfies the maximum number of clauses
in randomized $2^{n-n/O(c^{1/3}\log^{7/3}c)}$ time.
\end{reminder}
\begin{proof}
We use a standard width reduction technique~\cite{SakSetTam} originally observed by Schuler~\cite{Schuler05} and studied closely by Calabro, Impagliazzo, and Paturi~\cite{CIP06}. Consider the following recursive algorithm:
\begin{compactitem}
\item If all clauses have length at most $k$, then call the algorithm from Theorem~\ref{thm:maxksat} and return its output.
\item Otherwise, pick a clause $(\alpha_1\vee\cdots\vee\alpha_\ell)$ with 
$\ell>k$. Return ``SAT'' if at least one of the two following calls return ``SAT'':
\begin{compactitem}
\item  Recursively solve the instance in which $(\alpha_1\vee\cdots\vee\alpha_\ell)$ is replaced by $(\alpha_1\vee\cdots\vee\alpha_k)$, and  
\item recursively solve the instance in which $\alpha_1,\ldots,\alpha_k$ are all assigned \emph{false}.
\end{compactitem}
\end{compactitem}
Sakai, Seto, and Tamaki's analysis for MAX-SAT~\cite{SakSetTam} can be directly
modified to show that the total time of this algorithm remains
$2^{n-n/O(k^{4/3}c^{1/3}\log(kc))}$, when the parameter $k$ is set to be a sufficiently large constant times $\log c$.
\end{proof}

For MAX-$k$-SAT with $k\le 4$, we can obtain a much better dependency on the sparsity parameter~$c$; in fact, we obtain significant speedup even for general dense instances.  The approach this time requires only the previous probabilistic polynomials by Alman and Williams~\cite{JoshRyan}.  Naively, the dense case seems to require threshold functions with superlinearly many arguments, but by incorporating a few new ideas, we manage to solve MAX-4-SAT using only $O(n)$-variate threshold functions.

\begin{reminder}{Theorem~\ref{thm:max4sat}}
Given a weighted 4-CNF formula $F$ with $n$ variables with positive integer weights bounded by $\poly(n)$, we can find an assignment that maximizes the total weight of clauses satisfied in $F$, in randomized $2^{n-n/O(\log^2 n\log^2\log n)}$ time.
In the sparse case when the clauses have total weight $cn$, the time bound improves
to $2^{n-n/O(\log^2 c\log^2\log c)}$.
\end{reminder}
\begin{proof} {\bf (Dense case.)}\ \ 
Let $s=\alpha n$ for some parameter $\alpha$ to be set later. Arbitrarily divide the $n$ variables of $F$ into three groups: 
$x=\{x_1,\ldots,x_{(n-s)/2}\}$,
$y=\{y_1,\ldots,y_{(n-s)/2}\}$,
and $z=\{z_1,\ldots,z_s\}$.
As in Theorem~\ref{thm:maxksat}, it suffices to solve the decision problem of whether there exist $x,y\in\{0,1\}^{(n-s)/2}$ and $z\in\{0,1\}^s$ such that $f(x,y,z)>t$, for a given degree-4 polynomial $f$ and a fixed $t \in [n^{c_0}]$ (for an appropriately large constant $c_0$).
 Since $f$ has degree 4, observe that each term has either (a) at most one $y$ variable, (b) at most one $x$ variable, or (c) no $z$ variable.
We can thus write 
\[ f(x,y,z) = \sum_{i=1}^{(n-s)/2}\! f_i(x,z) y_i  + \sum_{i=1}^{(n-s)/2}\! g_i(y,z) x_i
+ h(x,y)
\]
where the $f_i$'s and $g_i$'s are degree-3 polynomials, and $h$ is a degree-4 polynomial.

For every $x,y\in\{0,1\}^{(n-s)/2}$, it suffices to compute
\[ F(x,y) := \sum_{z\in\{0,1\}^s} \predicate{f(x,y,z)>t}.
\]
More generally, we compute for every $t'\in [n^{c_0}]$:
\[ G_{t'}(x,y) := \sum_{z\in\{0,1\}^s} H_{z,t'}(x,y),\ \ \mbox{with}\ \
H_{z,t'}(x,y) := \predicate{ \sum_{i=1}^{(n-s)/2}\! f_i(x,z) y_i  + \sum_{i=1}^{(n-s)/2}\! g_i(y,z) x_i > t' }.
\]
Then $F(x,y) = G_{t-h(x,y)}(x,y)$; we can precompute all $h(x,y)$
values in $\poly(n)2^{n-s}$ time.  

The $H_{z,t'}(x,y)$ predicate can be viewed as a \emph{weighted} threshold function with $O(n)$ arguments.  To further complicate matters, these weights are not fixed: they depend on $x$ and $y$.  We resolve the issue by extending the vectors $x$ and $y$ and using a binary representation trick.

For each vector $x\in\{0,1\}^{(n-s)/2}$, define 
an \emph{extended vector} $x^*$ where
$x^*_i=x_i$ for each $i=1,\ldots,(n-s)/2$ and
$x^*_{i,j,z}$ is the $j$-th least significant bit in the binary representation of $f_i(x,z)$ for each $i=1,\ldots,(n-s)/2$,
$j=0,\ldots,\ell$ and $z\in\{0,1\}^s$, with $\ell=O(\log n)$.
Note that $x^*$ is a vector in $O(n\cdot \log n \cdot 2^s)$ dimensions.
Similarly, for each vector $y\in\{0,1\}^{(n-s)/2}$, define
an extended vector $y^*$ where
$y^*_i=y_i$ for each $i=1,\ldots,(n-s)/2$ and
$y^*_{i,j,z}$ is the $j$-th least significant bit in the binary representation of $g_i(y,z)$ for each $i=1,\ldots,(n-s)/2$,
$j=0,\ldots,\ell$ and $z\in\{0,1\}^s$. We can precompute all extended vectors in $2^{(n-s)/2}\cdot \poly(n)2^s$ time.

Then
\[ H_{z,t'}(x,y) := \sum_{(t_0,\ldots,t_\ell)}\prod_{j=0}^{\ell}\predicate{ \sum_{i=1}^{(n-s)/2}\! x^*_{i,j,z} y_i  + \sum_{i=1}^{(n-s)/2}\! y^*_{i,j,z} x_i = t_j },
\]
where the outer sum is over all tuples $(t_0,\ldots,t_\ell)\in [n^{c_0}]^\ell$ with
$\sum_{j=0}^{\ell} 2^j\cdot t_j > t'$.

By Fact~\ref{fact:AlmWil}, for each $z \in \{0,1\}^s$, $j=0,\ldots,\ell$, and $t_j \in [n^{c_0}]$, we can construct a probabilistic polynomial 
(over $\R$ or $\mathbb{F}_2$)
for the predicate $\predicate{ \sum_i x^*_{i,j,z} y_i  + \sum_i y^*_{i,j,z} x_i = t_j }$
with degree $O(\sqrt{n\log S})$ with error probability at most $1/S$.
By the union bound, the probability that there is an error for some $z,j,t_j$ is at most 
$O((1/S)\cdot 2^s\cdot\log n\cdot n^{O(1)})$, which can be made at most $1/4^s$, for example,
by setting $S=n^{c_0}2^s$ for a sufficiently large constant $c_0$.
Thus, the degree for each predicate is $O(\sqrt{ns})$ (assuming $s\ge\log n$).

For each $z \in \{0,1\}^s$ and $t' \in [n^{c_0}]$, by distributing over the product $\prod_{j=0}^{\ell}$ we can then construct a probabilistic polynomial for $H_{z,t'}(x,y)$ with degree $O(\sqrt{ns}\ell) \leq O(\sqrt{ns}\log n)$. For a fixed $z$ and $t'$, such a polynomial is a function of $O(n\log n)$ free variables in $x^*$ and $y^*$, and therefore has at most
$\binom{O(n\log n)}{O(\sqrt{ns}\log n)}$ monomials.  The same bound holds for the time needed to construct the probabilistic polynomial (note the number of tuples $(t_0,\ldots,t_\ell)$ is $n^{O(\log n)}$, which is a negligible factor).

For each $t' \in [n^{c_0}]$, we can thus construct a probabilistic polynomial for $G_{t'}(x,y)$ with degree $O(\sqrt{ns}\log n)$ over $x^*$ and $y^*$, with the following number of monomials:
\begin{eqnarray*}
 2^s\cdot \binom{O(n\log n)}{O(\sqrt{ns}\log n)} &\le & 
   2^{\alpha n}\cdot O\left(\frac{n\log n}{\sqrt{\alpha} n\log n}\right)^{O(\sqrt{\alpha}n\log n)}\\
   &\le & 2^{\alpha n}\cdot 2^{\sqrt{\alpha}n(\log (n))\log(1/\alpha)}
   \ \ll\ 2^{0.1(n-s)/2}
\end{eqnarray*}
by setting $\alpha$ to be a sufficiently small constant times
$1/(\log n \cdot \log\log n)^2$.  The same bound holds for the construction time.

We can rewrite the polynomial for $G_{t'}(x,y)$ as the dot product of two
vectors $\phi(x^*)$ and $\psi(y^*)$ of $2^{0.1(n-s)/2}$ dimensions.
The problem of evaluating $G_{t'}(x,y)$ over all $x,y\in\{0,1\}^{(n-s)/2}$ then reduces to multiplying a $2^{(n-s)/2}\times 2^{0.1(n-s)/2}$ with
a $2^{0.1(n-s)/2}\times 2^{(n-s)/2}$ matrix (over $\R$ or $\mathbb{F}_2$), which can be done in $\poly(n)2^{n-s}$ time (Lemma~\ref{rectangular}).  The total time is
$2^{n-n/O(\log^2 n\log^2\log n)}$.
\end{proof}

\begin{proof} {\bf (Sparse case.)}
If the clauses have total weight $cn$,
we can refine the analysis above, in the following way.
Let $\mu_i$ and $\nu_i$ be the maximum value of $f_i(x,z)$ and $g_i(y,z)$
respectively.  We know that $\sum_i (\mu_i+\nu_i) \le cn$.
The variable $x^*_{i,j,z}$ is needed only
when $j\le \log(\mu_i)$, and the variable $y^*_{i,j,z}$ is
needed only when $j\le\log(\nu_i)$.
For each $z,j,t_j$, the probabilistic polynomial
for the predicate \[\predicate{ \sum_i x^*_{i,j,z} y_i  + \sum_i y^*_{i,j,z} x_i = t_j }\] has degree $O(\sqrt{n_js})$, where $n_j$ is the number of $i$'s
with $\mu_i\ge 2^j$ or $\nu_i\ge 2^j$.

Observe that $n_j=O(cn/2^j)$.
It follows that the degree for the $H_{z,t'}(x,y)$ polynomial
is $O(\sum_{j=0}^\ell \sqrt{n_js})
=O(\sqrt{ns}\log c + \sum_{j>\log c} \sqrt{(cn/2^j)s})
=O(\sqrt{ns}\log c)$.
The number of variables in $H_{z,t'}(x,y)$ is
at most $O(\sum_{j=0}^\ell n_j) = O(n\log c + \sum_{j>\log c} (cn/2^j)
)=O(n\log c)$.

Thus, the bound on the total number of monomials becomes
\begin{eqnarray*}
 2^s\cdot \binom{O(n\log c)}{O(\sqrt{ns}\log c)} &\le & 
   2^{\alpha n}\cdot O\left(\frac{n\log c}{\sqrt{\alpha} n\log c}\right)^{O(\sqrt{\alpha}n\log c)}\\
   &\le & 2^{\alpha n}\cdot 2^{\sqrt{\alpha}n\log c\log(1/\alpha)}
   \ \ll\ 2^{0.1(n-s)/2}
\end{eqnarray*}
by setting $\alpha$ to be a sufficiently small constant times
$1/(\log c\log\log c)^2$.
\end{proof}

\section{Circuit Satisfiability Algorithms}
\label{appendix-Circuit-SAT}

In this section, we give new algorithms for solving the SAT problem on some rather expressive circuit classes. First, we outline some notions used in both algorithms.

\subsection{Satisfiability on a Cartesian Product}

In intermediate stages of our SAT algorithms, we will study the following generalization of SAT, where the task is to find a SAT assignment in a ``Cartesian product'' of possible assignments.

\begin{definition} Let $n$ be even, and let $A, B \subseteq \{0,1\}^{n/2}$ be arbitrary. The \emph{SAT problem on the set $A \times B$} is to determine if a given $n$-input circuit has a satisfying assignment contained in the set $A \times B$.
\end{definition}

Recall that a Boolean function $f: \{0,1\}^n \rightarrow \{0,1\}$ is a linear threshold function (LTF) if there are $a_1,\ldots,a_n, t \in \R$ such that for all $x \in \{0,1\}^n$, $f(x) = 1 \iff \sum_i a_i x_i \geq t$.

Let ${\sc Circuit \circ \LTF}[Z,S]$ be the class of circuits with a layer of $S$ LTFs at the bottom (nearest the inputs), with $Z$ additional arbitrary gates above that layer. Let ${\sc Circuit \circ \SUM \circ \AND}[Z,S]$ be the analogous circuit class, but with $S$ DNFs at the bottom layer with property that each DNF always has at most \emph{one} conjunct true for every variable assignment. (Thus we may think of the DNF as simply an \emph{integer sum}.) We first prove that the SAT problem for ${\sc Circuit\circ \LTF}$ can be reduced to the SAT problem for ${\sc Circuit\circ \SUM \circ \AND}$, utilizing a weight reduction trick that can be traced back to Matou\v sek's algorithm for computing dominances in high dimensions~\cite{Matousek91,WilliamsACCTHR14}:

\begin{lemma} \label{LTF-to-AND} Let $A, B \subseteq \{0,1\}^{n/2}$, with $|A|=|B|=N \leq 2^n$. Let $K \in [1,N]$ be an integer parameter. The SAT problem for ${\sc Circuit\circ \LTF}[Z,S]$ circuits on the set $A \times B$ can be reduced to the SAT problem for ${\sc Circuit\circ \SUM \circ \AND}[Z, S]$ where each DNF has at most $O(\log K)$ terms and each $\AND$ has fan-in at most $2\log K$, on a prescribed set $A' \times B'$ with $|A'|=|B'|=N$ and $A',B' \subseteq \{0,1\}^{2S\log K}$. The reduction has the property that if the latter SAT problem can be solved in time $T$, then the former SAT problem can be solved in time $\left(T + N^2 \cdot Z^2/K + N\cdot S\right)\cdot \poly(n)$.
\end{lemma}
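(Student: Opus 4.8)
The plan is to carry out the \emph{weight-reduction} trick cited in the statement (Matou\v sek~\cite{Matousek91}, in the form used by~\cite{WilliamsACCTHR14}): on the Cartesian product $A\times B$ each bottom LTF collapses to a comparison of two integer \emph{ranks} in $\{0,\dots,N-1\}$, and each rank can be replaced by the index of a \emph{block} of consecutive ranks, at the cost of a controlled set of ``boundary'' pairs that are then cleaned up by brute force.

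First, put each bottom LTF in normal form. For $\ell(z_1,\dots,z_n)=\brac{\sum_i a_iz_i\ge\tau}$, split the coordinates according to the two halves $x,y\in\{0,1\}^{n/2}$, so $\ell(x,y)=\brac{L_\ell(x)\ge\tau-M_\ell(y)}$ with $L_\ell(x):=\sum_{i\le n/2}a_ix_i$ and $M_\ell(y):=\sum_{i>n/2}a_iy_i$. Over $A$ the value $L_\ell(x)$ takes at most $N$ values; sorting them and breaking ties consistently assigns each $x$ a rank $\rho_\ell(x)\in\{0,\dots,N-1\}$, and from $M_\ell(y)$ and the sorted list one reads an integer $R_\ell(y)\in\{0,\dots,N\}$ with $\ell(x,y)=\brac{\rho_\ell(x)\ge R_\ell(y)}$. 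Over all $S$ LTFs this is one sort of $N$ numbers apiece, i.e.\ $O(SN\log N)$ operations on $\poly(n)$-bit numbers, which is the $N\cdot S\cdot\poly(n)$ term.

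Next, coarsen. Partition $\{0,\dots,N-1\}$ into $\Theta(K)$ contiguous blocks of size $O(N/K)$ and let $\beta_\ell(x),B_\ell(y)\in\{0,\dots,K-1\}$ be the blocks containing $\rho_\ell(x),R_\ell(y)$. Call $(x,y)$ a \emph{boundary pair} if $\beta_\ell(x)=B_\ell(y)$ for some $\ell$; for every non-boundary pair, $\ell(x,y)=\brac{\beta_\ell(x)>B_\ell(y)}$ simultaneously for all $\ell$. Encode $x$ by the string $x^\ast$ obtained by concatenating the $\log K$-bit binary codes of $\beta_1(x),\dots,\beta_S(x)$, and likewise $y^\ast$; so $A'=\{x^\ast\}$, $B'=\{y^\ast\}\subseteq\{0,1\}^{2S\log K}$, in bijection with $A,B$. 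Replace each bottom LTF $\ell$ by a DNF $D_\ell(x^\ast,y^\ast)$ computing the block comparison $\brac{\beta_\ell(x)>B_\ell(y)}$; a standard encoding of this comparison realizes $D_\ell$ with at most one conjunct true, $O(\log K)$ terms, and $\AND$s of fan-in $\le 2\log K$ (each touching only the $\log K$ bits of $\beta_\ell$ and the $\log K$ bits of $B_\ell$). Keeping the $Z$ top gates unchanged gives a ${\sc Circuit\circ\SUM\circ\AND}$ circuit $C'$ on $A'\times B'$ that agrees with $C$ on every non-boundary pair.

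Finally, decide SAT for $C$ and confront the real difficulty. For each $\ell$ and $y\in B$ only $O(N/K)$ values of $x$ satisfy $\beta_\ell(x)=B_\ell(y)$, so there are $O(SN^2/K)$ boundary pairs; list them and evaluate the original $C$ on each, reporting ``SAT'' if any one satisfies $C$ — this costs $O(SN^2/K)\cdot\poly(Z,S,n)$, which the claimed $\poly(n)$ factor absorbs into $O(N^2Z^2/K)\cdot\poly(n)$. Otherwise $C$ has no boundary satisfying assignment, so its satisfying assignments in $A\times B$ are exactly the \emph{non-boundary} satisfying assignments of $C'$; to detect these, conjoin to $C'$ a test that the pair is non-boundary, namely $\bigwedge_\ell\brac{\beta_\ell(x)\ne B_\ell(y)}$, each factor being a small DNF obtained by unioning the ``greater-than'' gadget with its argument-swapped copy, and run the assumed time-$T$ SAT algorithm on the result over $A'\times B'$. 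I expect this last step to be the main obstacle: since the top circuit is arbitrary, $C'$ may err on boundary pairs in either direction, so one must argue carefully that the brute-force pass together with the appended non-boundary test leaves room for neither a false positive nor a missed witness, and that appending that test — along with rounding $K$ to a power of two and padding $x^\ast,y^\ast$ — keeps the circuit inside ${\sc Circuit\circ\SUM\circ\AND}[Z,S]$ with the claimed fan-in, term-count, and dimension bounds. The rank computation and the block-comparison gadget are routine; the care is all in the boundary bookkeeping.
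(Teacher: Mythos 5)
Your proof follows the same Matou\v sek-style weight-reduction approach as the paper's: for each bottom LTF $\ell$, reduce the comparison to a comparison of ranks/block indices over the Cartesian product, brute-force the ``same-block'' (boundary) pairs, and on the remaining pairs replace $\ell$ by a block-index comparison gadget fed by $O(\log K)$-bit encodings prepended to $a$ and $b$. The rank and block bookkeeping, the cost accounting ($N\cdot S$ for preprocessing, $N^2Z^2/K$ for the brute-force pass), and the handoff to the assumed time-$T$ solver for the $\SUM\circ\AND$ circuit all match the paper's argument.

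The one place you diverge is the soundness worry you explicitly flag: an UNSAT $C$ whose reduction $C'$ happens to be satisfied at a \emph{boundary} pair would make the black-box SAT-to-SAT reduction report a false positive. You are right that the paper's proof does not explicitly say anything here --- its two ``cases'' are cases of where a satisfying assignment of $C$ may live, and no argument is given for an unsatisfiable $C$. Your patch (conjoin $\bigwedge_\ell\predicate{\beta_\ell(x)\neq B_\ell(y)}$) is correct, but as you suspect it does not quite preserve the stated parameters: it lands in ${\sc Circuit}\circ\SUM\circ\AND[Z+O(1),2S]$ rather than $[Z,S]$, and each ``$\neq$'' factor doubles the DNF size. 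This constant blowup is harmless for every invocation of the lemma in the paper, so your version of the lemma would serve just as well. It is also worth noting that the way the lemma is actually used, the ``time-$T$ solver'' is an explicit evaluation of the $\SYM\circ\AND$ circuit over all of $A'\times B'$ via rectangular matrix multiplication, which already costs $\Theta(N^2)\cdot\poly(n)$; since the Case-1 brute force enumerates exactly the boundary pairs, one can discard their entries in an $N^2\cdot\poly(n)$ post-pass with no change to the circuit at all, which is the cleaner resolution and keeps the $[Z,S]$ bound. Either way, the discrepancy you found is real but immaterial. Everything else --- in particular the $O(\log K)$-term ``$\ge$''-gadget with the at-most-one-true-conjunct property --- you accept at face value, exactly as the paper's brief sketch does, so there is nothing further to flag.
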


\begin{proof} For a given circuit $C$ of type ${\sc Circuit\circ \LTF}[Z,S]$, let the $j$th LTF in the bottom layer have weights $\alpha_{j,1}, \ldots, \alpha_{j,n}, t_j$. Let the assignments in $A$ be $a_1,\ldots,a_N$, and let the assignments in $B$ be $b_1,\ldots,b_N$. Denote the $k$th bit of $a_i$ and $b_i$ as $a_i[k]$ and $b_i[k]$, respectively. 

Make $N \times S$ matrices $M_A$ and $M_B$, where \[M_A[i,j] = \sum_{k=1}^{n/2} \alpha_{j,k}\cdot a_i[k]\] and \[M_B[i,j] = t_j - \sum_{k=1}^{n/2} \alpha_{j,n/2+k}\cdot b_i[k].\] The key property of these matrices is that $M_A[i,j] \geq M_B[i',j]$ if and only if the $n$-variable assignment $(a_i,b_{i'})$ makes the $j$th LTF output $1$.

For each $j =1,\ldots,S$, let $L_j$ be the list of all $2\cdot N$ entries in the $j$th column of $M_A$ and the $j$th column of $M_B$, sorted in increasing order. Partition $L_j$ into $K$ contiguous parts of $O(N/K)$ entries each, and think of each part of $L_j$ as containing a set of $O(N/K)$ assignments from $A \cup B$. (So, the partition of $L_j$ is construed as a partition of the assignments in $A \cup B$.) 
There are two possible cases for a satisfying assignment to the circuit $C$:
\begin{enumerate}
\item \emph{There is a satisfying assignment $(a_i,b_{i'}) \in A \times B$ such that for some $j = 1,\ldots,S$, $a_i$ and $b_{i'}$ are in the same part of $L_j$.} By enumerating every $a_i \in A$, every $j=1,\ldots,S$, and all $O(N/K)$ assignments $b_{i'}$ of $B$ which are in the same part of $L_j$ as $a_i$, then evaluating the circuit $C$ on the assignment $(a_i,b_{i'})$ in $Z^2\cdot \poly(n)$ time, we can determine satisfiability for this case in $O(N \cdot N/K \cdot Z^2)\cdot \poly(n)$ time. If this does not uncover a SAT assignment, we move to the second case.
\item \emph{There is a satisfying assignment $(a_i,b_{i'}) \in A \times B$ such that for every $j = 1,\ldots,S$, $a_i$ and $b_{i'}$ are different parts of $L_j$.} Then for every LTF gate $j=1,\ldots,S$ on the bottom layer of the circuit, we claim that the $j$-th LTF can be replaced by a sum of $O(\log K)$ $\AND$s on $2\log K$ new variables. In particular, for the $j$-th LTF we define one new set of $\log K$ variables which encodes the index $k=1,\ldots,K$ such that $a_i$ is in part $k$ of $L_j$, and another set of $\log K$ variables which encodes the index $k'$ such that $b_{i'}$ is in part $k'$ of $L_{j}$. Then, determining $\predicate{k \geq k'}$ is equivalent to determining whether $(a_i,b_{i'})$ satisfies the $j$-th LTF gate. Finally, note that the predicate $\predicate{k \geq k'}$ can be computed by a DNF of $O(\log K)$ conjuncts. (Take an OR over all $\ell=0,\ldots,\log K$, guessing that the $\ell$-th bit is the most significant bit in which $k$ and $k'$ differ; we can verify that guess with a conjunction on $2\log K$ variables.) On every possible input $(k,k') \in \{0,1\}^{2 \log K}$, the DNF has at most \emph{one} true conjunction. Thus we can construe the OR as simply an \emph{integer sum} of ANDs, as desired. Preparing these new assignments for this new SAT problem takes time $O(N \cdot S)\cdot \poly(n)$.\end{enumerate}
\vspace{-4ex}
\end{proof}

\subsection{Simulating LTFs with AC0 of MAJORITY}

In our SAT algorithms, we will need a way to simulate LTFs with bounded-depth circuits with MAJORITY gates. This was also used in Williams' work on solving ACC-LTF SAT~\cite{WilliamsACCTHR14}, as a black box. However, here we must pay careful attention to the details of the construction. In fact, we will actually have to modify the construction slightly in order for our circuit conversion to work out. Let us review the construction here, and emphasize the parts that need modification for this paper. Recall that $\MAJ$ denotes the majority function. 

\begin{theorem}[Follows from \cite{Maciel-Therien98}, Theorem 3.3]\label{LTF-as-AC0MAJ} Every LTF can be computed by polynomial-size $\AC^0 \circ \MAJ$ circuits. Furthermore, the circuits can be constructed in polynomial time given the weights of the LTF, and the fan-in of each $\MAJ$ gate can be made $n^{1+\eps}$, for every desired $\eps > 0$, and the circuit has depth $O(\log(1/\eps))$.
\end{theorem}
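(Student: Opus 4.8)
The plan is to construct the $\AC^0 \circ \MAJ$ circuit for an LTF $f(x) = \predicate{\sum_i a_i x_i \geq t}$ by first reducing to the case of small integer weights, then implementing iterated addition of those integers via a known constant-depth majority construction, with the key twist being to keep each $\MAJ$ gate's fan-in down to $n^{1+\eps}$ rather than the naive polynomial bound. First I would handle weight reduction: a classical fact (used also in Lemma~\ref{LTF-to-AND} and traceable to the same line of work) is that any LTF on $n$ bits has an equivalent representation with integer weights of magnitude $2^{O(n\log n)}$, so each weight has $O(n\log n)$ bits; multiplying out $a_i x_i$ is trivial since $x_i\in\{0,1\}$, so computing $\sum_i a_i x_i$ becomes an instance of iterated addition of $n$ integers each with $O(n\log n) = \poly(n)$ bits, after which a single comparison with $t$ finishes the job.

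Next I would invoke the Maciel--Th\'erien block-save / carry-save style construction for iterated addition. The standard approach chops the bit positions into blocks, computes within each block a count of how many of the $n$ summands have a $1$ in that position (this is exactly a $\MAJ$-type threshold count, computable by a $\MAJ\circ\AND$ subcircuit), and then propagates carries in a tree of bounded depth. To get depth $O(\log(1/\eps))$, one recurses: at each level the group size is taken so that the relevant counts live on $n^{\eps}$-sized chunks, giving $O(1/\eps)$ branching that telescopes to depth $O(\log(1/\eps))$; the careful bookkeeping is to verify that every $\MAJ$ gate introduced sees at most $n$ values each from a domain of size $\poly(n)$, i.e. fan-in $n\cdot\poly(n) = \poly(n)$, and then — this is the modification the paper flags — to \emph{re-encode} those polynomially-bounded integer inputs in a padded unary-ish form so that the majority vote is literally over $n^{1+\eps}$ bits. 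Since a $\MAJ$ over $m$ values from $[\poly(n)]$ is the same as a $\MAJ$ over $m\cdot\poly(n)$ bits, we need $m\cdot\poly(n) \le n^{1+\eps}$, which we arrange by first partitioning the $n$ summands into $n/n^{\eps'}$ groups (for a suitable $\eps' = \Theta(\eps)$), adding within each group by brute force into a $\poly(n)$-bit number, and only then feeding group-sums to the $\MAJ$ layer — costing one extra constant-depth level but keeping each $\MAJ$ at fan-in $n^{1+\eps}$.

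Finally I would assemble: the within-group brute-force additions are $\AC^0$ (in fact $\poly(n)$-size $\AND/\OR$ of depth $O(1)$ on $\poly(n)$ bits), the iterated-addition carry tree contributes depth $O(\log(1/\eps))$ of alternating $\AC^0$ and $\MAJ$ layers, and the final $\predicate{\cdot \geq t}$ comparison is one more $\AC^0$ gadget; the total size is $\poly(n)$ and the whole thing is uniformly constructible in $\poly(n)$ time from the $a_i$'s (including running the weight-reduction algorithm). The main obstacle I expect is exactly the fan-in control: the off-the-shelf Maciel--Th\'erien statement only promises $\poly(n)$-size $\AC^0\circ\MAJ$, and squeezing the $\MAJ$ fan-in down to $n^{1+\eps}$ while preserving constant depth per level and overall depth $O(\log(1/\eps))$ requires re-deriving their block decomposition with the block/group sizes tuned to $n^{\Theta(\eps)}$ and carefully tracking that no intermediate quantity ever needs more than $n^{1+\eps}$ bits of majority-counting — the rest (weight reduction, the comparison gadget, uniformity) is routine.
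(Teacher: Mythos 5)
Your high-level outline — weight reduction to $W=O(n\log n)$-bit integer weights, iterated addition via a Maciel--Th\'erien-style block decomposition, a final $\AC^0$ comparison, and the observation that the off-the-shelf statement must be re-derived with block sizes tuned to $n^{\Theta(\eps)}$ — matches the paper's proof in spirit. However, the specific mechanism you propose for controlling the $\MAJ$ fan-in does not work, and it is not what the paper does.

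You write that since a $\MAJ$ over $m$ values from $[\poly(n)]$ is a $\MAJ$ over $m\cdot\poly(n)$ bits, one should ensure $m\cdot\poly(n)\le n^{1+\eps}$ by ``first partitioning the $n$ summands into $n/n^{\eps'}$ groups, adding within each group by brute force into a $\poly(n)$-bit number, and only then feeding group-sums to the $\MAJ$ layer.'' This has two problems. First, ``brute force'' addition of $n^{\eps'}$ numbers each with $W=\Theta(n\log n)$ bits inside an $\AC^0$ gadget is infeasible: the input space has size $2^{W n^{\eps'}}$, which is superpolynomial for every fixed $\eps'>0$. Second, and more fundamentally, pre-grouping does not help the fan-in: after grouping, you have $n/n^{\eps'}$ group-sums each of magnitude up to $n^{\eps'}\cdot(\text{max summand})$, so the product (number of inputs)$\times$(range), and hence the unary fan-in of the $\MAJ$, is unchanged. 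Reducing the \emph{number} of summands is the wrong knob.

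The paper's modification instead reduces the \emph{range} of what each $\SYM$/$\MAJ$ gate must count, by shrinking the bit-block length in the Maciel--Th\'erien decomposition. Concretely, each $W$-bit summand $A_i$ is split into blocks of $\ell=\eps\log n$ bits, and for each block $k$ one forms the column sum $S_k=\sum_{i,j} A_{i,(k-1)\ell+j}\,2^{j-1}$. Because the blocks are short, each $S_k$ lies in $[0,\,n\cdot 2^\ell]$, and in the unary encoding (each bit replicated a power-of-two number of times) the $\SYM$ gate computing a bit of $S_k$ sees only $O(n\cdot 2^\ell)=\tilde O(n^{1+\eps})$ wires. The price of shorter blocks is that $S_k$ occupies $\ell+\log n$ bits, so it must be split into $t=1+1/\eps$ pieces, yielding $t$ aligned numbers $y_0,\ldots,y_{t-1}$ whose sum is computed afterwards by a fixed-depth $\AC^0$ circuit for adding $t\le\log n$ numbers. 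So the $1/\eps$ shows up in the number of residual addends $t$ (and hence the depth bookkeeping), not in any grouping of the $n$ summands, and there is no recursion or ``brute force within a group.'' You should replace your grouping step with this range-reduction step; the rest of your outline (weight reduction, $\SYM\to\OR\circ\AND\circ\MAJ$, final comparison in $\AC^0$, uniform constructibility) is fine.
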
 

It will be crucial for our final results that the fan-in of the $\MAJ$ gates can be made arbitrarily close to linear.

\begin{proof} We begin by revisiting the circuit construction of Maciel and Th\'erien~\cite{Maciel-Therien98}, which shows that the addition of $n$ distinct $n$-bit numbers can be performed with polynomial-size $\AC^0 \circ \MAJ$ circuits. The original construction of Maciel and Th\'erien yields $\MAJ$ gates of fan-in $\tilde{O}(n^2)$, which is too large for our purposes. We can reduce the fan-in of $\MAJ$ gates to $O(n^{1+\eps})$ by setting the parameters differently in their construction. Let us sketch their construction in its entirety, then describe how to modify it.

Recall that $\SYM$ denotes the class of symmetric functions. First, we show that addition of $n$ $n$-bit numbers can be done in $\AC^0\circ \SYM$. Suppose the $n$-bit numbers to be added are $A_1,\ldots,A_n$, where $A_i = A_{i,n}\cdots A_{i,1}$ for $A_{j,i} \in \{0,1\}$. Maciel and Th\'erien partition each $A_i$ into $m$ blocks of $\ell$ bits, where $m \cdot \ell = n$. They compute the sum $S_k$ of the $n$ $\ell$-bit numbers in each block $k=1,\ldots,m$, i.e. \[S_k = \sum_{i=1}^n \sum_{j=1}^{\ell} A_{i,(k-1)\ell + j} \cdot 2^{j-1},\] and note that the desired sum is \[z = \sum_{k=1}^m S_k\cdot 2^{(k-1)\ell}.\] Each $S_k$ can be represented in $\ell + \log n$ bits. Maciel and Th\'erien set $\ell = \log n$, so that each $S_k$ is represented by $2 \ell$ bits. They then split each $S_k$ into $\ell$-bit numbers $H_k$ and $L_k$ such that \[S_k = H_k \cdot 2^{\ell} + L_k.\] Note that the ``high'' part $H_k$ corresponds to the ``carry bits'' of $S_k$. They then note that if \[y_1 := \sum_{k=1}^m H_k \cdot 2^{k\ell},\ \ y_2 := \sum_{k=1}^m L_k \cdot 2^{(k-1)\ell},\] we have 

\begin{compactitem}
\item[(a)] $z = y_1 + y_2$, and 
\item[(b)] each bit of $y_i$ is a function of exactly one $H_k$ or $L_k$ for some $k$. In turn, each $L_k$, $H_k$ is a sum of $n \cdot \ell$ $A_{i,j}$'s where each $A_{i,j}$ is multiplied by a power of two in $[0,2^{\ell}]$. Therefore, each bit of $y_i$ can be computed by a $\SYM$ gate of fan-in at most $n \cdot \ell \cdot 2^{\ell} \leq n^2$. \end{compactitem}

We have therefore reduced the addition of $n$ $n$-bit numbers to adding the two $O(n)$-bit numbers $y_1$ and $y_2$, with a layer of $\SYM$ gates. Adding two numbers can be easily computed in $\AC^0$  (see for example \cite{Chandra-Fortune-Lipton85}), so the whole circuit is of the form $\AC^0 \circ \SYM$. 

We wish to reduce the fan-in of the $\SYM$ gates to $O(n^{1+\eps})$ for arbitrary $\eps > 0$. To reduce the fan-in further, it suffices to find a construction that lets us reduce $\ell$. Naturally, we can try to set $\ell = \eps \log n$ for arbitrarily small $\eps \in (0,1)$. Without loss of generality, let us assume $1/\eps$ is an integer. Then, each $S_k$ is represented in $\ell + \log n \leq (1+1/\eps)\ell$ bits. Let $t=1+1/\eps$. If we then split each $S_k$ into $t$ $\ell$-bit numbers $T^{t-1}_k,\ldots,T^0_k$, ranging from high-order to low-order bits, we then have \[S_k = T^{t-1}_k \cdot 2^{(t-1)\ell} + \cdots + T^1_k \cdot 2^{\ell} + T^0_k.\] Defining the $t$ numbers \[y_i := \sum_{k=1}^m T^{i}_k \cdot 2^{(k+i-1)\ell},\] the desired sum is $z = \sum_{i=0}^{t-1} y_i$. Just as before, each bit of $y_i$ is a function of exactly one $T^i_k$ for some $k$, which is a sum of $n \cdot \ell$ $A_{i,j}$'s where each $A_{i,j}$ is multiplied by an integer in $[0,2^{\ell}]$. Hence each bit of $y_i$ can be computed by a $\SYM$ gate of fan-in at most $n \cdot \ell \cdot 2^{\ell} \leq \tilde{O}(n^{1+\eps})$. So with one layer of $\SYM$ gates, we have reduced the $n$ number $n$-bit addition problem to the addition of $t$ $O(n)$-bit numbers $y_0,\ldots,y_{t-1}$. But for $t\leq \log n$, addition of $t$ $n$-bit numbers can be computed by $\AC^0$ circuits of $\poly(n)$-size and \emph{fixed} depth independent of $t$ (see e.g. \cite{Vollmer99}, p.14-15). This completes the description of our $\AC^0 \circ \SYM$ circuit.

Observe that each $\SYM$ gate can be easily represented by an $\OR \circ \AND \circ \MAJ$ circuit. In particular, the OR is over all $j \in \{0,1,\ldots,n\}$ such that the $\SYM$ gate outputs $1$ when given $j$ inputs are equal to $1$, and the $\AND \circ \MAJ$ part computes $\sum_j x_j = j$. Again, the fan-in of each $\MAJ$ here is $\tilde{O}(n^{1+\eps})$. 

We now apply the addition circuits to show how every LTF on $n$ variables can be represented by a polynomial-size $\AC^0 \circ \MAJ$ circuit. Suppose our LTF has weights $w_{1},\ldots,w_{n+1}$, computing $\sum_{j=1}^{n} w_{j}x_{j} \geq w_{n+1}$. By standard facts about LTFs, we may assume for all $j$ that $|w_j|\leq 2^{b n\log_2 n}$ for some constant $b > 0$. Set $W = b n\log_2 n$. 

Let $D$ be a $\AC^0 \circ \MAJ$ circuit for adding $n$ $W$-bit numbers as described above, where each $\MAJ$ gate has fan-in $\tilde{O}(n^{1+\eps})$. For all $j=1,\ldots,n$, connect to the $j$th $W$-bit input of $D$ a circuit which, given $x_{j}$, feeds $w_{j}$ to $D$ if the input bit $x_{i_j}=1$, and the all-zero $W$-bit string if $x_{j}=0$. Observe this extra circuitry is only wires, no gates: we simply place a wire from $x_{j}$ to all bits of the $j$th $W$-bit input where the corresponding bit of $w_{j}$ equals $1$.

This new circuit $D'$ clearly computes the linear form $\sum_{j=1}^n w_{j}x_{j}$. The linear form can then be compared to $w_{n+1}$ with an $\AC^0$ circuit, since the ``less-than-or-equal-to'' comparison of two integers can be performed in $\AC^0$. Indeed, this function can be represented as a quadratic-size DNF ($\SUM \circ \AND$), as was noticed in Lemma~\ref{LTF-to-AND}. We now have an $\AC^0 \circ \MAJ$ circuit $D''$ of size $\poly(W,t) \leq n^b$ computing the LTF, where the $\MAJ$ gates have fan-in $\tilde{O}(n^{1+\eps})$. 
\end{proof}

\subsection{Satisfiability Algorithm for ACC of LTF of LTF}

Let $\AC^0[d,m]\circ \LTF\circ \LTF[S_1,S_2,S_3]$ be the class of circuits with a layer of $S_3$ LTFs at the bottom layer (nearest the inputs), a layer of $S_2$ LTFs above the bottom layer, and a size $S_1$ $\AC^0[m]$ circuit of depth $d$ above the two LTF layers.

\begin{reminder}{Theorem~\ref{ACC-LTF-SAT}} For every integer $d > 0$, $m > 1$, and $\delta > 0$, there is an $\eps > 0$ and an algorithm for satisfiability of $\AC^0[d,m]\circ \LTF\circ \LTF[2^{n^{\eps}},2^{n^{\eps}},n^{2-\delta}]$ circuits that runs in deterministic $2^{n-n^{\eps}}$ time.
\end{reminder}

We use the following depth-reduction theorem of Beigel and Tarui (with important constructibility issues clarified by Allender and Gore~\cite{Allender-Gore94}, and recent size improvements by Chen and Papakonstantinou~\cite{ChenP16}):

\begin{theorem}[\cite{Beigel-Tarui,Allender-Gore94}]\label{BT} Every $\SYM\circ \ACC$ circuit of size $s$ can be simulated by a $\SYM \circ \AND$ circuit of $2^{(\log s)^{c'}}$ size for some constant $c'$ depending only on the depth $d$ and MOD$m$ gates of the $\ACC$ part. Moreover, the $\AND$ gates of the final circuit have only $(\log s)^{c'}$ fan-in, the final circuit can be constructed from the original in $2^{O((\log s)^{c'})}$ time, and the final symmetric function at the output can be computed in $2^{O((\log s)^{c'})}$ time.
\end{theorem}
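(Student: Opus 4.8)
The plan is to reprove the classical Beigel--Tarui depth reduction (following Yao's original route): turn the $\ACC$ part into a depth-two circuit with a symmetric top gate over $\AND$s of polylogarithmic fan-in, and then absorb the single $\SYM$ gate already sitting above it into the final symmetric read-out. Throughout, the arithmetic is arranged so that the degree/fan-in parameter multiplies by only $\polylog(s)$ at each of the constantly many levels, which is what gives the claimed $2^{(\log s)^{c'}}$ size and $(\log s)^{c'}$ fan-in. As a first, cheap step: for a $\MOD_m$ gate with $m=\prod_i q_i^{a_i}$ the Chinese Remainder Theorem gives $\MOD_m(y)=\bigwedge_i \MOD_{q_i^{a_i}}(y)$, so after increasing the depth by one we may assume every modular gate is $\MOD_{q^a}$ for a prime power $q^a\mid m$, and the collection of prime powers that occur depends only on $m$.

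The heart is a low-degree polynomial for the $\ACC$ subcircuit. Fix an error parameter $\delta$ and work over a ring $\Z_N$, where $N$ is a product of the relevant prime powers raised to a large enough fixed exponent. Build, by induction on the level from the inputs up, a \emph{probabilistic} polynomial for the value of each gate: each $\AND$ and $\OR$ gate of fan-in up to $s$ is replaced by the Razborov-style random-subset polynomial of degree $O(\log(1/\delta))$ and per-input error $\delta$ (cf.~\cite{Razborov,Smolensky87}), while each $\MOD_{q^a}$ gate is replaced \emph{exactly} by a fixed univariate polynomial of constant degree applied to the integer sum of its inputs over $\Z_{q^a}$ --- this is the Yao/Beigel--Tarui device that makes composite moduli work despite the absence of field structure. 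Composing over the constant depth $d$ multiplies the degree by $\polylog(s)$ per level, so the resulting random polynomial $P_\rho$ has degree $(\log(s/\delta))^{O(d)}$, integer coefficients of magnitude $2^{(\log(s/\delta))^{O(d)}}$, at most $2^{(\log(s/\delta))^{O(d)}}$ monomials, and agrees with the $\ACC$ output on any fixed input with probability at least $1-s\delta$.

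Now derandomize and collapse to $\SYM\circ\AND$. Choose $\delta$ so small that $s\delta<2^{-n-2}$, draw $N'=\poly(n)$ independent samples $\rho_1,\dots,\rho_{N'}$, and take their majority vote; a union bound over all $2^n$ inputs shows that some \emph{fixed} choice of the $\rho_j$ makes the vote equal the $\ACC$ output on every input. Each $P_{\rho_j}$ is an integer combination of monomials of degree $\le(\log s)^{O(d)}$, and a monomial over a fixed ring is literally an $\AND$ of that many literals; expanding each monomial into $|c|$ copies of an $\AND$ (the finitely many sign-induced offsets folding into a constant shift of the threshold), the majority vote becomes a single symmetric function applied to a sum of $2^{(\log s)^{O(d)}}$ $\AND$ gates of fan-in $(\log s)^{O(d)}$. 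The $\SYM$ gate originally on top of the $\ACC$ part is itself symmetric with few inputs, so it composes into this read-out; the resulting symmetric function has a domain of size $2^{(\log s)^{O(d)}}$ and is hence computable in $2^{(\log s)^{O(d)}}$ time. Making every step effective --- the arithmetization, the fixing of the $\rho_j$'s, and the evaluation of the read-out --- gives the construction-time bound; this effectiveness is the content of Allender--Gore~\cite{Allender-Gore94}, and the refined exponent $c'$ comes from Chen--Papakonstantinou~\cite{ChenP16}.

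The main obstacle is the composite-modulus case: $\MOD_m$ for composite $m$ has no good low-degree polynomial over any field, so Razborov--Smolensky does not apply directly, and the fix is exactly the Yao/Beigel--Tarui idea of working modulo a product of prime powers with each $\MOD_{q^a}$ represented exactly by a constant-degree polynomial over $\Z_{q^a}$, which forces the somewhat delicate choice of the modulus $N$ and its exponents. A secondary, purely quantitative obstacle is the bookkeeping required to keep every parameter at $2^{(\log s)^{O(d)}}$ and $(\log s)^{O(d)}$ through the depth-$d$ composition and through the final collapse of the several symmetric-type layers --- the original top $\SYM$ gate, the derandomizing majority vote, and the $\OR$-of-$\MOD$ structure --- into one symmetric read-out, where one must check that the relevant symmetric classes are closed under the compositions used and that no step covertly inflates the size to $2^{\polylog(s)\cdot s}$.
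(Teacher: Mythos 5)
The paper does not actually prove Theorem~\ref{BT}; it imports it as a black box from Beigel--Tarui and Allender--Gore, so your sketch is being measured against the literature rather than against an argument in the text. Its overall shape (CRT on the moduli, gate-by-gate arithmetization of the $\ACC$ part, absorption of everything into one symmetric read-out) is right, but two load-bearing steps are wrong as stated. The first is the claim that each $\MOD_{q^a}$ gate ``is replaced exactly by a fixed univariate polynomial of constant degree applied to the integer sum of its inputs over $\Z_{q^a}$.'' This holds only for $a=1$ (Fermat: $1-y^{q-1}$ over $\F_q$). For $a\ge 2$ no such polynomial exists: over $\Z_4$ every polynomial $p$ satisfies $p(2)-p(0)\equiv 2c_1\pmod 4$, hence $p(2)\equiv p(0)\pmod 2$, so no polynomial over $\Z_4$ equals the indicator of $0$. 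This is exactly the composite-modulus difficulty you flag as ``the heart,'' and the genuine Beigel--Tarui device is different: a modulus-amplifying (Toda) polynomial $F_k$ of degree $\Theta(k)$ sending $x\equiv 0,1\pmod m$ to $F_k(x)\equiv 0,1\pmod{m^k}$, applied with $k=\Theta(\log s/\log m)$ so that $m^k$ exceeds the number of summands and the top symmetric gate can decode the count from the integer sum. The cost per $\MOD$ layer is thus a degree factor of $\Theta(\log s)$, not $O(1)$, and the gate is never represented exactly --- only its contribution to a decodable sum is.

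The derandomization step also fails on two counts. First, taking $\delta < 2^{-n-2}/s$ and union-bounding over all $2^n$ inputs forces per-gate degree $\Omega(\log(1/\delta))=\Omega(n)$; for $s=\poly(n)$ this yields AND fan-in $n^{\Omega(1)}\gg (\log s)^{c'}$, so this route provably cannot give the stated fan-in bound (the real construction is deterministic: an OR of $s$ bits is detected by testing whether the sum is nonzero modulo one of the first $O(\log s)$ primes and applying Fermat's little theorem, with no randomness to fix). Second, even granting the sampling, the assertion that ``the majority vote becomes a single symmetric function applied to a sum of AND gates'' is unjustified: the top gate of a $\SYM\circ\AND$ circuit sees only the pooled total $\sum_j V_j$ of the sampled polynomial values (after expanding coefficients into copies), and the majority of the individual read-outs is not a function of that total, because a sample that errs outputs an arbitrary ring element rather than $0/1$ --- e.g.\ $(V_1,V_2,V_3)=(1,1,0)$ and $(2,0,0)$ have the same sum but opposite majorities. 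In short, $\MAJ\circ\SYM$ does not collapse to $\SYM$, which is the same trap the paper itself warns about when it refuses to apply Beigel--Tarui to the whole circuit $C''$ in Step 4 of Theorem~\ref{ACC-LTF-SAT}.
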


\begin{proofof}{Theorem~\ref{ACC-LTF-SAT}} Let $\eps > 0$ be a parameter to be set later. The plan is to start with a circuit as specified in the theorem statement, and slowly convert into a nice form that can be evaluated efficiently on many inputs. 

\textbf{1. Trade Variables for Circuit Size.} Our first step is standard for $\ACC$-SAT algorithms~\cite{WilliamsACCTHR14,WilliamsJACM14}: given an $\AC^0[d,m]\circ \LTF\circ \LTF[2^{n^{\eps}},2^{n^{\eps}},n^{2-\delta}]$ circuit $C$ with $n$ variables, create a copy of the circuit $C_{v} := C(v,\cdot)$ for all possible assignments $v \in \{0,1\}^{n^{\eps}}$ to the first $n^{\eps}$ variables of $C$, and define \[C'(x_{n^{\eps}+1},\ldots,x_n) := \bigvee_v C_v(x_{n^{\eps}+1},\ldots,x_n).\] Observe that $C'$ is satisfiable if and only if $C$ is satisfiable, $C'$ has size at most $2^{O(n^{\eps})}$, $C'$ is also an $\AC^0 \circ \LTF \circ \LTF$ circuit, and $C'$ has only $n-n^{\eps}$ variables.

\textbf{2. Replace the middle LTFs with MAJORITYs (Theorem~\ref{LTF-as-AC0MAJ}).} Note that each LTF on the second layer of $C'$ has fan-in at most $n^{2-\delta}+n$, since the number of LTFs on the first layer is $n^{2-\delta}$. Applying the low fan-in transformation of Theorem~\ref{LTF-as-AC0MAJ}, we can replace each of the LTFs on the second layer of $C'$ with $\poly(n)$-size $\AC^0 \circ \MAJ$ circuits where each $\MAJ$ has fan-in at most $n^{2-\delta/2}$. This generates at most $2^{dn^{\eps}}$ new $\MAJ$ gates in the circuit $C'$, for some constant $d > 0$, and produces a circuit of type \[\ACC^0 \circ \MAJ \circ \LTF.\]

\textbf{3. Replace those MAJORITYs with (derandomized) probabilistic polynomials over $\F_2$ (Theorem~\ref{derand}).} We replace each of these new $\MAJ$ gates with our low-randomness probabilistic polynomials for the MAJORITY function, as follows. Recall from Theorem~\ref{derand} that we can construct a probabilistic polynomial over $\F_2$ for $k$-bit MAJORITY with degree $O(\sqrt{k \log(1/\eps')})$ and error at most $\eps'$, using a distribution of $k^{O(\log(k/\eps'))}$ uniformly chosen $\F_2$-polynomials. Setting $k := n^{2-\delta/2}$ for the fan-in of the $\MAJ$ gates, and the error to be $\eps' := 1/2^{2d n^{\eps}}$, the degree becomes \[D := O\left(\sqrt{n^{2-\delta/2} \cdot 2dn^{\eps}}\right) \leq O(n^{1-\delta/4+\eps/2})\] and the sample space has size $S = n^{O(n^{\eps})}$. For $\eps \ll \delta/4$, we have $D := O(n^{1-\delta/8})$, and each polynomial in our sample space has at most $\binom{n^{2-\delta}}{n^{1-\delta/8}} \leq 2^{O(n^{1-\delta/8} \log n)}$ monomials. 
For every choice of the random seed $r$ to the probabilistic polynomial, let $C'_r$ be the circuit $C'$ with the corresponding $\F_2$ polynomial $P_r$ substituted in place of each $\MAJ$ gate. That is, each $\MAJ$ gate is substituted by an $\XOR$ of $2^{O(n^{1-\delta/8} \log n)}$ $\AND$s of fan-in at most $O(n^{1-\delta/8})$.

We now form a circuit $C''$ which takes a majority vote over all $2^{O(n^{\eps} \log n)}$ circuits $C'_r$. The new circuit $C''$ therefore has the form \[\MAJ \circ \ACC^0 \circ \XOR \circ \AND \circ \LTF,\] where the $\MAJ \circ \ACC^0$ part has size $2^{O(n^{\eps} \log n)}$, and each $\XOR \circ \AND \circ \LTF$ subcircuit has size $2^{O(n^{1-\delta/8} \log n)}$. Since our probabilistic polynomial computes MAJORITY with $1/2^{2d n^{\eps}}$ error and there are at most $2^{d n^{\eps}}$ $\MAJ$ gates in $C'$, the new circuit $C''$ is equivalent to the original circuit $C'$. 

\textbf{4. Apply Beigel--Tarui to the top of the circuit, and distribute.} It is very important to observe that we \emph{cannot} apply Beigel--Tarui (Theorem~\ref{BT}) to the \emph{entire} circuit $C''$, as its total size is $2^{\Omega(n^{1-\delta/8} \log n)}$, and the quasi-polynomial blowup of Beigel--Tarui would generate a huge circuit of size $\Omega(2^n)$, rendering our conversion intractable. 

However, the top $\MAJ \circ \ACC^0$ part is still small. Invoking the depth reduction lemma of Beigel and Tarui (Theorem~\ref{BT} above), we can replace the $\MAJ \circ \ACC^0$ part in $C''$ of size $2^{O(n^{\eps} \log n)}$ (even though it has $2^{O(n^{\eps} \log n)}$ inputs from the $\XOR$ layer!) with a $\SYM \circ \AND$ circuit of size $2^{n^{a\cdot\eps}}$ for a constant $a \geq 1$, where each $\AND$ has fan-in at most $n^{a\eps}$, and $a$ depends only on the (constant) depth $d$ and (constant) modulus $m$ of the $\ACC^0$ subcircuit. 

The resulting circuit $C_3$ now has the form \[\SYM \circ \AND \circ \XOR \circ \AND \circ \LTF.\]  Applying the distributive law to the $\AND \circ \XOR$ parts, where the $\AND$s have fan-in at most $n^{a\eps}$ and the $\XOR$s have fan-in $2^{O(n^{1-\delta/8} \log n)}$, each $\AND \circ \XOR$ parts can be converted into an $\XOR \circ \AND$ circuit of size $2^{O(n^{1-\delta/8+a\eps} \log n)}$, where the fan-in of $\AND$s is at most $n^{a\eps}$. Letting $\eps \ll \delta/(ca)$ for sufficiently large $c \geq 1$, the fan-in of the new $\XOR$s is at most $2^{O(n^{1-\eps})}$. We now have a circuit $C_4$ of the form
\[\SYM \circ \XOR \circ \AND \circ \LTF.\] Note that the fan-in of the $\SYM$ gate is at most $2^{n^{a\cdot\eps}}$, and the fan-in of the (merged) $\AND$s is $O(n^{1-\delta/8+a\eps})$.

\textbf{5. Apply modulus-amplifying polynomials to eliminate the XOR layer.} We'd like to remove the $\XOR$ layer, to further reduce the depth of the circuit. But as the gates of this layer have very high fan-in, we must be careful not to blow the circuit size up to $\Omega(2^n)$. The following construction will take advantage of the fact that we have only $\poly(n)$ total gates in the bottom $\LTF$ layer. 

We apply one step of Beigel-Tarui's transformation~\cite{Beigel-Tarui} (from $\ACC^0$ to $\SYM \circ \AND$) to the $\SYM \circ \XOR \circ \AND$ part of our circuit. In particular, we apply a modulus-amplifying polynomial $P$ (over the integers) of degree $2D' = 2n^{a\cdot \eps}$ to each of the $\XOR \circ \AND$ parts. Construing the $\XOR \circ \AND$ as a sum of products $\sum \prod$, the polynomial $P$ has the property:
\begin{compactitem}
\item If the $\sum \prod = 1 \bmod 2$, then $P(\sum \prod) = 1 \bmod 2^{D'}$.
\item If the $\sum \prod = 0 \bmod 2$, then $P(\sum \prod) = 0 \bmod 2^{D'}$.
\end{compactitem}
So, composing $P$ with each $\XOR \circ \AND$ part, each $P$ outputs either $0$ or $1$ modulo $2^{n^{a\cdot \eps}}$. The key property here is that the modulus exceeds the fan-in of the $\SYM$ gate, so the sum of all $P(\sum \prod)$ simply counts the number of $\XOR \circ \AND$s which are true; this is enough to determine the output of the $\SYM$ gate. Construing the output of each bottom $\LTF$ gate as a variable, there are at most $n^{2-\eps}$ variables. Expressing each $P(\sum\prod)$ (expanded as a sum of products) as a multilinear polynomial in these $\LTF$ variables, the total number of terms is at most
\[\binom{n^{2-\eps}}{D' \cdot n^{1-\delta/8+a\eps}} \leq 2^{O(D' \cdot n^{1-\delta/8+a\eps} \cdot \log n)} \leq 2^{O(n^{2a\cdot \eps+ 1-\delta/8} \cdot \log n)}.\] Let $\eps := \delta/(ca)$ for a sufficiently large constant $c > 1$ so that $2a\eps + 1 -\delta/8 < 1-\eps$. We can then merge the sum of all $P(\sum \prod)$'s into the $\SYM$ gate, and obtain a $\SYM \circ \AND$ circuit where the $\SYM$ has fan-in \[2^{O(n^{2a\cdot\eps + (1-\delta/8)})} \leq 2^{O(n^{1-\eps})},\] and the $\AND$ gates have fan-in $O(n^{2a\cdot\eps+(1-\delta/8)}) \leq O(n^{1-\eps})$. The result is a circuit $C_4$ of the form \[\SYM \circ \AND \circ \LTF.\] 

\textbf{6. Replace the bottom threshold gates with DNFs (Theorem~\ref{LTF-to-AND}), and distribute.} 
Note that the circuit $C_4$ has $n-n^{\eps}$ variables, so our SAT algorithm would follow if we could evaluate $C_4$ on all of its variable assignments in $2^{n-n^{\eps}}\cdot \poly(n)$ time. We are now in a position to apply Lemma~\ref{LTF-to-AND}, which lets us reduce the evaluation problem for $\SYM \circ \AND \circ \LTF$ circuits to the evaluation problem for $\SYM \circ \AND \circ \SUM \circ \AND$ circuits, with a parameter $K$ that needs setting. Recall the middle $\AND$ gates have fan-in $O(n^{1-\eps})$, and the fan-in of the $\SUM$ is $O(\log K)$. Therefore by the distributive law, we can rewrite the circuit as a $\SYM \circ \SUM \circ \AND$ circuit, where each $\SUM$ gate has $(\log K)^{O(n^{1-\eps})}$ $\AND$s below it, and at most \emph{one} AND below each $\SUM$ is true. Thus we can wire these $\AND$ gates directly into the top $\SYM$ gate without changing the output.

In more detail, let $A, B = \{0,1\}^{(n-n^{\eps})/2}$, and set $N = 2^{(n-n^{\eps})/2}$ and the integer parameter $K := 2^{b \cdot n^{1-\eps}}$ for a sufficiently large constant $b > 1$. By Lemma~\ref{LTF-to-AND}, we can reduce the SAT problem for $\SYM \circ \AND \circ \LTF$ circuits of size $2^{O(n^{1-\eps})}$ on the set $A \times B = \{0,1\}^{n-n^{\eps}}$ to the SAT problem for $\SYM \circ \SUM \circ \AND$ circuits of size \[2^{O(n^{1-\eps})}\cdot 2^{2b n^{1-\eps}}\cdot n^{2-\delta} \leq 2^{O(n^{1-\eps})}\] on a prescribed set $A' \times B'$ with $|A'|=|B'|=N$ and $A',B' \subseteq \{0,1\}^{2b n^{2-\delta}\cdot n^{1-\eps}}$. By the distributive argument from the previous paragraph, we can convert the $\SYM \circ \SUM \circ \AND$ circuit into a $\SYM \circ \AND$ circuit of size at most 
\[ 2^{O(n^{1-\eps})}\cdot 2^{O(n^{1-\eps}\log \log K)} \leq 2^{O(n^{1-\eps}\log(n))}.\] By Lemma~\ref{LTF-to-AND}, we know that if the $\SYM \circ \AND$ SAT problem is solvable in time $T$ on the set $A' \times B'$, then the SAT problem for $C_4$ on the set $A \times B$ can be solved in time $O\left(T + N^2 \cdot Z/K + N \cdot S\right)\cdot \poly(n)$. 

\textbf{7. Evaluate the depth-two circuit on many pairs of points.}
By applying fast rectangular matrix multiplication in a now-standard way \cite{WilliamsJACM14,WilliamsACCTHR14}, the resulting $\SYM \circ \AND$ circuit of $2^{\tilde{O}(n^{1-\eps})}$ size can be evaluated on all points in $A' \times B'$, in time $\poly(n) \cdot 2^{n-n^{\eps}}$, thus solving its SAT problem. Therefore, the SAT problem for $C_4$ can be solved in time \[\poly(n) \cdot 2^{n-n^{\eps}} + \frac{2^{n-n^{\eps}} \cdot 2^{O(n^{1-\eps})}}{2^{b \cdot n^{1-\eps}}} + 2^{\frac{n-n^{\eps}}{2}}\cdot 2^{O(n^{1-\eps}\log(n))}.\] Setting $b > 1$ to be sufficiently large, we obtain a SAT algorithm for $C_4$ (and hence the original circuit $C$) running in $\poly(n) \cdot 2^{n-n^{\eps}}$ time.
\end{proofof}

\subsection{Satisfiability for Three Layers of Majority + AC0}

In this section, we give our SAT algorithm for $\MAJ \circ \AC^0 \circ \LTF \circ \AC^0 \circ \LTF$  circuits with low-polynomial fan-in at the output gate and the middle $\LTF$ layer:

\begin{reminder}{Theorem~\ref{TC03-SAT}} For all $\eps > 0$ and integers $d \geq 1$, there is a $\delta > 0$ and a randomized satisfiability algorithm for $\MAJ \circ \AC^0 \circ \LTF \circ \AC^0 \circ \LTF$ circuits of depth $d$ running in $2^{n - \Omega(n^{\delta})}$ time, on circuits with the following properties:\begin{compactitem} 
\item the top $\MAJ$ gate, along with every $\LTF$ on the middle layer, has $O(n^{6/5-\eps})$ fan-in, and
\item there are $O(2^{n^{\delta}})$ many $\AND/\OR$ gates (anywhere) and $\LTF$ gates at the bottom layer. 
\end{compactitem} 
\end{reminder}

We need one more result concerning probabilistic polynomials over the integers:

\begin{theorem}[\cite{BeigelRS91,Tarui93}] \label{AC0-integers} For every $\AC^0$ circuit $C$ with $n$ inputs and size $s$, there is a distribution of $n$-variate polynomials ${\cal D}$ over $\Z$ such that every $p$ has degree $\poly(\log s)$ (depending on the depth of $C$) and for all $x \in \{0,1\}^n$, $\Pr_{p \sim {\cal D}}[C(x) = p(x)] \geq 1-1/2^{\poly(\log s)}$.
\end{theorem}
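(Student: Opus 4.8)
The plan is to assemble the distribution $\mathcal{D}$ gate by gate: construct a probabilistic polynomial over $\Z$ for each individual $\AND$ and $\OR$ gate, substitute it for the gate, and then bound everything by a union bound over gates. By De~Morgan (replacing $y_i$ by $1-y_i$, which is degree $1$) it suffices to handle $\OR$. So the goal reduces to: for a gate of fan-in $m\le s$, build a probabilistic polynomial over $\Z$ that agrees with $\OR$ on any fixed Boolean input except with probability $\epsilon$ and has degree $O(\log m\cdot\log(1/\epsilon))$. Taking $\epsilon$ to be a sufficiently small $1/2^{\polylog s}$ (e.g.\ $2^{-\log^2 s}$), each per-gate polynomial has degree $O(\log s\cdot\log(1/\epsilon))=\polylog s$; substituting the chosen polynomial at a gate on level $\ell$ (whose inputs are level-$(\ell-1)$ polynomials) multiplies the degree by at most this factor, so across the $d=O(1)$ layers the composed polynomial has degree $(\polylog s)^d=\polylog s$, as claimed. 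The union bound over the $\le s$ gates then bounds the overall error by $s\epsilon=1/2^{\polylog s}$.

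The heart of the matter is the probabilistic $\OR$-polynomial of Beigel, Reingold, and Spielman (and, independently, Tarui). For one ``trial'' draw independent random sets $S_0,\dots,S_{\lceil\log m\rceil}\subseteq[m]$, placing each coordinate in $S_j$ with probability $2^{-j}$; form the linear forms $\ell_j(y):=\sum_{i\in S_j}y_i$ and set $p(y):=1-\prod_{j}(1-\ell_j(y))$, which has degree $O(\log m)$. If $|y|=0$ then every $\ell_j(y)=0$, so $p(y)=0$ always. If $|y|=w\ge 1$, pick $j^\star$ with $2^{j^\star}\le w<2^{j^\star+1}$; a direct calculation gives $\Pr\!\big[\,|S_{j^\star}\cap\mathrm{supp}(y)|=1\,\big]=wq(1-q)^{w-1}\ge c$ for $q=2^{-j^\star}$ and an absolute constant $c>0$, and in that event $\ell_{j^\star}(y)=1$ kills the product and forces $p(y)=1$. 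Hence one trial errs on any fixed input with probability at most $1-c$. To reach error $\epsilon$, take $r=O(\log(1/\epsilon))$ independent trials $p_1,\dots,p_r$ and output $1-\prod_{\ell=1}^{r}(1-p_\ell(y))$: if $\OR(y)=0$ all $p_\ell=0$ and the output is $0$, while if some trial succeeds (probability $\ge 1-(1-c)^r\ge 1-\epsilon$) the corresponding factor vanishes and the output is $1$. The total degree is $O(\log m\cdot\log(1/\epsilon))$, and $\AND$ follows by De~Morgan.

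The step requiring the most care is precisely this analysis over $\Z$: one cannot invoke the characteristic-$2$ trick (a uniformly random subset sum being $1$ with probability $\tfrac12$ when the input is nonzero), so one must exploit the geometric family $S_0,\dots,S_{\lceil\log m\rceil}$ and argue that whatever the weight $w\ge1$, the level $j^\star$ catches \emph{exactly one} support coordinate with constant probability. A related point to handle cleanly is that the intermediate polynomials are \emph{not} globally $\{0,1\}$-valued --- on the bad event a linear form can exceed $1$ and the nested products can take arbitrary integer values --- so the composition must be phrased input-conditionally: for a fixed $x$ and each gate $g$, let $\mathrm{bad}_g$ be the event that the polynomial installed at $g$, evaluated on the \emph{true} Boolean values of $g$'s incoming wires, outputs something other than the value of gate $g$; then $\Pr[\mathrm{bad}_g]\le\epsilon$, and on the complement of $\bigcup_g\mathrm{bad}_g$ an induction from the inputs shows every gate polynomial outputs the correct Boolean value, hence the full polynomial equals $C(x)$. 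Everything else --- the constant lower bound on the good event, the degree bookkeeping $(\polylog s)^d=\polylog s$, and the observation that all coefficients are integers --- is routine.
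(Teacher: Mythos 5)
The paper states Theorem~\ref{AC0-integers} as a known result, citing \cite{BeigelRS91,Tarui93} without giving a proof, and your argument is a correct reconstruction of exactly the standard construction from those references: the geometric-scale random subsets giving a one-sided constant-error $\OR$-polynomial over $\Z$, product amplification to error $\epsilon$, gate-by-gate substitution with multiplicative degree bookkeeping, and a union bound with the bad events defined relative to the true Boolean wire values. No gaps; this matches the intended proof.
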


\begin{proofof}{Theorem~\ref{TC03-SAT}} The SAT algorithm is somewhat similar in structure to Theorem~\ref{ACC-LTF-SAT}, but with a few important changes. Most notably, we work with probabilistic polynomials over $\Z$ instead of $\F_2$. 

Start with a circuit $C$ of the required form. Let $s$ be the number of $\AND$/$\OR$ gates in $C$ plus the number of $\LTF$ gates on the bottom layer. Let $f \leq n^{6/5-\eps}$ be the maximum fan-in of the top $\MAJ$ gate and the $\LTF$s on the middle layer, and recall that we're planning to consider $C$ with size at most $2^{n^{\delta}}$ where $\delta > 0$ is a sufficiently small constant (depending on $\eps > 0$ and the circuit depth) in the following. Our SAT algorithm runs as follows:

\begin{enumerate}
\item By Theorem~\ref{LTF-as-AC0MAJ}, every LTF of fan-in $f$ can be replaced by an $\AC^0 \circ \MAJ$ of fan-in $f^{1+o(1)}$ and $\poly(f)$ size. Hence we can reduce $C$ to a circuit of similar size, but of the form \[\MAJ \circ \AC^0 \circ \MAJ \circ \AC^0 \circ \MAJ.\] The fan-ins of the majority gates in the middle and bottom layer can be made at most $n^{6/5 - \eps'}$, for any $\eps' > 0$ which is smaller than $\eps$. To be concrete, let us set $\eps':=\eps/2$.

\item Replace the ``middle'' majority gates of fan-in $n^{6/5-\eps/2}$ with probabilistic polynomials (over $\Z$) of degree $n^{3/5-\eps/4}\poly(\log s)$ and error $1/2^{\poly(\log s)}$~\cite{JoshRyan} (Theorem~\ref{derand} in this paper). Replace all the $\AC^0$ subcircuits of size $s$ by probabilistic polynomials (over $\Z$) of degree $\poly(\log s)$ and error $1/2^{\poly(\log s)}$, via Lemma~\ref{AC0-integers}. Note that the latter $\poly(\log s)$ factor depends on the depth of the circuit.

\item Replace the majority gate at the output (of fan-in $f\leq n^{6/5-\eps}$) with the \emph{probabilistic} PTF of Corollary~\ref{cor:poly}, setting the threshold parameter $s'$ (which is called $s$ in the statement of the corollary) to be $2^{2n^{\delta}}$ and setting the error (called $\eps$ in the statement of the corollary) to be $1/f$. The resulting polynomial has degree $n^{2/5-\eps/3}\cdot \poly(n^{\delta})$.

Applying the distributive law to all the polynomials from steps 2 and 3, the new circuit $C'$ can be viewed as an \emph{integer sum} of at most $T$  $\AND \circ \LTF$ circuits of at most $T$ size, where \[T = 2^{n^{3/5-\eps/4}\cdot n^{2/5-\eps/3}\cdot \poly(\log s, n^{\delta})} = 2^{n^{1-7\eps/12}\cdot \poly(\log s, n^{\delta})}\] and all $\AND$ gates have fan-in at most $n^{1-7\eps/12}\cdot \poly(\log s, n^{\delta})$ (because the resulting polynomial has at most this degree). 

Now is a good time to mention our choice of $\delta$, as it will considerably clean up the exponents in what follows. We will choose $\delta > 0$ to be sufficiently small so that the $\poly(\log s,n^{\delta})$ factor in the exponent of $T$ is less than $n^{\eps/12}$. That is, we take $\delta := \eps/c$ and the size parameter $s < 2^{n^{\delta}}=2^{n^{\eps/c}}$, for a sufficiently large constant $c \geq 12$. (Note that $c$ depends on the depth of the circuit, since the degree of the $\poly\log$ factor depends on the depth.) Thus we have the size bound \[T = 2^{n^{1-7\eps/12}\cdot \poly(\log s, n^{\delta})}\leq O(2^{n^{1-7\eps/12}\cdot n^{\eps/12}})\leq O(2^{n^{1-\eps/2}}),\] and all $\AND$ gates have fan-in at most $n^{1-\eps/2}$.

\item For all assignments $a$ to the first $n^{\delta}$ variables of $C'$, plug $a$ into $C'$, creating a copy $C'_a$. Let $C''$ be the integer sum of all $2^{n^{\delta}}$ circuits $C'_a$. By the properties of the polynomial constructed in Theorem~\ref{thm:poly} and the chosen parameter $s' = 2^{2n^{\delta}}$, with probability at least $2/3$ there is a (computable) threshold value $v = 3s/2$ such that
\begin{itemize}
\item $C''(x) > v$ when at least one $C'_a(x)$ outputs $1$, and 
\item $C''(x) < v$ when all $C'_a(x)$ output $0$. 
\end{itemize}
The circuit $C''$ is a Sum-of-$\AND \circ \LTF$ circuit; note that $C''$ has $n-n^{\delta}$ variables.

\item We now want to evaluate $C''$ on all of its $2^{n-n^{\delta}}$ possible variable assignments. Applying Lemma~\ref{LTF-to-AND} for an integer parameter $K \in [2^n]$ (to be determined), $N=2^{(n-n^{\delta})/2}$, and $Z, S = 2^{n^{1-\eps/2}}$, we can convert this evaluation problem for $C''$ into a corresponding evaluation problem for a Sum-of-$\AND \circ \SUM \circ \AND$ circuit $C'''$, on an appropriate combinatorial rectangle $A' \times B'$ of $2^{n-n^{\delta}}$ variable assignments in total. The relative size of the circuit is unchanged, as each $\SUM \circ \AND$ has size $O(\log^2 K) \leq O(n^2)$. The time for conversion of $C''$ into $C'''$ is \[\left(\frac{N^2 Z^2}{K} + N \cdot S\right)\cdot \poly(n) \leq  \frac{2^{n-n^{\delta}}\cdot 2^{2n^{1-\eps/2}}\cdot \poly(n)}{K}.\] Setting $K := 2^{2 n^{1-\eps/2}}$ makes this time bound $2^{n-\Omega(n^{\delta})}$.

Recall that in the Sum-of-$\AND \circ \SUM \circ \AND$ circuit $C'''$, the fan-in of the middle $\AND$s is at most $n^{1-\eps/2}$, and each $\SUM$ has $O(n)$ fan-in. We can therefore apply the distributive law to each $\AND \circ \SUM$ part, and obtain a $\SUM \circ \AND$ of size at most $n^{O(n^{1-\eps/2})}$. Merging the $\SUM$s into the $\SYM$ gate, we obtain a $\SYM \circ \AND$ circuit of size at most $n^{O(n^{1-\eps/2})}$. 

\item Finally, applying rectangular matrix multiplication (Lemma~\ref{rectangular}) we can evaluate the Sum-of-$\AND$ $C'''$ of $n^{O(n^{1-\eps/2})}$ size on the combinatorial rectangle $A' \times B'$ in $2^{n - \Omega(n^{\delta})}$ time, by preparing matrices of dimensions $2^{n/2 - \Omega(n^{\delta})} \times n^{O(n^{1-\eps/2})}$ (for $A'$) and $n^{O(n^{1-\eps/2})} \times 2^{n/2 - \Omega(n^{\delta})}$ (for $B'$), then multiplying them. Note that preparing these matrices takes time no more than $2^{n/2 + O(n^{1-\eps/2}\log n)}$, which is negligible for us.

After multiplying the matrices, we obtain a value for $C''(x)$ for each assignment $x$, which is correct with probability at least $2/3$. By repeating steps 2-5 for $100 n$ times, we obtain correct values on all $2^{n-n^{\delta}}$ points with high probability.
\end{enumerate}

This completes the proof.
\end{proofof}

\section{Conclusion} 

Our work has led to interesting algorithmic improvements for several core problems. Here are two open problems that we wish to highlight.

First, it would be interesting to understand what are the power and limits of probabilistic polynomial threshold functions representing Boolean functions. How easy/difficult is it to prove degree lower bounds for such representations? In this paper, we have demonstrated how probabilistic PTFs can be significantly better than probabilistic polynomials or deterministic PTFs alone, by combining the strengths of the two representation methods. Informally, a probabilistic polynomial threshold function can be seen as an ${\sf Approximate\text{-}MAJ} \circ \LTF \circ \AND$ circuit or as an ${\sf Approximate\text{-}MAJ} \circ \LTF \circ \XOR$ circuit, so we are effectively asking about lower bounds regarding such circuit classes.

Second, can our SAT algorithm for $\MAJ \circ \AC^0 \circ \LTF \circ \AC^0 \circ \LTF$ be derandomized? If so, the derandomization should lead to new circuit lower bounds. Perhaps the ideas in Tamaki's recent work~\cite{Tamaki16} will be helpful here.

\section*{Acknowledgments} 

The authors thank the FOCS referees for their helpful comments.

\bibliographystyle{alpha}
\bibliography{papers}

\end{document}